\patchcmd{\section}{\scshape}{\bfseries\scshape}{}{}
\renewcommand{\@secnumfont}{\bfseries}
\title{Correlation detection in trees for planted graph alignment}
\author{Luca Ganassali$^\mathrm{\lowercase{a} }$, \, Laurent Massouli\'e$^\mathrm{\lowercase{a,b} }$, \, Marc Lelarge$^\mathrm{\lowercase{a}}$}
\address{$^\mathrm{\lowercase{a}}$Inria, DI/ENS, PSL Research University, Paris, France, \, $^\mathrm{\lowercase{b}}$MSR-Inria Joint Centre}
\date{}                  
\begin{document}
	
	\begin{abstract} 
		Motivated by alignment of correlated sparse random graphs, we introduce a hypothesis testing problem of deciding whether or not two random trees are correlated. We obtain sufficient conditions under which this testing is impossible or feasible.
		
		We  propose \texttt{MPAlign}, a message-passing algorithm for graph alignment inspired by the tree correlation detection problem. We prove \texttt{MPAlign} to succeed in polynomial time at partial alignment whenever tree detection is feasible. As a result our analysis of tree detection reveals new ranges of parameters for which partial alignment of sparse random graphs is feasible in polynomial time.
		
		We then conjecture that graph alignment is not feasible in polynomial time when the associated tree detection problem is impossible. If true, this conjecture together with our sufficient conditions on tree detection impossibility would imply the existence of a hard phase for graph alignment, i.e. a parameter range where alignment cannot be done in polynomial time even though it is known to be feasible in non-polynomial time.
		
		\emph{A short version of this work has been presented at the ITCS'22 conference \cite{DBLP:conf/innovations/GanassaliML22}.}
		
	\end{abstract}

\maketitle

\section*{Introduction}

\textbf{Graph alignment} 
Given two graphs $G=(V,E)$ and $H=(V',E')$ with $\card{V}=\card{V'}$, the problem of \emph{graph alignment} consists of identifying a bijective mapping, or \emph{alignment} $\pi: V \to V'$ that minimizes 
\begin{equation*}
    \sum_{u,v \in V} \left( \one_{\set{u,v}\in E} - \one_{\{\pi(u),\pi(v)\} \in E'} \right)^2,
\end{equation*} that is the number of disagreements between adjacencies in the two graphs under the alignment $\pi$. 

This problem reduces to the graph isomorphism problem in the noiseless setting where the two graphs can be matched perfectly, i.e. are isomorphic. The paradigm of graph alignment has found numerous applications across a variety of diverse fields, such as network privacy \cite{Narayanan08}, computational biology \cite{Singh08}, computer vision \cite{CFVS04}, and natural language processing.

Given the adjacency matrices $A$ and $B$ of the two graphs, the graph matching problem can be viewed as an instance of the quadratic assignment problem (QAP) \cite{Pardalos94}: 
\begin{equation}\label{eq:QAP}
    \argmax_\Pi \langle A, \Pi B \Pi^T\rangle
\end{equation}
where $\Pi$ ranges over all $n\times n$ permutation matrices, and $\langle \cdot, \cdot \rangle$ denotes the matrix inner product. QAP is known to be NP-hard in general, as well as some of its approximations \cite{Pardalos94,Makarychev14}. These hardness results are applicable in the worst case, where the observed graphs are designed by an adversary.  In many applications, the graphs can be modeled by random graphs; accordingly our focus will be the \emph{planted} version of the problem, which concerns random graph instances, and has a different objective. \\

\textbf{Correlated \ER model} 
A recent thread of research \cite{Cullina2017,Cullina18,Ding18,Dai19,fan2019ERC,Ganassali20a,ganassali2021impossibility} has focused on the study of graph alignment when the two considered graphs are drawn from a generative model under which they are both Erd\H{o}s-Rényi random graphs \cite{Erdos59}. 
Specifically, for $(\lambda,s)\in \dR_+\times [0,1]$, the correlated Erd\H{o}s-Rényi random graph model, denoted $\G(n,q,s)$ with $q=\lambda/n$, consists of two random graphs $G,G'$ both with node set $[n]$ generated as follows. Consider an i.i.d. collection $\{ ( A_{u,v},A'_{u,v}) \}_{u<v \in [n]}$ of pairs of correlated Bernoulli random variables with distribution 
\begin{equation}\label{eq:CER_model}
  (A_{u,v},A'_{u,v}) =
    \begin{cases}
      (1,1) & \text{with probability $\lambda s/n$}\\
      (1,0) & \text{with probability $\lambda (1-s)/n$}\\
      (0,1) & \text{with probability $\lambda (1-s)/n$}\\
      (0,0) & \text{with probability $1-\lambda (2-s)/n$}.
    \end{cases}       
\end{equation} 

An important graph mentioned in the sequel is the \emph{aligned intersection graph}, made of nodes from $[n]$ and edges $\set{u,v}$ such that $A_{u,v}=A'_{u,v}=1$. We next perform a \emph{relabeling step}: consider then a permutation $\pi^*$ drawn independently of $A,A'$ and uniformly at random from the symmetric group $\cS_n$. The two graphs $(G,H)$ are then defined by their adjacency matrices $A$ and $B$ such that for all $i<j \in [n]$:
\begin{equation*}
    A_{u,v}=A_{v,u},\; B_{u,v}=B_{v,u}=A'_{\pi^*(u)\pi^*(v)}.
\end{equation*}
In this setting, the marginal distributions of $G$ and $H$ are identical, namely that of the \ER model $\G(n,q)$ with $q=\lambda/n$. The above model is entirely described by three parameters: the number of nodes $n$; the marginal mean degree $\lambda$; and the correlation $s$.

\begin{remark}[A colored view]\label{rem:colored_view}
    A visual representation of the pair $(G,G')$ -- before the relabeling step -- is as follows. The two aligned graphs are superimposed according to the node set $[n]$, and each edge $\set{u,v}$ is colored in blue (resp. in red) if it appears in $G$ (resp. in $G'$). The two-colored edges appearing in both graphs are drawn thick and purple -- see Fig. \ref{fig:unplanted_pb}.A hereafter. Note what with the previous definition, the intersection graph is exactly that made of the two-colored edges.
\end{remark}

\begin{figure}[h]
     \centering
     \begin{subfigure}[b]{\textwidth}
         \centering
         \includegraphics[scale=0.8]{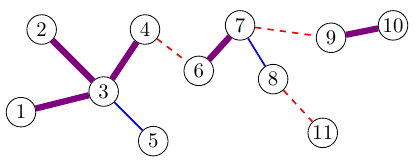}
         \caption{Graphs $G,G'$ in colored view -- see Remark \ref{rem:colored_view}}
         \label{fig:G_union}
     \end{subfigure}
     \hfill
     \begin{subfigure}[b]{\textwidth}
         \vspace{0.3cm}
         \centering
         \includegraphics[scale=0.8]{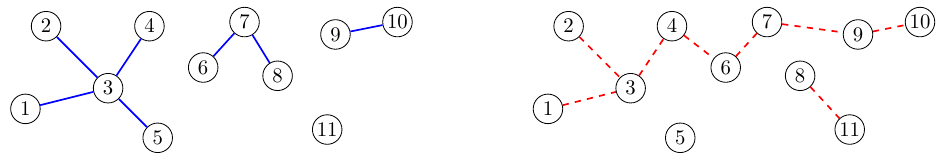}
         \caption{Graphs $G, G'$ (in separate views)}
         \label{fig:GGtilde}
     \end{subfigure}
     \hfill
     \begin{subfigure}[b]{\textwidth}
         \centering
         \vspace{0.1cm}
         \includegraphics[scale=0.8]{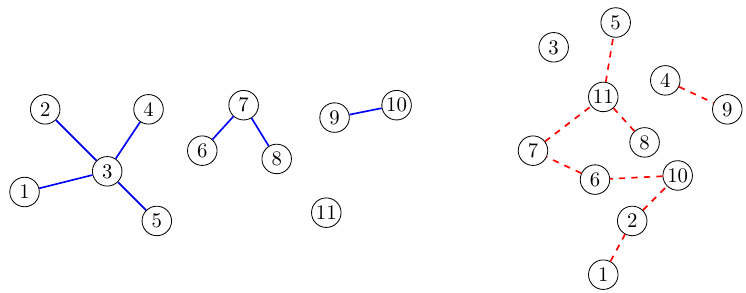}
         \caption{Graphs $G,H$ (in separate views)}
         \label{fig:GH}
     \end{subfigure}
     
    \caption{A sample from model $\G(n,q=\lambda/n,s)$ with $n=11$, $\lambda=1.9$, $s=0.7$.}
    \label{fig:unplanted_pb}
\end{figure}

\textbf{Planted graph alignment} 
Recall that we focus on planted graph alignment in the previously described model; the question now consists in finding an estimator $\hat{\pi}$ of the planted solution $\pi^*$, upon observing $G$ and $H$. 
At first sight, when $n$ gets large, an optimal requirement for $\hat{\pi}$ would be to 
agree with $\pi^*$ on every node, that is achieving \emph{exact recovery}, or on a fraction $1-o(1)$ of them -- which is known as \emph{almost exact recovery}. However, in this so-called \emph{sparse regime}  where the graphs have constant mean degree $\lambda$, it is established \cite{Cullina2017,Cullina18} that the presence of $\Omega(n)$ isolated vertices in the underlying aligned intersection graph of $G$ and $G'$ makes these exact and almost exact recovery out of the reach of any estimator. 

As a result, we will consider an alternative objective, instead of QAP: recovering the ground truth $\pi^*$, as stated before. Note that the solution to QAP \eqref{eq:QAP} coincides with maximizing the posterior distribution of $\pi^*$ given $G,H$ -- hence taking the \emph{maximum a posteriori estimator} of $\pi^*$. Nonetheless, the computation of this estimator is still NP-hard, and hence shall be put aside in the rest of the study.

In our setting, estimators $\hat{\pi}$ may only consist in partial matchings, hence not necessarily permutations in $\cS_n$. Let us now describe our measure of performance for such estimators. Consider any estimator $\hat{\pi}: \cC \to [n]$, where $\cC$ is a subset\footnote{hence set $\cC$ is considered to be part of the output of any method yielding an estimator.} of the full node set $[n]$. The performance of $\hat{\pi}$ is assessed through $\ov(\pi^*,\hat{\pi})$, its {\em overlap} with the unknown permutation $\pi^*$, defined as
\begin{equation}\label{eq:def_overlap}
    \ov(\pi^*,\hat{\pi}):=\frac{1}{n}\sum_{u\in \cC}\one_{\hat{\pi}(u)=\pi^*(u)} \, ,
\end{equation} as well as through its \emph{error fraction} with $\pi^*$, defined as 
\begin{equation}\label{eq:def_error}
    \err(\pi^*,\hat{\pi}):=\frac{1}{n}\sum_{u\in \cC}\one_{\hat{\pi}(u) \neq \pi^*(u)} = \frac{\card{\cC}}{n} - \ov(\pi^*,\hat{\pi}) \, .
\end{equation}

Note that when the QAP formulation \eqref{eq:QAP} follows from a global maximisation of the posterior distribution under rigid constraints ($\pi$ has to be a permutation), the overlap appears as a relaxed objective, relaxing the global constraints and focusing on \emph{marginal} probabilities of a node $u$ in $G$ being matched with $u'$ in $H$. 

A sequence of injective estimators $\{\hat{\pi}_n\}_n$ -- omitting the dependence in $n$ -- is said to achieve
\begin{itemize}
    \item \emph{Exact recovery} if $\; \dP(\hat\pi=\pi^*) \underset{n \to \infty}{\longrightarrow} 1$,
    \item \emph{Almost exact recovery} if $\; \dP(\ov(\pi^*,\hat\pi)= 1-o(1)) \underset{n \to \infty}{\longrightarrow} 1$,
    \item \emph{Partial recovery} if there exists some $\eta>0$ such that $\; \dP(\ov(\pi^*,\hat\pi) > \eta) \underset{n \to \infty}{\longrightarrow} 1$,
    \item \emph{One-sided partial recovery} if it achieves partial recovery and $ \dP(\err(\pi^*,\hat\pi) = o(1)) \underset{n \to \infty}{\longrightarrow} 1$.
\end{itemize}
The probability $\dP$ in the above definitions encapsulates all randomness of the graph model; in our case this shall be $\G(n,q=\lambda/n,s)$.

\begin{remark}\label{remark:one_sided_partial}
One-sided partial recovery is by definition at least as hard as partial recovery. From an application standpoint it is more appealing than partial recovery: indeed, it may be of little use to know one has a permutation with 30\% of correctly matched nodes if one does not have a clue about which pairs are correctly matched. Our proposed algorithm \texttt{MPAlign} will achieve one-sided partial recovery under suitable conditions.

Note that in order to achieve partial recovery, the set $\cC$ associated with estimator $\hat{\pi}$ has to be of cardinality $\card{\cC} \geq \eta n$, see \eqref{eq:def_overlap}. In addition, one-sided partial recovery requires $\cC$ not to be too large, see \eqref{eq:def_error}.
\end{remark} 

\textbf{Phase diagram} 
One of the main questions consists in determining the \emph{phase diagram} of the model $\G(n,q=\lambda/n,s)$ for partial recovery, for which we here give a definition. We are interested in determining the range of parameters $(\lambda,s) \in \dR_+ \times [0,1]$ for which, in the large $n$ limit:
\begin{itemize}
\item Any sequence of estimators fails to achieve partial recovery for any $\eta>0$. We refer to the corresponding range as the \emph{impossible phase};
\item There is a sequence of estimators $\hat\pi$ achieving partial recovery (not necessarily one-sided) with some $\eta>0$, which we refer to as the \emph{IT-feasible phase};
\item There is a sequence of estimators $\hat\pi$ \emph{that can be computed in polynomial-time} achieving  partial recovery with some $\eta>0$ (and sometimes even more, achieving also one-sided partial recovery): the \emph{easy phase}.
\end{itemize}

An interesting perspective on this problem is provided by research on community detection, or graph clustering, for random graphs drawn according to the stochastic block model. In that setup, above the so-called Kesten-Stigum threshold, polynomial-time algorithms for clustering are known \cite{BLM18,SpectralRedemption13, Mossel2016}, and the consensus among researchers in the field is that no polynomial-time algorithms exist below that threshold. Yet, there is a range of parameters with non-empty interior below the Kesten-Stigum threshold for which exponential-time algorithms are known to succeed at clustering \cite{Banks2016InformationtheoreticTF}. In other words, for graph clustering, it is believed that there is a non-empty \emph{hard phase}, consisting of the set difference between the IT-feasible phase and the easy (polynomial-time feasible) phase.

The picture available to date for partial graph alignment is as follows. Recent work \cite{ganassali2021impossibility} shows that the impossible phase includes the range of parameters $\{(\lambda,s): \lambda s\leq 1\}$, and Wu et al. \cite{Wu2021SettlingTS} have established that the IT-feasible phase includes the range of parameters $\{(\lambda,s): \lambda s> 4\}$ (condition $\lambda s>C$ for some large $C$ had previously been established in \cite{Hall20}). For the easy phase, Ganassali and Massoulié \cite{Ganassali20a} have established that it includes the range of parameters $\{(\lambda,s):\lambda\in[1,\lambda_0],s\in[s(\lambda),1]\}$ for some parameter $\lambda_0>1$ and some function $s(\lambda):(1,\lambda_0]\to[0,1]$. The algorithm proposed in \cite{Ganassali20a} based on tree matching weights achieves in this regime one-sided partial recovery. Figure \ref{fig:phase_diagram} depicts a phase diagram describing these prior results together with the new results in this paper.

\begin{figure}[h]
    \centering 
    \hspace{-1.2cm}
    \includegraphics[scale=0.65]{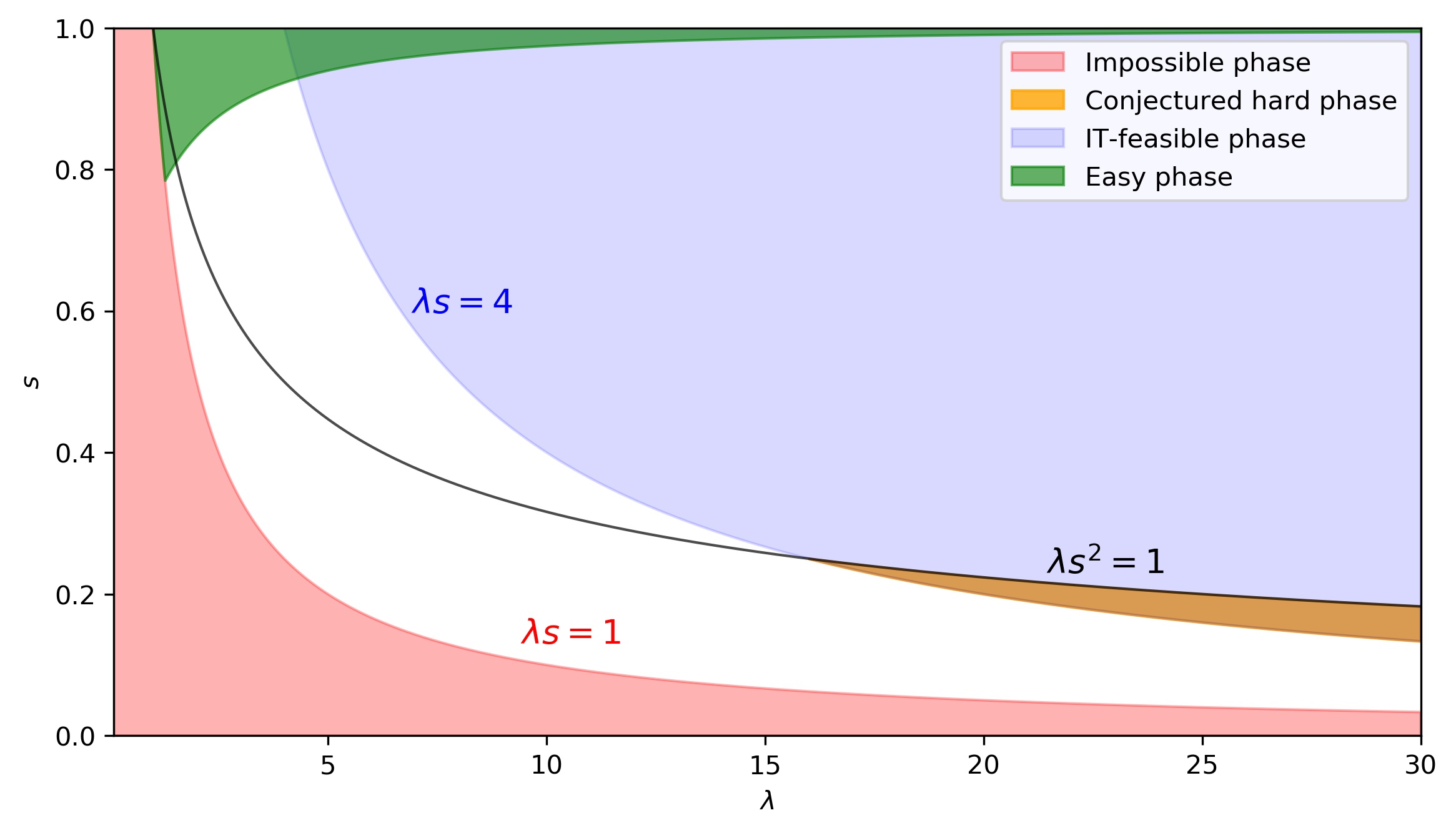}
    \caption{Diagram of the $(\lambda,s)$ regions where partial recovery is known to be impossible (\cite{ganassali2021impossibility}), IT-feasible (\cite{Wu2021SettlingTS}), or easy (\cite{Ganassali20a} and this paper). In the orange region, though partial graph alignment is IT-feasible, one-sided detectability is impossible in the tree correlation detection problem, and partial graph alignment is conjectured to be hard (this paper).}
    \label{fig:phase_diagram}
\end{figure}

\section*{Problem description and overview of contributions} 

This partial picture leaves open the question of whether, similarly to the case of graph clustering, graph alignment features a hard phase or not. The contribution of the present work can be summarized in three points:
\begin{itemize}
    \item[$(1)$] We investigate a fundamental statistical problem, which to the best of our knowledge had not been previously studied: hypothesis testing for correlation detection in trees. We study the regimes in which the optimal test on trees succeeds or fails in the setting when the trees are correlated Galton-Watson trees (see Sections \ref{section:notations}, \ref{section:KL}, \ref{section:autos_GW}, and \ref{section:hard_phase});
    
    \item[$(2)$] For this detection problem on trees, the computation of the likelihood ratio can be made recursively on the depth, which yields an optimal message-passing algorithm for this task running in polynomial-time in the number of nodes (see Section \ref{section:LR});
    
    \item[$(3)$] We finally remark that the previous detection problem on trees arises naturally from a local point of view in the related problem of one-sided partial recovery for graph alignment. In light of the previous analysis we then draw conclusions for our initial problem on graphs and doing so we precise the phase diagram shown in Figure \ref{fig:phase_diagram}, extending the regime for which one-sided partial alignment is provably feasible in polynomial time, and exhibiting the presence of a conjectured hard phase (see Theorem \ref{theorem:main_result_GRAPHS} and Section \ref{section:graph_matching}).
\end{itemize} 

Our approach to point $(3)$ follows the way originally paved in \cite{Ganassali20a}: it essentially relies on an algorithm which lets $\hat\pi(u)=u'$ for $u$ such that the local structure of graph $G$ in the neighborhood of node $u$ is 'close' to the local structure of graph $G'$ in the neighborhood of node $u'$. As exploited in \cite{Ganassali20a}, the neighborhoods to distance $d$ of two nodes $u,u'$ in $G$ and $G'$, provided that $u'=\pi^*(u)$, are asymptotically distributed as correlated Galton-Watson branching trees (distribution denoted $\dPls_{d}$). On the other hand, for pairs of nodes $(u,u')$ taken at random in $[n]$, the joint neighborhoods of nodes $u$ and $u$ in $G$ and $G'$ respectively, to depth $d$, are asymptotically distributed as a pair of independent Galton-Watson branching trees (distribution denoted $\dPl_{d}$).

Thus a fundamental step in our approach is to determine the efficiency of tests for deciding whether a pair of branching trees is drawn from either a product distribution, or a correlated distribution. \cite{Ganassali20a} relied on tests based on a so-called \emph{tree matching weight} to measure the similarity between two trees. In the present work we are instead interested in studying the existence of \emph{one-sided tests}, which are tests asymptotically guarantying a vanishing type I error and a non vanishing power. According to the Neyman-Pearson Lemma, optimal one-sided tests are based on the likelihood ratio $L_d$ of the distributions under the distinct hypotheses $\dPls_{d}$ and $\dPl_{d}$ (trees correlated or not)\footnote{This guarantees that whenever the test based on tree matching weight in \cite{Ganassali20a} succeeds, the optimal test studied in this paper also succeeds. On this point, Theorem \ref{theorem:suff_cond_KL} (see Section \ref{section:KL}) extends the sufficient conditions established in previous work \cite{Ganassali20a} for partial alignment (for small $\lambda$ and $s$ close to $1$).}. A general mathematical formalization of point $(1)$ here above is the following 
\begin{theorem}[Correlation detection in trees]\label{theorem:main_result_TREES}
Assume that\footnote{This cosmetic assumption guarantees that we place ourselves outside of a subset of the impossibility phase previously identified in \cite{ganassali2021impossibility}, see Figure \ref{fig:phase_diagram}. Assuming $\lambda s>1$ is in fact unnecessary, since it is proved that condition $(iv)$ is never satisfied when $\lambda s \leq 1$, see equation \eqref{eq:fixed_point_l} in Section \ref{subsection:martingale_prop}.} $\lambda s >1$. 
The following propositions are equivalent:
\begin{itemize}
    \item[$(i)$] There exists a one-sided test for deciding $\dPl_{d}$ versus $\dPls_{d}$.
    \item[$(ii)$] Let $ \KL_d $ be the Kullback-Leibler divergence between $\dPls_d$ and $\dPl_d$. We have $$\KL_d \underset{d \to \infty}{\longrightarrow} +\infty \, .$$
    \item[$(iii)$] There exists $(a_d)_d$ verifying $a_d \underset{d \to \infty}{\longrightarrow} +\infty$ such that $$\dPl_{d}(L_d>a_d) \underset{d \to \infty}{\longrightarrow} 0 \quad \mbox{and} \quad \liminf_{d\to\infty} \dPls_{d}(L_d>a_d) =: \beta > 0 \, . $$
    \item[$(iv)$] The martingale $(L_d)_d$ (with respect to $\dPl_{\infty}$) is not uniformly integrable (u.i.), i.e. the almost sure limit $L_\infty$ of $(L_d)_d$ satisfies $\dEl_\infty[L_\infty]<1$.
    \item[$(v)$] One has $$\dPls_{\infty}\left(\liminf_{d\to\infty}(\lambda s)^{-d} \log L_d >0\right)\geq 1-\pext(\lambda s) ,$$ where $\pext(\lambda s)$ is the probability that a Galton-Watson tree with  offspring distribution  $\Poi(\lambda s)$ gets extinct. 
\end{itemize}
\end{theorem}

\begin{remark}[On condition $(v)$]\label{rem:condition5}
Condition $(v)$ will be used in the design of the algorithm in Section \ref{section:graph_matching}, choosing an appropriate threshold that will guarantee for the method to output both a substantial part of the underlying permutation and a vanishing number of mismatches. 
\end{remark}

The above theorem is proved in Section \ref{section:LR} and gives general necessary and sufficient conditions for the existence of a one-sided test in the tree correlation detection problem. We emphasise that in addition, explicit sufficient conditions in terms of $\lambda$ and $s$ are obtained in Sections \ref{section:KL} (Theorem \ref{theorem:suff_cond_KL}), \ref{section:autos_GW} (Theorem \ref{theorem:suff_cond_auto}) and \ref{section:hard_phase} (Theorem \ref{theorem:suff_hard_phase}), conditions under which the equivalent conditions of Theorem \ref{theorem:main_result_TREES} hold (for Theorems \ref{theorem:suff_cond_KL} and \ref{theorem:suff_cond_auto}) or fail (for Theorem \ref{theorem:suff_hard_phase}). These results are globally summarized in Figure \ref{fig:phase_diagram} and constitute a substantial part of this work. 

The correspondence between tree correlation and graph alignment -- point $(3)$ here above -- enables to draw the following conclusion 
\begin{theorem}[Consequences for one-sided partial graph alignment]\label{theorem:main_result_GRAPHS}
For given $(\lambda,s)$, if one-sided correlation detection is feasible, i.e. any of the conditions in Theorem \ref{theorem:main_result_TREES} holds, then one-sided partial alignment in the correlated \ER model $\G(n,q=\lambda/n,s)$ is achieved in polynomial time by our algorithm \texttt{MPAlign} (Algorithm \ref{algo_GA}  in Section \ref{section:graph_matching}).
\end{theorem}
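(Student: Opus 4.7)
The plan is to run MPAlign with depth $d = d(n) := K \log\log n / \log(\lambda s)$ (with $K$ a large constant to be chosen) and threshold $a = a(n) := n^{\alpha}$ for some $\alpha > 2$. For every pair $(i,u)\in [n]^2$ the algorithm evaluates the tree likelihood ratio $\bL_d(i,u)$ of $\dP_{1,d}$ versus $\dP_{0,d}$, applied to the depth-$d$ neighborhoods of $i$ in $\bG$ and of $u$ in $\bG'$. Thanks to the recursive Bayes formula on children, each $\bL_d(i,u)$ is computed in time polynomial in the size of these neighborhoods, which is of order $\lambda^d = \mathrm{polylog}(n)$ with high probability, so the whole procedure runs in $\mathrm{poly}(n)$. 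We then set $\hat\sigma(i) = u$ whenever $\bL_d(i,u) > a$ and this pair is unambiguous (taking $u$ as the unique argmax above threshold if several candidates arise, or post-processing to enforce injectivity).

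The analysis splits into three parts. First, a standard local exploration of the correlated \ER model yields that, for all but $o(n)$ vertices $i$, the joint law of the two depth-$d$ balls around $i$ in $\bG$ and around $\bsigma^*(i)$ in $\bG'$ is within total variation $o(1)$ of $\dP_{1,d}$; likewise, for random unrelated pairs $(i,u)$ with $u\neq \bsigma^*(i)$, the joint law is within $o(1)$ of $\dP_{0,d}$. The choice $\lambda^d = \mathrm{polylog}(n) = o(\sqrt{n})$ sits comfortably in the regime where this coupling holds. Second, under $\dP_{0,d}$ the mean-one likelihood ratio satisfies $\dP_{0,d}(\bL_d > a) \leq 1/a = n^{-\alpha}$ by Markov, so by a union bound the expected number of false-positive pairs is at most $n^2 \cdot n^{-\alpha} = o(1)$ and $\err(\bsigma^*,\hat\sigma) = o(1)$ with high probability. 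Third, condition $(v)$ of Theorem~\ref{theorem:main_result_TREES} provides some $c>0$ with $\log\bL_d \geq c (\lambda s)^d$ eventually, $\dP_1$-a.s. on an event of probability $\geq 1-\pext(\lambda s)$; since $(\lambda s)^d \geq (\log n)^K$, the threshold $\log a = \alpha \log n$ is beaten for $K$ large, hence $\liminf_n \dP_{1,d}(\bL_d > a) \geq 1 - \pext(\lambda s) > 0$, which yields an expected overlap of at least $1 - \pext(\lambda s) - o(1)$.

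The main obstacle is upgrading this expectation bound on the overlap into a high-probability statement, i.e.\ proving concentration of the number of correctly matched vertices. I would proceed by a Chebyshev / second-moment argument, controlling $\mathrm{Cov}(\mathbf{1}\{\bL_d(i,\bsigma^*(i)) > a\}, \mathbf{1}\{\bL_d(j,\bsigma^*(j)) > a\})$ for distinct $i \neq j$. The key point is that two uniformly chosen vertices have disjoint depth-$d$ neighborhoods with probability $1 - o(1)$, since neighborhood sizes are polylogarithmic in $n$; on this event the two likelihood ratios are measurable with respect to disjoint edge sets in $(\bG, \bG')$ and are essentially independent. Combining the resulting $o(n^2)$ variance bound with a truncation of the $o(n)$ vertices where the tree approximation or neighborhood-disjointness fails then gives $\ov(\bsigma^*,\hat\sigma) \geq (1 - \pext(\lambda s))/2$ with high probability, proving one-sided partial recovery. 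The technical heart is to track the local-coupling errors uniformly across the $n^2$ pairs while enforcing injectivity of $\hat\sigma$ without losing too many true matches.
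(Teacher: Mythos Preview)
Your sketch describes a simpler procedure than the actual MPAlign: you threshold the raw likelihood ratio $\bL_d(i,u)$, whereas Algorithm~\ref{algo_GA} requires, for each candidate pair $(i,u)$, the existence of three distinct neighbor pairs $(j_t,v_t)_{t=1,2,3}$ such that each oriented ratio $L_{d-1}(j_t\leftarrow i,\,v_t\leftarrow u)$ exceeds the threshold. This \emph{dangling trees trick} is not cosmetic; it is precisely what makes the false-positive bound go through, and your argument has a genuine gap without it.

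The problem lies in your claim that for $u\neq\bsigma^*(i)$ the joint neighborhood law is close to $\dP_{0,d}$. This is false whenever $u$ is near $\bsigma^*(i)$ in the union graph: if, say, $j\sim i$ in $\bG$ and $u\sim\bsigma^*(j)$ in $\bG'$, then the depth-$d$ trees at $i$ and at $u$ share a large correlated subtree (the branch through $j\leftrightarrow\bsigma^*(j)$), and $L_d(i,u)$ is typically on the same exponential scale as $L_d(i,\bsigma^*(i))$. There are $\Theta(n)$ such pairs, so naive thresholding produces a non-vanishing fraction of mismatches; the paper states this explicitly in Section~\ref{section:graph_matching}. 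Your argmax post-processing would require showing that the true pair strictly beats every nearby competitor, which you do not attempt and which is far from obvious. The paper's fix is structural: on the event $\cE_i$ that the $2d$-ball of $i$ in $\bG_\cup$ is acyclic, for any wrong $u$ and any choice of three directions, at least one of the three subtree pairs must be genuinely disjoint; a direct change-of-measure bound (Lemma~\ref{lemma:control_M2}) then gives $\tilde\dP_0(L_{d-1}>\beta)=O(\beta^{-1/2})$ for that pair, and the union bound over $n\cdot(\log n)^6$ choices closes. Note also that even for far-apart pairs your Markov step cannot rest on a TV-coupling to $\dP_{0,d}$: that error is only $O(n^{-\eps})$ per pair, far too weak for an $n^2$-union bound; one needs the same change-of-measure argument there as well.
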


The above Theorem will be proved in Section \ref{section:graph_matching}, and builds upon the \emph{locally tree-like} property of the graphs sampled from the correlated \ER model. Let us conclude this introduction by forumlating the previously mentioned conjecture about the existence of a hard phase for sparse graph alignment.

\begin{conj}\label{conjecture:hard_phase}
We conjecture that if one-sided correlation detection in trees fails, i.e. none of the equivalent conditions in Theorem \ref{theorem:main_result_TREES} holds, then no polynomial-time algorithm achieves partial recovery in sparse graph alignment. In view of Theorem \ref{theorem:suff_hard_phase} of Section \ref{section:hard_phase}, which guarantees existence of a non-empty parameter region where one-sided tree detection fails while partial graph alignment can be done in non-polynomial time, our conjecture would imply the \emph{hard phase} to be non-empty.
\end{conj}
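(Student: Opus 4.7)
The statement is a conjecture, so my plan is not to prove it outright but to describe the reduction strategy that makes it plausible and identify the precise barrier. The target would be to show that any polynomial-time algorithm achieving one-sided partial recovery on $\cG(n,\lambda/n,s)$ induces a one-sided test between $\dP_{1,d}$ and $\dP_{0,d}$ at the tree level, contradicting condition $(i)$ of Theorem \ref{theorem:main_result_TREES}.

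The first ingredient, which is the same distributional link that motivates MPAlign but now used in the converse direction, is that in the sparse regime $q=\lambda/n$ the joint law of the $d$-neighborhoods of $i$ in $\bG$ and of $u$ in $\bG'$ converges in total variation to $\dP_{1,d}$ when $u=\bsigma^*(i)$ and to $\dP_{0,d}$ when $u$ is unrelated to $i$, provided $d$ grows slowly enough (say $d=o(\log n)$). Given any candidate $\hat\sigma$, the statistic that reads off the pair $(\cT_i,\cT_{\hat\sigma(i)})$ and outputs $1$ if and only if $\hat\sigma$ decides to match $i$ to $\hat\sigma(i)$ would then be a test on tree pairs. If $\hat\sigma$ achieves one-sided partial recovery --- a constant fraction of correct matches and a vanishing fraction of incorrect ones --- this test would have non-vanishing power under $\dP_{1,d}$ and vanishing type I error under $\dP_{0,d}$, i.e. it would be a one-sided test on trees, the desired contradiction.

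The main obstacle, and the reason this remains a conjecture, is justifying that a general polynomial-time algorithm can be approximated by such a local rule on $d$-neighborhoods: there is no a priori reason that poly-time computation must be local, and the situation closely parallels the open question of hardness below the Kesten--Stigum threshold for community detection in the stochastic block model. A realistic partial result I would target is the same reduction restricted to concrete algorithmic classes: low-degree polynomial estimators (via a bound on $\Vert \bL_d - 1 \Vert_{L^2(\dP_{0,d})}$ matching the tree-indistinguishability regime), spectral methods built from counts of short non-backtracking or self-avoiding walks, and message-passing algorithms of depth $O(\log n)$. For each class one would try to show that the planted-versus-null signal is controlled by the same functional of $(\lambda,s)$ that governs tree detection, so that failure of one-sided tree detection transfers directly to failure of the algorithm. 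An unconditional low-degree hardness statement at the tree-detection threshold would be the strongest evidence short of a proof of the full conjecture.
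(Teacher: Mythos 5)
This statement is a conjecture that the paper explicitly leaves unproven; there is no in-paper proof to compare against, and you correctly recognize this. Your account of the heuristic that motivates it --- the asymptotic TV-distance coupling of $d$-neighborhoods of true (resp.\ unrelated) pairs to $\dP_{1,d}$ (resp.\ $\dP_{0,d}$), so that a \emph{local} alignment rule would induce a one-sided tree test --- matches the logic implicit in Section \ref{section:graph_matching} (Lemma \ref{lemma:coupling_GW} and the proof of Theorem \ref{theorem:good_matches}), and your identification of the barrier (polynomial time need not be local, in close analogy with the conjectured Kesten--Stigum hardness for the SBM) is exactly where the paper's argument also stops.

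Two points worth tightening. First, as written, your test $\cT$ takes as input the tree pair $(\cT_i,\cT_{\hat\sigma(i)})$ and outputs $1$ ``iff $\hat\sigma$ matches $i$ to $\hat\sigma(i)$'' --- but $\hat\sigma$ by construction always matches $i$ to $\hat\sigma(i)$, and more importantly $\hat\sigma$ is computed from the \emph{global} graph pair $(\bG,\bG')$, not from the local tree pair alone. The reduction only yields a bona fide test on $\cX_d\times\cX_d$ once you have already argued that $\hat\sigma$'s decision at $i$ factors (or is well approximated) as a function of the $d$-neighborhoods --- which is precisely the open step, so the contradiction you state is not yet available even conditionally. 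Second, the conjecture in the paper targets plain \emph{partial recovery}, whereas your sketch attacks \emph{one-sided} partial recovery (vanishing error fraction); these are not equivalent, and ruling out plain partial recovery is strictly stronger since a constant-overlap estimator with a constant mismatched fraction does not obviously yield a one-sided test on trees at all. Your proposed partial results (low-degree $L^2$ bounds on $\bL_d-1$, short-walk spectral statistics, depth-$O(\log n)$ message passing) are reasonable directions and would constitute genuine evidence, but the paper does not claim any of them; a comparison with the $\lambda s^2<1$ impossibility region of Theorem \ref{theorem:suff_hard_phase} would be the natural place to attempt the low-degree $L^2$ computation.
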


\textbf{Paper organization}
The outline of the paper is as follows. 
We first give in Section \ref{section:notations} a full description of the fundamental problem of testing tree correlation as well as precise constructions of models $\dPl_d$ and $\dPls_d$. 
We then dive into the derivation of the likelihood ratio and its properties in Section \ref{section:LR}, proving Theorem \ref{theorem:main_result_TREES} in Section \ref{subsection:proof_th1}.

Then, a first sufficient condition for one-sided tree detectability (Theorem \ref{theorem:suff_cond_KL}) is obtained in Section \ref{section:KL} by analyzing the Kullback-Leibler divergence: this condition is of the same kind as the one following from \cite{Ganassali20a}, however with a more direct derivation as well as a more explicit condition. 

Using a different approach, a second sufficient condition (Theorem \ref{theorem:suff_cond_auto}) is established in Section \ref{section:autos_GW} by analyzing the number of automorphisms of Galton-Watson trees. 

On the other side, we establish a sufficient condition for failure of one-sided detectability in Section \ref{section:hard_phase} (Theorem \ref{theorem:suff_hard_phase}). If true, Conjecture \ref{conjecture:hard_phase}, together with this condition would show that the hard phase is non-empty for graph alignment. 

Section \ref{section:graph_matching} takes us back to the graph alignment problem, introducing \texttt{MPAlign}, a message-passing method for aligning graphs which strongly relies on the tree correlation problem. Guarantees on this method as well as the proof of Theorem \ref{theorem:main_result_GRAPHS} are established. 

Section \ref{section:conclusion} brings this paper to a conclusion by raising several open questions as well as a discussion on optimal sparse graph alignment in the isomorphic case $(s=1)$. Also, we present a discussion on graph alignment in the isomorphic case in Appendix \ref{appendix:discussion}, some numerical experiments on \texttt{MPAlign2}, a slight variant of \texttt{MPAlign} in Appendix \ref{appendix:numerical}. Some additional proofs are deferred to Appendix \ref{appendix:additional_proofs}.

\section{Notations and problem statement}\label{section:notations}

\subsection{Notations and definitions}\label{subsection:notations}
In this first part we briefly introduce -- or recall -- some basic definitions that are used throughout the paper.\\

\textit{Basics.}
$\dN$ (resp. $\dR,\dR_+$) will denote the set of non-negative integers (resp. real numbers, non-negative real numbers). For all integers $n>0$, we define $[n] := \left\lbrace 1, 2, \ldots, n \right\rbrace$. For any finite set $A$, we denote by $\card{A}$ its cardinal. $\cS_{A}$ is the set of permutations on $A$. We also denote $\cS_{k} = \cS_{[k]}$ for brevity, and we will often identify $\cS_{k}$ to $\cS_{A}$ whenever $\card{A}=k$. For any $0 \leq k \leq \ell$, we will write $\cS(k,\ell)$ (resp. $\cS(A,B)$) for the set of injective mappings from $[k]$ to $[\ell]$ (resp. between finite sets $A$ and $B$). By convention, $\card{\cS(0,\ell)}=1$. Also, we will use the standard asymptotic Laudau notations: $o,O,\Omega$ and $\Theta$.\\

\textit{Graphs.}
A simple, non-oriented graph $G=(V,E)$ is defined by its node set $V$ and its edge set $E$ made of $2-$sets of distinct elements of $V$. In such a graph $G$, we denote by $\cN_{G,d}(u)$ (resp. $\cS_{G,d}(u)$) the set of vertices at graph distance $\leq d$ (resp. exactly $d$) from node $u$ in $G$. The \emph{neighborhood} of a node $u \in V$ is $\cN_G(u):= \cN_{G,1}(u)$, i.e. the set of all vertices that are connected to $u$ by an edge in $G$. The size of $\cN_G(u)$ is referred to as the \emph{degree} of node $u$. \\

\textit{Labeled rooted trees.} 
A \emph{labeled rooted tree} $t$ is an undirected graph with node set $V$ and edge set $E$ with no cycle, with a given distinguished node $\rho \in V$ called the \emph{root}. The \emph{depth} of a node is defined as its distance to the root $\rho$. The depth of tree $t$ is given as the maximum depth of all nodes in $t$. Each node $u$ at depth $d \geq 1$ has a unique \emph{parent} in $t$, which can be defined as the unique node at depth $d-1$ on the path from $u$ to the root $\rho$. Similarly, the \emph{children} of a node $u$ of depth $d$ are all the neighbors of $u$ at depth $d+1$. In addition, the nodes are labeled by ordering nodes' children - these sortings may be given, arbitrary or random, see hereafter. In any case, the labels must satisfy the following constraints. First, the label of the root node is set to the empty list $\varnothing$. Then, recursively, the label of a node $u$ is a list $\lbrace m,k \rbrace$ where $m$ is the label of its parent node, and $k$ is the rank of $u$ among the children of its parent. See Figure \ref{fig:example_trees} for an example. 

For any $u \in V$, we denote by $t_u$ the subtree of $t$ rooted at node $u$, and $c_t(u)$ -- or simply $c(u)$ where there is no ambiguity -- the number of children of $u$ in $t$. Note that for all $u \in V$, $t_u$ inherits a unique labeling from the initial labeling of $t$, by replacing the label of $u$ by $\varnothing$ in all the subtree.

We let $\cV_d(t)$ (resp. $\cL_d(t)$) be the set of nodes of $t$ at depth at most $d$ (resp. exactly $d$).\\

We denote by $\cX_d$ the collection of labeled rooted trees of depth at most $d$. Obviously, $\cX_0$ contains a single element, namely the rooted tree with only one node -- its root. Each tree $t$ in $\cX_d$ can be represented with a unique ordered list $(t_1,\ldots,t_{c(\rho)})$ where each $t_u$ is the subtree of $t$ rooted at node $u$ as defined above, and thus belongs to $\cX_{d-1}$. When $c(\rho)=0$, the previous ordered list is empty.\\

\textit{Relabelings.} 
A \emph{relabeling} $r(t)$ of a labeled rooted tree $t$ is a labeling of $t$ where for every node $u$, the rank of $u$ among all the children of its parent is given by a permutation -- the integer $k$ in the initial labeling is replaced by some $\sigma_v(k)$ where $v$ is the parent of $u$.
We define $\Rel(t)$ the set of relabelings of $t$, given as sets of permutations $\set{\sigma_v, v \in V}$. 

In the sequel we will consider \emph{uniform relabelings} $R(t)$, that are relabelings where the $\sigma_v$ are uniform. 

\begin{remark}\label{rem:relabelings}
    First, an easily verified property is that, for a given labeled tree $t \in \cX_d$, $R(t)$ is indeed uniformly distributed in $\Rel(t)$. 
    Second, note that since the random models of trees described hereafter in Section \ref{subsection:model_random_trees} are invariant by relabeling, correlation detection in unlabeled trees and correlation detection in uniformly (re)labeled trees are equivalent. 
   Third, note that distinct relabelings ion $\Rel(t)$ do not necessary give distinct rooted labeled trees: these relabelings will be called \emph{automorphisms} of $t$, see below.
\end{remark}

\textit{Injective mappings between labeled rooted trees.} 
For two labeled trees $\tau,t\in\cX_d$, the set of \emph{injective mappings (or, injections) from $\tau$ to $t$}, denoted $\cS(\tau,t)$, is the set of one-to-one mappings from the labels of vertices of $\tau$ to the labels of vertices of $t$ that preserve the rooted tree structure, in the sense that any $ \sigma  \in \cS(\tau,t)$ must verify 
\begin{equation*}
     \sigma (\varnothing)=\varnothing \quad \mbox{ and } \quad \sigma ( \left\lbrace m,k \right\rbrace)= \left\lbrace \sigma (m), j \right\rbrace \mbox{ for some $j$}.
\end{equation*} Note that $\cS(\tau,t)$ is not empty if and only if $\tau$ is, up to some relabeling, a subtree of $t$.\\

\textit{Automorphisms of rooted labeled trees.} Let $t \in \cX_d$. As mentioned in Remark \ref{rem:relabelings}, some elements of $\Rel(t)$ may be indistinguishable of $t$ (that is, they give a tree with same edges between same node labels). These relabelings are called \emph{automorphisms} of $t$, and their set is denoted by $\Aut(t)$.\\

\textit{Tree prunings.} The \emph{pruning of a labeled rooted tree $t$ at depth $d$} is the subtree of $t$ obtained by removing nodes at distance $>d$ from the root. We denote by $p_d : \bigcup_{d' \geq d} \cX_{d'} \to \cX_{d}$ the pruning operator at depth $d$.  See Figure \ref{fig:example_trees} for an example. \\

\textit{Tree subsampling.} For $s \in (0,1)$, a \emph{$s-$subsampling} of a labeled rooted tree $t$ is a rooted tree obtained by conserving every edge independently with probability $s$, and keeping the connected component of the root. The subsampling inherit a unique labeling by sorting the children according to their original labeling in $t$.\\

\begin{figure}[h]
     \centering
     \begin{subfigure}[b]{0.4\textwidth}
         \centering
         \includegraphics[scale=0.8]{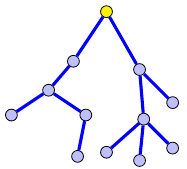}
         \caption{a rooted tree $t \in \cX_d$ (labels hidden)}
         \label{fig:t_unlabeled}
     \end{subfigure}
     \begin{subfigure}[b]{0.4\textwidth}
         \centering
         \includegraphics[scale=0.8]{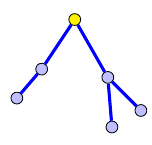}
         \caption{the pruning of $t$ at depth $2$ (labels hidden)}
         \label{fig:t_pruned}
     \end{subfigure}
     \hfill
     \begin{subfigure}[b]{\textwidth}
         \centering
         \vspace{0.4cm}
         \includegraphics[scale=0.8]{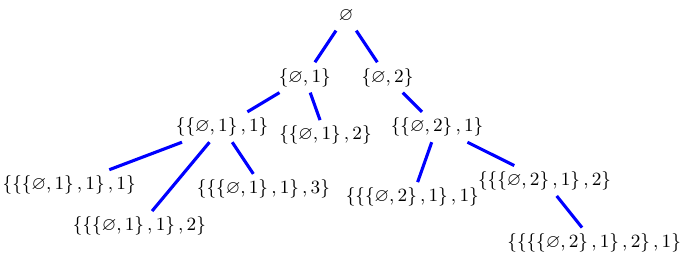}
         \caption{a random uniform labeling of $t$}
         \label{fig:t_labeled}
     \end{subfigure}
     
    \caption{A rooted tree $t$ of depth $d=4$ (the root is highlighted in yellow).}
    \label{fig:example_trees}
\end{figure}

\textit{Probability.} 
For the sake of readability, sometimes lowercase characters are used to distinguish deterministic objects from random variables (uppercase).
Some event $B$ depending on $n$ is said to be verified \emph{with high probability (w.h.p.)} if the probability of $B$ tends to $1$ when $n \to \infty$.

We also denote by $\Poi(\mu)$ the Poisson distribution of parameter $\mu \geq 0$, and $\p_\mu$ its density, namely for all $k \geq 0$,
\begin{equation*}
    \p_\mu(k) := e^{-\mu} \frac{\mu^k}{k!} \, .
\end{equation*}

For $\mu \geq 0$ and $d \geq 0$, $\GWmu_d$ denotes the distribution of a uniformly labeled rooted Galton-Watson tree of offspring $\Poi(\mu)$ stopped at depth $d$, which we do not redefine here.

\subsection{Models of random trees}\label{subsection:model_random_trees} 
The context of our problem is as follows. We observe two uniformly labeled rooted trees $t,t'$ 
and would like to test whether they are independent or correlated. We will consider two models of random rooted trees, which we present hereafter. For this purpose we first need to introduce the following\\

\emph{Tree augmentation.} For $\lambda >0, s \in [0,1]$ and $d \geq 0$, a (random) \emph{$(\lambda,s)-$augmentation} of a labeled rooted tree $\tau$ of depth at most $d$, denoted $\Augls_d(\tau)$, is defined as follows. First, to each node $u$ in $V_0$ of depth $<d$, we attach a number $Z^{+}_u$ of additional children, where the $Z^{+}_u$ are i.i.d. of distribution $\Poi(\lambda (1-s))$. Let $V^+$ be the set of these additional children. To each $v \in V^+$ at depth $d_v$, we attach another random tree of distribution $\GWl_{d-d_v}$, independently of everything else. This new tree is then relabeled uniformly at random. \\

We are now ready to describe the independent (resp. correlated) model $\dPl_d$ (resp. $\dPls_d$).

\begin{itemize}
    \item[$(i)$] \textbf{Independent model $\dPl_{d}$.} Under the independent model $\dPl_d$, $T$ and $T'$ are two independent $\GWl_d$ trees, where $\lambda>0$ is the mean number of children of a node in the tree. We denote $(T,T') \sim \dPl_{d}$.
    
    \item[$(ii)$] \textbf{Correlated model $\dPls_d$.} This model is built as follows: we start from a so-called \emph{intersection tree} $\tau^* \sim \GWls_d$, and we take $T$ and $T'$ to be two independent $(\lambda,s)-$augmentations of $\tau^*$. The two parameters are $\lambda>0$, the mean number of children of a node in the tree, and the correlation $s \in [0,1]$. We denote $(T,T') \sim \dPls_{d}$.

    \item[$(i)/(ii)$] \textbf{Labeling.} In both models, the trees $T$ and $T'$ are then both uniformly relabeled. 
\end{itemize}

It can easily be verified that $T$ and $T'$ have same marginals under $\dPl_{d}$ and $\dPls_d$, namely $\GWl_d$. These models are illustrated in Figure \ref{fig:samples_P01}.

\begin{remark}
    Let us mention two easy facts. First, the correlated model when $s=0$ coincides with the independent model, namely $\dP^{(\lambda, s=0)}_d \overset{(d)}{=} \dPl_d$. Second, the definitions of both models are still consistent for $d=\infty$, which will be used in the sequel.
\end{remark}

Another interesting fact which we will use implicitly in the sequel is the following 
\begin{fact}[Pruning\footnote{see Section \ref{subsection:notations} for a definition.} consistency]
    Let $0 \leq d \leq d'$. If $(T,T') \sim \dPl_{d'}$ (resp. $(T,T') \sim \dPls_{d'}$), then $(p_d(T),p_d(T')) \sim \dPl_{d}$ (resp. $(p_d(T),p_d(T')) \sim \dPls_{d'}$).
\end{fact}

\subsection{Hypothesis testing, one-sided tests}\label{subsection:hyp_testing} 
The hypothesis testing considered in this study can be formalized as follows: given the observation of a pair of trees $(t,t')$ in $\cX_d \times \cX_d$, we want to test
\begin{equation}
    \label{eq:test_hypotheses_p}
    \cH_0 = \mbox{"$t,t'$ are realizations under $\dPl_{d}$"} \quad \mbox{versus} \quad \cH_1 = \mbox{"$t,t'$ are realizations under $\dPls_{d}$"}.
\end{equation} 
More specifically, we are interested in being able to ensure the existence of a (asymptotic) \textit{one-sided test}, that is a test $\cT_d: \cX_d \times \cX_d \to \left\lbrace 0,1 \right\rbrace$ such that $\cT_d$ chooses hypothesis $\cH_0$ under $\dPl_{d}$ with probability $1-o(1)$, and chooses $\cH_1$ with some positive probability uniformly bounded away from 0 under $\dPls_{d}$. In other terms, a one-sided test asymptotically guarantees a vanishing type I error and a non vanishing power.

\begin{figure}[H]
     \centering
     \begin{subfigure}[b]{\textwidth}
         \centering
         \includegraphics[scale=0.4]{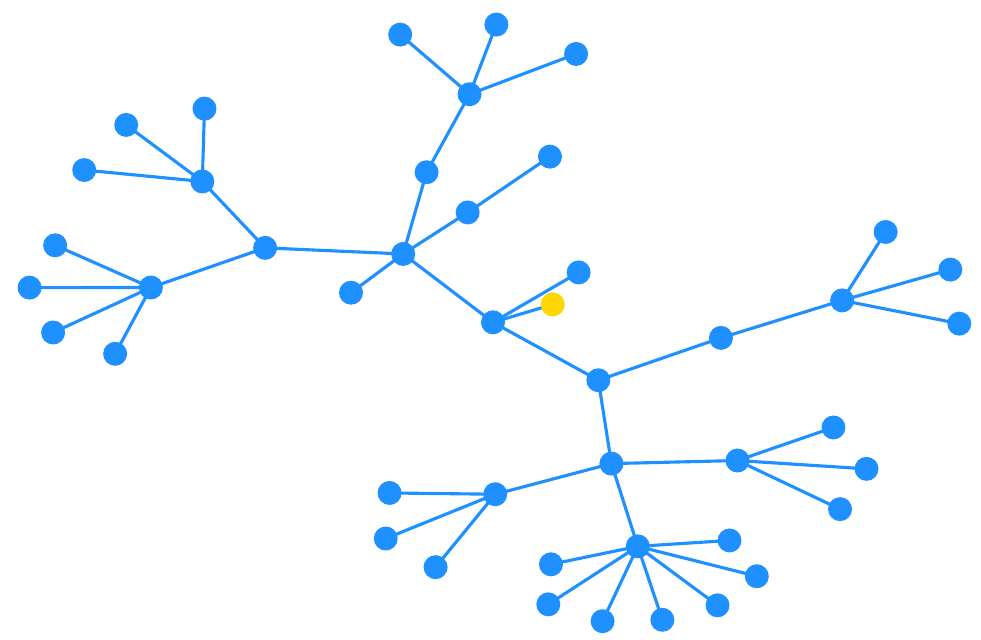}
         \hspace{0.25cm}
         \includegraphics[scale=0.4]{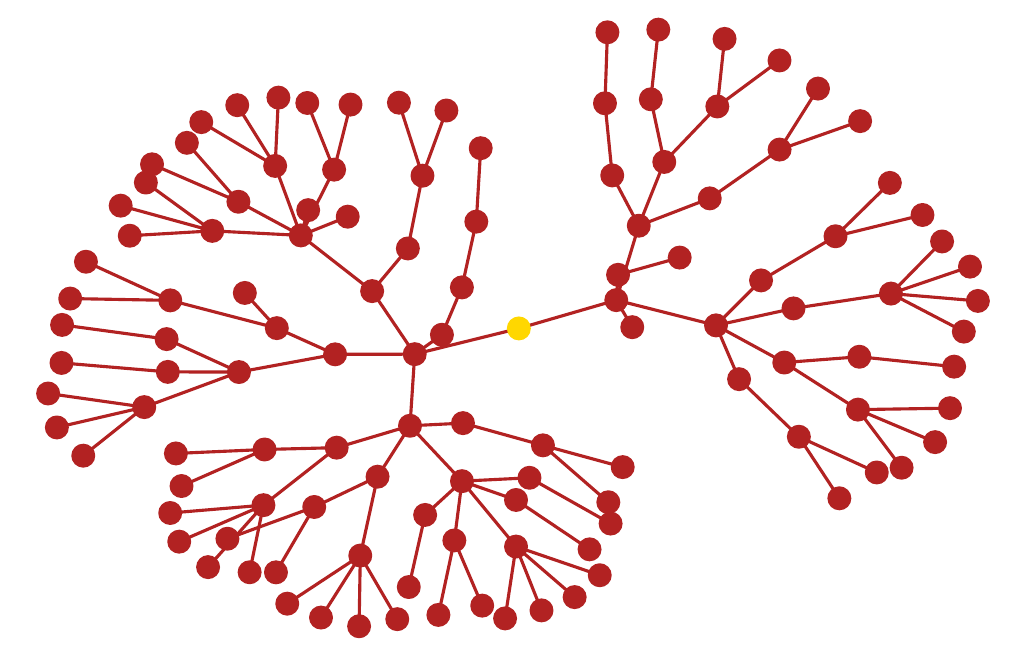}
         \caption{A realization of $T,T'$ from $\dPl_{d}$.}
         \label{fig:P0}
     \end{subfigure}
     \begin{subfigure}[b]{\textwidth}
        \vspace{0.3cm}
         \centering
         \includegraphics[scale=0.4]{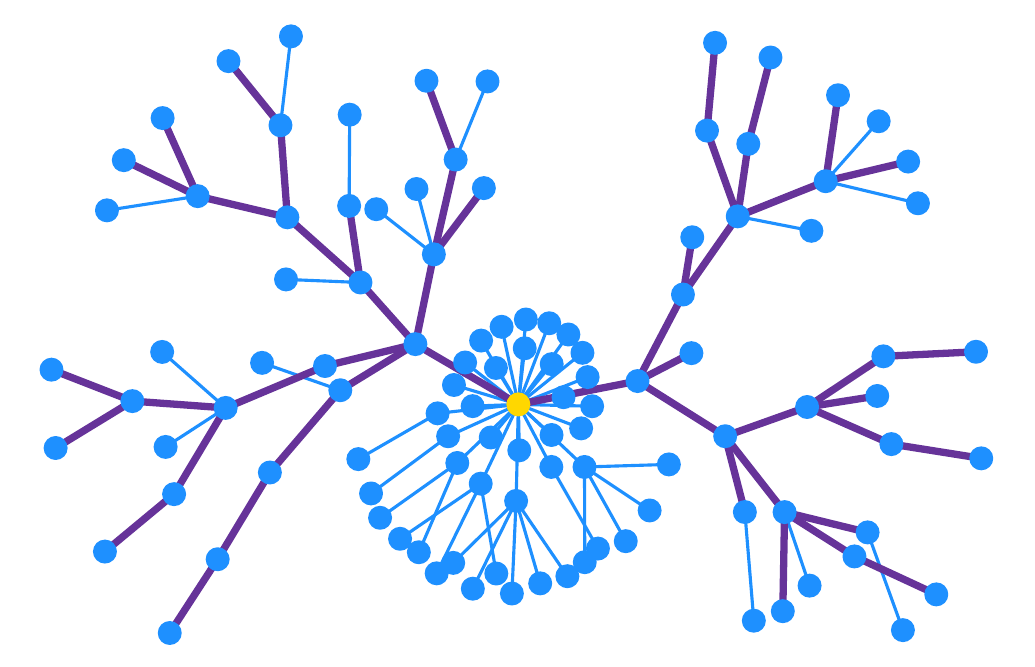}
         \hspace{0.25cm}
         \includegraphics[scale=0.4]{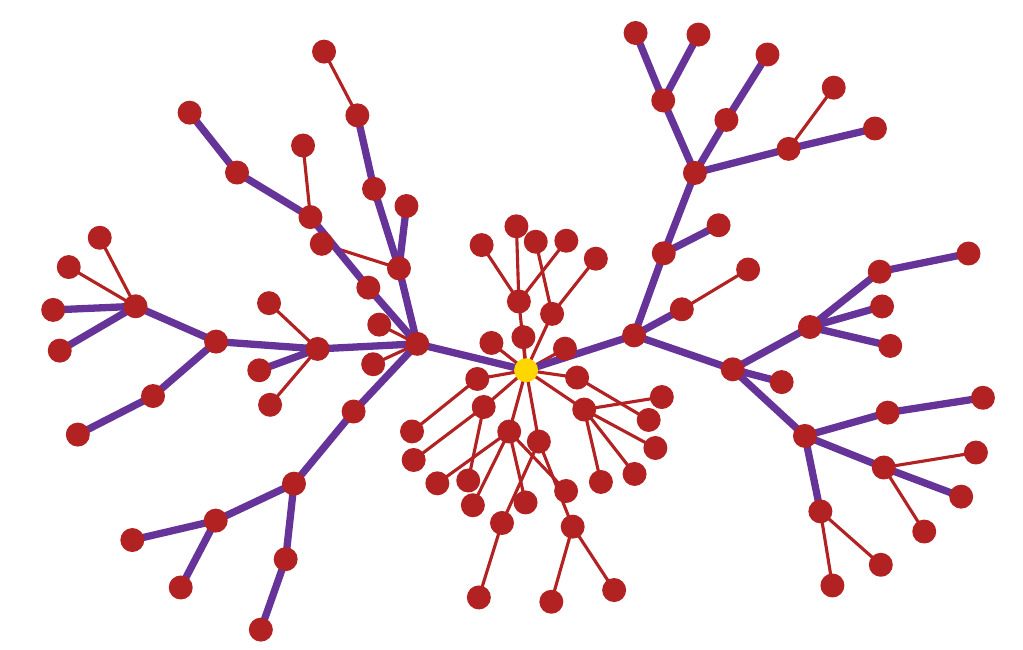}
         \caption{A realization of $T,T'$from $\dPls_{d}$. The intersection tree $\tau^*$ is drawn thick and purple.}
         \label{fig:P0}
     \end{subfigure}
     
    \caption{Samples from models $\dPl_{d}$ and $\dPls_{d}$, with $\lambda = 1.8$, $s=0.8$, and $n=5$. The root node is highlighted in yellow. Labels are not shown.}
    \label{fig:samples_P01}
\end{figure}

\begin{remark}\label{remark:one_sided_tests}
In statistical detection theory, commonly considered asymptotic properties of tests are
\begin{itemize}
	\item \emph{strong detection}, i.e. tests $\cT_d: \cX_d \times \cX_d \to \left\lbrace 0,1 \right\rbrace$  that verify
	\begin{equation*}
	\underset{d \to \infty}{\lim} \left[\dPl_d\left( \cT_d(t,t') = 1 \right) + \dPls_d\left( \cT_d(t,t') = 0 \right)\right] = 0,
	\end{equation*}
	\item \emph{weak detection}, i.e. tests $\cT_d: \cX_d \times \cX_d \to \left\lbrace 0,1 \right\rbrace$ that verify
	\begin{equation*}
	\underset{d \to \infty}{\limsup} \left[\dPl_d\left( \cT_d(t,t') = 1 \right) + \dPls_d\left( \cT_d(t,t') = 0 \right)\right] <1 \, .
	\end{equation*}
\end{itemize} In words, strong detection corresponds to correctly discriminating with high probability between $\dPl_d$ and $\dPls_d$, whereas weak detection corresponds to strictly outperforming random guessing. It is well-known that the likelihood ratio test achieves the minimal value of $\dPl_d\left( \cT_d(t,t') = 1 \right) + \dPls_d\left( \cT_d(t,t') = 0 \right)$, which is the sum of type I and type II error, and that this minimal value is given by $1 - \DTV(\dPl_d,\dPls_d)$, where $$\DTV(\dPl_d,\dPls_d) := \frac{1}{2} \sum_{(t,t') \in \cX_d^2} |\dPl_d(t,t') - \dPls_d(t,t')| $$ denotes the total variation distance
between $\dPl_d $ and $\dPls_d$. Hence, strong detection (resp. weak detection) holds if and only if $\, \DTV(\dPl_d,\dPls_d) \underset{d \to \infty}{\longrightarrow} 1$ (resp. $\underset{d \to \infty}{\liminf} \, \DTV(\dPl_d,\dPls_d) > 0$).

We now argue that these are not the right notions for our problem. For this, consider the subset $B_d$ of $\cX_d \times \cX_d$ made of a pair of trivial trees (that is trees only consisting in a single node). Note that for all $d>0$, $\dPl_d(B_d)=e^{-2\lambda}$ and $\dPls_d(B_d)=e^{-2\lambda+\lambda s}$.
\begin{itemize}
    \item First, let us prove that strong detection never holds. For all $d>0$,
    \begin{flalign*}
    \DTV(\dPl_d,\dPls_d) & = \frac{1}{2}|\dPl_d(B_d) - \dPls_d(B_d)| + \frac{1}{2} \sum_{(t,t') \in \cX_d^2 \setminus B_d} |\dPl_d(t,t') - \dPls_d(t,t')| \\
    & = \frac{1}{2}(e^{-2\lambda+\lambda s}-e^{-2\lambda}) + \frac{1}{2} \sum_{(t,t') \in \cX_d^2 \setminus B_d} |\dPl_d(t,t') - \dPls_d(t,t')| \\
    & \leq \frac{1}{2}(e^{-2\lambda+\lambda s}-e^{-2\lambda}) + \frac{1}{2}(1-\dPl_d(B_d)) +  \frac{1}{2}(1-\dPls_d(B_d)) \\
    & \leq 1 - e^{-2\lambda} \, .
    \end{flalign*} This bound is uniform in $d$ and shows that strong detection never holds. 

\item Second, weak detection is always achievable as soon as $s>0$: indeed, one has 
\begin{flalign*}
    \DTV(\dPl_d,\dPls_d) & \geq \frac{1}{2}|\dPl_d(B_d) - \dPls_d(B_d)| \\
    & \geq \frac{1}{2} e^{-2\lambda}(e^{\lambda s}-1) \, ,
    \end{flalign*} this uniform bound being positive as soon as $s>0$, so that weak detection holds.

\end{itemize}
Finally, a test of tree correlation yields efficient algorithms for graph alignment in the associated sparse correlated  \ER model if it achieves a positive power (non-vanishing alarm detection) and a vanishing type I (false alarm) error. Indeed the candidate vertex pairs returned by the algorithm will then contain i) a non-negligible fraction of correctly matched pairs by the first property, and ii) a negligible fraction of incorrectly matched pairs by the second property. 
\end{remark}

\section{Properties of the likelihood ratio} \label{section:LR}
In the following section, we derive the likelihood ratio in our problem and establish some of its properties that are key to the upcoming analysis.
Recall that for all $d \geq 0$, the likelihood ratio $L_d$ is defined for all $t,t'\in \cX_d$ as\footnote{For the purpose of martingale properties established in Section \ref{subsection:martingale_prop}, we need to extend slightly the definition of $L_d$ for trees that may not be in $\cX_d$. This is simply done by pruning the trees, namely
$$
L_d(t,t'):= \frac{\dPls_{d}(p_d(t),p_d(t'))}{\dPl_{d}(p_d(t),p_d(t'))},
$$ which is consistent with \eqref{eq:def_LR}.}
\begin{equation}
    \label{eq:def_LR}
    L_d(t,t'):= \frac{\dPls_{d}(t,t')}{\dPl_{d}(t,t')} \, .
\end{equation}

\subsection{Recursive computation}\label{subsection:recursive_computations}
We first show a recursive representation of the likelihood ratio $L_d$. First note that under $\dPl_d$, for $t,t' \in \cX_d$ with root degrees $c$ and $c'$, we have
\begin{equation} \label{eq:P0_GW}
\dPl_{d}(t,t')= \GWl_{d}(t) \cdot \GWl_{d}(t'),
\end{equation} and that we have the following recursion for $\GWl_{d}$, writing $t = (t_1, \ldots, t_c)$:
\begin{equation}\label{eq:recursion_GW}
\GWl_{d}(t) =\p_{\lambda}(c)\prod_{u \in [c]} \GWl_{d-1}(t_u) \, ,
\end{equation} the same equality being true for $t'$, replacing $c$ by $c'$. 

Under $\dPls_{d}$, when observing $t$ and $t'$ in  $\cX_d$ with root degree $c,c'$, we partition first on the number $0 \leq k \leq c \land c'$ of children of the root in $t$ and $t'$ belonging to $\tau^*$ (the underlying intersection tree), and second on mappings $\sigma \in\cS(k,c)$ and $ \sigma' \in \cS(k,c)$ making these children correspond pairwise. This gives
\begin{multline*}
\dPls_{d}(t,t') = \sum_{k=0}^{c \wedge c'} \p_{\lambda s}(k) \p_{\lambda (1-s)}(c-k)\p_{\lambda (1-s)}(c'-k) \\
\times \sum_{\substack{\sigma \in \cS(k,c) \\ \sigma' \in \cS(k,c')}}\frac{(c-k)! \cdot (c-k')! }{c! \cdot c'!} \left(\prod_{u=1}^k\dPls_{d-1}(t_{\sigma(u)},t'_{\sigma'(u)})\right) \left(\prod_{u=k+1}^d \GWl_{d-1}(t_{\sigma(u)})\right) 
\left(\prod_{u=k+1}^{d'} \GWl_{d-1}(t'_{\sigma'(u)})\right) \, .
\end{multline*}

This together with Equations \eqref{eq:P0_GW}, \eqref{eq:recursion_GW} readily implies the following recursive formula for the likelihood ratio $L_d$:

\begin{lemma}[Recursive formula for $L_d$]\label{lemma:LR_rec}
    We have
    \begin{equation}\label{eq:lemma:LR_rec}
L_d(t,t')=\sum_{k=0}^{c \wedge c'}\Psi(k,c,c')\sum_{\substack{\sigma \in \cS(k,c) \\ \sigma' \in \cS(k,c')}}\prod_{u=1}^k L_{d-1}(t_{\sigma(u)},t'_{\sigma'(u)}),
\end{equation} where $c$ (resp. $c'$) is the degree of the root in $t$ (resp. in $t'$), and $\Psi(k,c,c')$ is the following shorthand notation
\begin{equation}\label{eq:lemma:LR_rec_psi}
\Psi(k,c,c') := \frac{\p_{\lambda s}(k)\p_{\lambda (1-s)}(c-k)\p_{\lambda(1-s)}(c'-k)} {\p_\lambda(c)\p_\lambda(c')} \cdot \frac{(c-k)! \cdot (c-k')! }{c! \cdot c'!} =  \frac{e^{\lambda s} s^k (1-s)^{c+c'-2k}}{\lambda^{k} k!} \, .
\end{equation}

\end{lemma}

\begin{remark}\label{remark:util_rec_algo}
The above expression \eqref{eq:lemma:LR_rec} in Lemma \ref{lemma:LR_rec} will be useful for efficient computations of the likelihood ratio through message-passing in the \texttt{MPAlign} method -- see Algorithm \ref{algo_GA} in Section \ref{section:graph_matching}.
\end{remark}

\subsection{Explicit computation}
We will now use the recursive expression of Lemma \ref{lemma:LR_rec} to prove by induction on $d$ the following explicit formula for $L_d$:
\begin{lemma}[Explicit formula for $L_d$]\label{lemma:LR_developed}
With the previous notations, we have
\begin{equation}\label{eq:lemma:LR_developed}
L_d(t,t')=\sum_{\tau \in \cX_d} \sum_{\substack{\sigma \in \cS(\tau,t) \\ \sigma' \in \cS(\tau,t')} }\prod_{u \in \cV_{d-1}(\tau)}\Psi\left(c_\tau(u),c_t(\sigma(u)),c_{t'}(\sigma'(u))\right), 
\end{equation} where we recall that $\Psi$ is defined in \eqref{eq:lemma:LR_rec_psi}.
\end{lemma}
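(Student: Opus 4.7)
The plan is to prove the formula by induction on $n$, using the recursive expression \eqref{eq:LR_rec_2} as the engine and matching the decomposition of a labeled tree $\tau \in \cX_n$ into its root plus its ordered subtrees with the corresponding decomposition of injective mappings.

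For the base case $n=0$, both $\dP_{0,0}$ and $\dP_{1,0}$ are supported on the single-node tree, so $L_0 \equiv 1$. The right-hand side of \eqref{eq:lemma:LR_developed} contains a single term: $\tau$ is the one-node tree, $\cS(\tau,t)$ and $\cS(\tau,t')$ each have a unique element, and $\cV_{-1}(\tau)=\emptyset$, so the empty product equals $1$. The formula holds trivially.

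For the inductive step, assume the formula holds at depth $n-1$. Fix $t=(t_1,\ldots,t_c)$ and $t'=(t'_1,\ldots,t'_{c'})$ in $\cX_n$. I will expand the right-hand side of \eqref{eq:lemma:LR_developed} and show it equals the right-hand side of the recursion \eqref{eq:LR_rec_2}. Write a tree $\tau \in \cX_n$ as the ordered list $(\tau_1,\ldots,\tau_k)$ of its root-subtrees, where $k=c_\tau(\rho) \in \{0,1,\ldots\}$ and each $\tau_j \in \cX_{n-1}$. An injective mapping $\sigma \in \cS(\tau,t)$ canonically decomposes as $(\sigma_0, \eta_1,\ldots,\eta_k)$ where $\sigma_0 \in \cS(k,c)$ is the induced mapping on the roots' children and $\eta_j \in \cS(\tau_j, t_{\sigma_0(j)})$ is the restriction to the $j$-th subtree; similarly for $\sigma'$. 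Correspondingly, the node set $\cV_{n-1}(\tau)$ splits as $\{\rho\} \sqcup \bigsqcup_{j=1}^{k} \cV_{n-2}(\tau_j)$, so the product factorizes as
\begin{equation*}
\prod_{i \in \cV_{n-1}(\tau)} \psi\bigl(c_\tau(i),c_t(\sigma(i)),c_{t'}(\sigma'(i))\bigr) = \psi(k,c,c') \prod_{j=1}^{k} \prod_{i \in \cV_{n-2}(\tau_j)} \psi\bigl(c_{\tau_j}(i),c_t(\eta_j(i)),c_{t'}(\eta'_j(i))\bigr).
\end{equation*}

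Substituting and reorganizing the sums (grouping by $k$, by $(\sigma_0,\sigma'_0)$, and then by the independent triples $(\tau_j,\eta_j,\eta'_j)$ for each $j$), the right-hand side of \eqref{eq:lemma:LR_developed} becomes
\begin{equation*}
\sum_{k=0}^{c\wedge c'} \psi(k,c,c') \sum_{\substack{\sigma_0 \in \cS(k,c) \\ \sigma'_0 \in \cS(k,c')}} \prod_{j=1}^{k} \Biggl[ \sum_{\tau_j \in \cX_{n-1}} \sum_{\substack{\eta_j \in \cS(\tau_j,t_{\sigma_0(j)}) \\ \eta'_j \in \cS(\tau_j,t'_{\sigma'_0(j)})}} \prod_{i \in \cV_{n-2}(\tau_j)} \psi\bigl(c_{\tau_j}(i),c_t(\eta_j(i)),c_{t'}(\eta'_j(i))\bigr) \Biggr].
\end{equation*}
By the induction hypothesis, the bracketed expression is exactly $L_{n-1}(t_{\sigma_0(j)},t'_{\sigma'_0(j)})$, so we recover precisely \eqref{eq:LR_rec_2}, completing the induction.

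The only delicate point is the bijective correspondence between $\cS(\tau,t)$ and the product of data $(\sigma_0,(\eta_j)_j)$. This is where the canonical labeling matters: because children of every node in $\tau$ come equipped with an ordering $1,\ldots,c_\tau(i)$, the root-children map $\sigma_0$ is a well-defined element of $\cS(k,c)$ and the subtree restrictions $\eta_j$ live in the correct sets $\cS(\tau_j,t_{\sigma_0(j)})$, with no overcounting coming from tree automorphisms at the level of $\cX_n$. Verifying this decomposition cleanly is the main bookkeeping obstacle; once it is in place, the rest is algebraic rearrangement.
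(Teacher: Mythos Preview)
Your proof is correct and follows essentially the same approach as the paper: induction on $n$, using the bijection between a tree $\tau\in\cX_n$ together with $\sigma\in\cS(\tau,t)$, $\sigma'\in\cS(\tau,t')$ on one side, and the data $(k,\sigma_0,\sigma'_0,(\tau_j,\eta_j,\eta'_j)_{j\le k})$ on the other. The only cosmetic difference is that the paper expands the recursion \eqref{eq:LR_rec_2} and identifies it with the developed sum, whereas you expand the developed sum and identify it with the recursion; these are the same argument read in opposite directions.
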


\begin{proof}[Proof of Lemma \ref{lemma:LR_developed}]
We prove this result by recursion. An empty product being set to $1$, there is nothing to prove in the case $d=0$. Let us first establish formula \eqref{eq:lemma:LR_developed} for $d=1$. In that case, trees $t$, $t'$ of depth $1$ are identified by the degrees $c$, $c'$ of their root node. Since $\cX_0$ is a singleton, $L_0$ is identically 1, and from \eqref{eq:lemma:LR_rec} we have that
\begin{equation}\label{eq:expr_L1_rec}
    L_1(t,t')=\sum_{k=0}^{c \wedge c'}\frac{\p_{\lambda s}(k)\p_{\lambda (1-s)}(c-k)\p_{\lambda(1-s)}(c'-k)}{\p_\lambda(c)\p_\lambda(c')}.
\end{equation} On the other hand, when evaluating expression \eqref{eq:lemma:LR_developed} for $d=1$ we only need consider trees $\tau$ in $\cX_1$ with root degree $k\leq c\wedge c'$, since for a larger $k$ one of the two sets $\cS(\tau,t)$ or $\cS(\tau,t')$ will be empty. When $\tau$ is the tree of $\cX_1$ with root degree $k$, we have $$\card{\cS(\tau,t)}= \frac{c!}{(c-k)!} \quad \mbox{and} \quad \card{\cS(\tau,t')}= \frac{c'!}{(c'-k)!} \, .$$ The right-hand term in \eqref{eq:lemma:LR_developed} thus writes
\begin{equation*}
    \sum_{k=0}^{c \wedge c'} \frac{c! \times c'!}{(c-k)! \times (c'-k)!} \Psi(k,c,c'),
\end{equation*} which gives precisely \eqref{eq:expr_L1_rec}.  Now assume that \eqref{eq:lemma:LR_developed} has been established up to $d-1\geq 1$. Expressing $L_{d}$ in terms of $L_{d-1}$ based on \eqref{eq:lemma:LR_rec}, and replacing in there the explicit expression of $L_{d-1}$ in \eqref{eq:lemma:LR_developed}, we get
\begin{multline*}
   L_{d}(t,t') = \sum_{k=0}^{c \wedge c'} {\Psi(k,c,c')}\\ \times
   \sum_{\substack{\sigma \in \cS(k,c) \\ \sigma' \in \cS(k,c')}} \prod_{u=1}^k \left[ \sum_{\tau_u\in\cX_{d-1}} \sum_{\substack{\sigma_u \in \cS(\tau_u,t_{\sigma(u)}) \\ \sigma'_u \in \cS(\tau_u,t'_{\sigma(u)})} }\prod_{v\in \cV_{d-1}(\tau_u)}\Psi\left(c_{\tau_u}(v),c_{t_{\sigma(u)}}(\sigma_u(v)),c_{t'_{\sigma'(u)}}(\sigma'_u(v))\right)\right]. 
\end{multline*}

Moreover, there is a bijective correspondence between 
$$
\begin{cases}
\mbox{an integer $0 \leq k \leq c \land c'$,} \\
\mbox{pairs of injections $\sigma \in \cS(k,c)$, $\sigma' \in \cS(k,c')$,} \\
\mbox{$k$ trees $\tau_1, \ldots ,\tau_k\in\cX_{d-1}$},\\
\mbox{injections $\sigma_u \in \cS(\tau_u,t_{\sigma(u)})$ for all $u \in [k]$},\\
\mbox{injections $\sigma'_u \in \cS(\tau_u,t'_{\sigma'(u)})$ for all $u \in [k]$},
\end{cases}
\; \mbox{and} \quad
\begin{cases}
\mbox{a tree $\tau \in \cX_d$ with root degree $\leq c\land c'$}, \\
\mbox{an injection $\Sigma \in \cS(\tau,t)$}, \\
\mbox{an injection $\Sigma' \in \cS(\tau,t')$} \, .
\end{cases}
$$
This establishes formula \eqref{eq:lemma:LR_developed} at step $d$, and hence Lemma \ref{lemma:LR_developed}.
\end{proof} 

\subsection{Martingale properties of $L_d$}\label{subsection:martingale_prop}
In this part, we establish that $L_d$ is a martingale, with respect to a good filtration and probability distribution. To establish such a property, assume that $T,T'$ are drawn under model $\dPl_{\infty}$. For $d \geq 0$, we define $$\cF_d:=\sigma(p_d(T),p_d(T'))$$ the sigma-field spanned by the observation of the two trees $T,T'$ up to depth $d$. 
\begin{proposition}\label{prop:LR_martingale}
The stochastic process $$\set{L_d = L_d(p_d(T),p_d(T'))}_{d \geq 0}$$ is a $\cF_d$-martingale under $\dPl_{\infty}$.
\end{proposition}
The above martingale property is not specific to the structure of our problem and follows from general properties of likelihood ratios. It is however informative to derive it by calculus, which we now do.
\begin{proof}[Proof of Proposition \ref{prop:LR_martingale}]
There are several ways to see that $\set{L_d}_{d \geq 0}$ is a $\cF_d$-martingale under $\dPl_{\infty}$, depending on the formula used to write $L_{d+1}$ in terms of $L_d$. We here choose to use the developed expression \eqref{eq:lemma:LR_developed} of Lemma \ref{lemma:LR_developed}, enabling simple computations:
\begin{flalign}\label{eq:proof:lemma:LR_martingale_1}
    L_{d+1} &  = \sum_{\tau \in \cX_{d+1}} \sum_{\substack{\sigma \in \cS(\tau,T) \\ {\sigma}' \in \cS({\tau},T')} }\prod_{u \in \cV_{d}({\tau})}\Psi\left(c_{{\tau}}(u),c_T({\sigma}(u)),c_{T'}({\sigma}'(u))\right) \nonumber \\
     & = \sum_{\tau_0 \in \cX_{d}} \sum_{\substack{\sigma \in \cS({\tau_0},p_d(T)) \\ \sigma' \in \cS(\tau_0,p_d(T'))} }\prod_{u \in \cV_{d-1}(\tau_0)}\Psi\left(c_{\tau_0}(u),c_{p_d(T)}(\sigma(u)),c_{p_d(T')}(\sigma'(u))\right) \nonumber \\
     & \quad \times \prod_{v \in \cL_d(\tau_0)} \sum_{k = 0}^{c_{T}(\sigma(v)) \wedge c_{T'}(\sigma'(v))} \frac{[c_{T}(\sigma(v))]! \cdot [c_{T'}(\sigma'(v))]!}{[c_{T}(\sigma(v))-k]! \cdot [c_{T'}(\sigma'(v))-k]!} \Psi(k,c_{T}(\sigma(v)),c_{T'}(\sigma'(v))) \, .
\end{flalign}
The last product in the last line of \eqref{eq:proof:lemma:LR_martingale_1} is independent from $\cF_d$. Moreover, under $\dPl_{\infty}$, all terms in the last product are independent, the $c_{T}(v)$ and $c_{T'}(v)$ being i.i.d. $\Poi(\lambda)$ random variables. By definition of $\Psi$ \eqref{eq:lemma:LR_rec_psi}, each term is of the form 
$$ X(c,c') := \sum_{k=0}^{c \wedge c'}\frac{\p_{\lambda s}(k)\p_{\lambda (1-s)}(c-k)\p_{\lambda(1-s)}(c'-k)}{\p_\lambda(c)\p_\lambda(c')}, $$
where $(c,c')$ are i.i.d. $\Poi(\lambda)$ variables. It is then straightaway to check that $\dE[X(c,c')]=1$. Since the second line in \eqref{eq:proof:lemma:LR_martingale_1} is $\cF_d-$measurable, taking the expectation conditionally to $\cF_d$ in \eqref{eq:proof:lemma:LR_martingale_1} entails
\begin{flalign*}
    \dE[L_{d+1} | \cF_d ] 
     & = \sum_{\tau_0 \in \cX_{d}} \sum_{\substack{\sigma \in \cS({\tau_0},p_d(T)) \\ \sigma' \in \cS(\tau_0,p_d(T'))} }\prod_{u \in \cV_{d-1}(\tau_0)}\Psi\left(c_{\tau_0}(u),c_{p_d(T)}(\sigma(u)),c_{p_d(T')}(\sigma'(u))\right) \times 1\\
     & = L_{d},
\end{flalign*}
hence the desired martingale property.
\end{proof}

The martingale property established in Proposition \ref{prop:LR_martingale} has several interesting corollaries, that will be developed in the rest of the paper. A first one -- which may be the most natural one could think of -- is to consider the \emph{almost sure convergence} of $(L_d)_{d \geq 0}$. The martingale being non-negative, we can now consider the martingale $\dPl_{\infty}-$almost sure limit $L_{\infty}$, and define $\ell :=\dEl_{\infty}\left[ L_{\infty} \right]$. Using the recursive formula \eqref{eq:lemma:LR_rec} of Lemma \ref{lemma:LR_rec}, conditioning on the root degrees $c$ and $c'$, and taking the limit $d \to \infty$ gives the equality in distribution 
\begin{equation}\label{eq:l_infty_1}
    L_\infty \overset{(d)}{=} \sum_{k=0}^{c \land c'} \Psi(k,c,c') \sum_{\sigma\in\cS_c,\sigma'\in\cS_{c'}}\prod_{u=1}^k L_{\infty, (\sigma(u),\sigma'(u))}\, ,
\end{equation} where $c,c'$ are i.i.d. $\Poi(\lambda)$, and the $(L_{\infty,v,v'})_{v \in [c], v' \in [c']}$ are identically distributed as $L_\infty$, and such that $L_{\infty,v,v'}$ and $L_{\infty,w,w'}$ are independent when $v \neq w$ and $v' \neq w'$. Taking the expectation in \eqref{eq:l_infty_1} yields 
\begin{flalign}\label{eq:fixed_point_l}
\ell & = \dE_{c,c'} \left[ \sum_{k \geq 0} \one_{c \geq k}  \one_{c' \geq k} \Psi(k,c,c') \frac{c! \cdot c'!}{(c-k)! \cdot (c'-k)!} \ell^k \right] \nonumber \\
& = \sum_{k \geq 0} \sum_{\substack{c, c' \geq 0}} \p_\lambda(c) \p_\lambda(c')  \one_{c \geq k}  \one_{c' \geq k} \frac{\p_{\lambda s}(k)\p_{\lambda (1-s)}(c-k)\p_{\lambda(1-s)}(c'-k)}{\p_\lambda(c)\p_\lambda(c')} \ell^k \nonumber \\
& = \sum_{k \geq 0} \p_{\lambda s}(k) \ell^k  \sum_{c \geq 0} \one_{c \geq k}  \p_{\lambda (1-s)}(c-k) \sum_{c' \geq 0} \one_{c' \geq k} \p_{\lambda(1-s)}(c'-k) = \sum_{k \geq 0}  \p_{\lambda s}(k) \ell^k \, .
\end{flalign} 
Hence $\dEl_\infty[L_\infty]$ verifies the previous fixed point equation \eqref{eq:fixed_point_l}, which is also (!) the fixed point equation for the extinction probability $\pext(\lambda s)$ of a Galton-Watson branching process with offspring distribution $\Poi(\lambda s)$.  

The question of knowing whether $\dEl_\infty[L_\infty]=1$ when $\lambda s > 1$ is crucial for feasibility of one-sided detection, by Theorem \ref{theorem:main_result_TREES}. For $\lambda s \leq 1$, the only solution of \eqref{eq:fixed_point_l} is $\ell=1$, hence condition $(iv)$ in Theorem \ref{theorem:main_result_TREES} is never satisfied when $\lambda s \leq 1$, which is in line with the results of \cite{ganassali2021impossibility} showing that one-sided partial\footnote{\cite{ganassali2021impossibility} shows that even partial recovery is impossible when $\lambda s \leq 1$.} recovery is impossible when $\lambda s \leq 1$.  
For $\lambda s>1$, the equation \eqref{eq:fixed_point_l} however admits a non-trivial solution $\pext(\lambda s) \in (0,1)$. 

We are now ready to prove Theorem \ref{theorem:main_result_TREES}, which is the object of next section. 

\subsection{Proof of Theorem \ref{theorem:main_result_TREES}}\label{subsection:proof_th1}
Recall that the $\KL-$divergence is obtained from $L_d$ as follows:
\begin{equation}\label{eq:def_KL}
\KL_d=\KL(\dPls_{d}\Vert\dPl_{d})= \dEls_{d} \left[ \log(L_d) \right] = \dEls_{\infty} \left[ \log(L_d) \right]\, .
\end{equation}

We start with a clarifying remark.
\begin{remark}[On condition $(iv)$]\label{rem:condition4}
Let us give a proof of the claimed equivalence inside condition $(iv)$, that is a proof of the fact that $(L_d)_d$ is not uniformly integrable (u.i. hereafter) iff $\dEl_\infty[L_\infty]<1$. First, recall that uniform integrability of a martingale is equivalent to its $L^1$ and almost sure convergence. We already know that $L_d \underset{d \to \infty}{\longrightarrow} L_\infty$ a.s. One one hand, if $\dEl_\infty[L_\infty]<1$ then $(L_d)_d$ does not converge in $L^1$, hence is not u.i.

One the other hand, let us assume that $\dEl_\infty[L_\infty] \geq 1$, then $\dEl_\infty[L_\infty] = 1$ by the fixed point equation \eqref{eq:fixed_point_l}. Since $L_d \underset{d \to \infty}{\longrightarrow} L_\infty$ a.s. and $\dEl_d[L_d] = 1 \underset{d \to \infty}{\longrightarrow} 1 = \dEl_\infty[L_\infty]$, we can apply Scheffé's Lemma to conclude that $(L_d)_d$ converges in $L^1$ and is thus u.i. 
\end{remark}

We can now dive into the proof.

\begin{proof}[Proof of Theorem \ref{theorem:main_result_TREES}]
\, 

\proofstep{Step 1: proof of $(iii) \implies (iv)$} 
Assume $(iii)$, i.e. that there exists a sequence $(a_d)_d$ such that $a_d \underset{d \to \infty}{\longrightarrow} \infty$, $\dPl_d(L_d > a_d) \underset{d \to \infty}{\longrightarrow} 0$ and $\liminf_d \dPls_d(L_d > a_d) =: \beta >0$. By definition, $(L_d)_d$ is u.i. if and only if 
$$ \lim_{K \to \infty} \left( \sup_d \dEl_d[|L_d| \one_{L_d\geq K}]\right) = \lim_{K \to \infty} \left( \sup_d \dPls_d(L_d\geq K)\right) = 0 \, .$$
For all $K>0$ there exists $d_K$ such that $a_d >K$ for all $d \geq d_K$, hence $\sup_d \dPls_d(L_d\geq K) \geq \beta$. This contradicts the previous condition and thus $(L_d)_d$ is not u.i. \\

\proofstep{Step 2: proof of $(iv) \implies (iii)$} 
Assume $(iv)$, that is that $\dEl_\infty[L_\infty]<1$. Let $\gamma := 1 -\dEl_\infty[L_\infty]$. Let $\cC$ be the set of continuity points of $L_\infty$ under $\dPl_\infty$. 
For all $a \in \cC$ and $d>0$, one has:
\begin{equation}\label{eq:4implies3_1}
1 = \dPls_d(L_d >a) + \dEl_d[L_d \one_{L_d \leq a}] 
\end{equation}
and since $a \in \cC$, by dominated convergence one has 
\begin{equation}\label{eq:4implies3_2}
    \dEl_d[L_d \one_{L_d \leq a}] \underset{d \to \infty}{\longrightarrow} \dEl_\infty[L_\infty \one_{L_\infty \leq a}] \leq \dEl_\infty[L_\infty] \, .
\end{equation} Putting together \eqref{eq:4implies3_1} and \eqref{eq:4implies3_2} gives that for all $a \in \cC$
\begin{equation}\label{eq:4implies3_3}
    \liminf_d \dPls_d(L_d>a) \geq 1 - \dEl_\infty[L_\infty] = \gamma \, .
\end{equation}

We will now build a suitable sequence $(a_d)_d$ for proving assumption $(iii)$. For all $a \geq 0$, define 
\begin{equation}\label{eq:4implies3_4}
d(a) := \inf\set{k \in \dN, \, \forall d' \geq k, \, \dPls_{d'}(L_{d'}>a) \geq \gamma/2} - 1\, .
\end{equation}
Note that $d(a)$ is finite since \eqref{eq:4implies3_3} holds for all $a \in \cC$ and $\cC$ is not upper-bounded (its complementary in $\dR_+$ being at most countable). By definition, the map $a \mapsto d(a)$ is non-increasing from $\dR_+$ to $\dN$. Now define for all $d \in \dN$

\begin{equation}\label{eq:4implies3_5}
a_d := \sup\set{a' \in \dR_+, \, d(a') \leq d} \, .
\end{equation} Note that $a_d$ is well defined since $d(0)=0$ and $a_d < \infty$ for all $d \in \dN$ (otherwise, we would have $\dPls_{d}(L_{d} = \infty) \geq \gamma/2$ for $d$ large enough, which is absurd since $L_{d} < \infty$ almost surely). 

Let us check that $a_d  \underset{d \to \infty}{\longrightarrow} \infty$. The sequence $(a_d)_d$ is non-decreasing by definition, and if $(a_d)_d$ is bounded by $A$, then for all $a'>A$, $d(a')=\infty$, which is absurd by the above. Hence $a_d  \underset{d \to \infty}{\longrightarrow} \infty$.

By definition, for all $d \in \dN$ we have $\dPls_{d}(L_{d} > a_d) \geq \gamma/2$, and Markov's inequality yields $\dPl_{d}(L_{d} > a_d) \leq \frac{1}{a_d} \underset{d \to \infty}{\longrightarrow} 0$. Hence $(iii)$ is proved.\\

\proofstep{Step 3: proof of $(i) \iff (iii)$} 
First note that $(iii)$ trivially implies $(i)$ by considering the tests $\cT_d := \one_{L_d > a_d}$ which are one-sided by definition. The remaining implication to prove is $(i) \implies (iii)$. Assume that there exists tests $\cU_d$ for all $d \in \dN$ achieving one-sided detection, i.e. verifying
$$
\alpha_d := \dPl_d(\cU_d = 1) \underset{d \to \infty}{\longrightarrow} 0 \quad \mbox{and} \quad
\beta := \liminf_d \dPls_d(\cU_d = 1)>0 \, .
$$

Neyman-Person's Lemma gives that for all $d \in \dN$, there exists $b_d >0$ and $\eps_d >0$ such that the test 
$$
\cT_d := 
\begin{cases}
1 & \mbox{if } L_d>b_d \\
\xi_d & \mbox{if } L_d = b_d \\
0 & \mbox{if } L_d < b_d,
\end{cases}
$$ where the $\xi_d$ are independent Bernoulli variables of parameters $\eps_d$, verifies
\begin{equation}\label{eq:1implies3_1}
    \dPl_d(\cT_d = 1) = \alpha_d \quad \mbox{and} \quad \liminf_d \dPls_d(\cT_d = 1) \geq \beta \, .
\end{equation}

\begin{itemize}
    \item {if $b_d \underset{d \to \infty}{\longrightarrow} \infty$:} define $a_d := b_d-1$. The above implies in particular that $$\liminf_d \dPls_d(L_d > a_d) \geq \liminf_d\dPls_d(\cT_d = 1) \geq \beta, $$ and Markov's inequality gives $$ \dPl_d(L_d > a_d) \leq \frac{1}{b_d-1} \underset{d \to \infty}{\longrightarrow} 0 \, .$$ 
    Hence, $(iii)$ holds.
    
    \item {if $\limsup_d b_d < \infty$:} up to extraction we may assume that $b_d \leq M$. Portmanteau's theorem gives that 
    $$ \dPl_\infty(L_\infty >M) \leq \liminf_d \dPl_d(L_d >M) \leq \liminf_d \dPl_d(L_d >a_d) = 0 \, .$$ Hence the limit martingale $L_\infty$ has a compact support in $[0,M]$ under $\dPl_\infty$. By dominated convergence, one has
    $$ \dEl_d[(M-L_d)_+] \underset{d \to \infty}{\longrightarrow} \dEl_\infty[(M-L_\infty)_+] = M - \dEl_\infty[L_\infty] \, .  $$ Moreover, since $x \mapsto (M-x)_+$ is convex and $(L_d)_d$ is a martingale, the left hand side in the above is non-decreasing in $d$, and thus for all $d \in \dN$, 
    
    \begin{equation}\label{eq:1implies3_2}
    \dEl_d[(M-L_d)_+] \leq M - \dEl_\infty[L_\infty] \, . 
\end{equation}  
    
    Note that $(L_d)_d$ cannot be bounded $\dPl_d-$a.s. by this constant $M$, otherwise one would have $\dPls_d(\cT_d = 1) \leq M \cdot \dPl_d(\cT_d = 1) \underset{d \to \infty}{\longrightarrow} 0$ and \eqref{eq:1implies3_1} would fail. Hence there exists $d$ such that $\dPl_d(L_d>M)>0$. For such a $d$, we hence have that $\dEl_d[(M-L_d)_+] > \dEl_d[(M-L_d)]= M-1$, which together with \eqref{eq:1implies3_2} gives
    $$ M-1 < \dEl_d[(M-L_d)_+] \leq M - \dEl_\infty[L_\infty], $$ which in turn entails $$\dEl_\infty[L_\infty] < 1 \, .$$
    Hence $(iv)$ holds, and $(iii)$ also, by step 2.
\end{itemize}

\proofstep{Step 4: proof of $(v) \implies (i)$}
Note that $(v)$ straightaway implies $(i)$ by considering the tests $$\cT_d := \one_{L_d > \exp\left( a (\lambda s)^d\right)}$$ for $a>0$ small enough. This test has positive asymptotic power (at least $1-\pext(\lambda s)$ by assumption), and has vanishing type I error, since $\lambda s>1$, by Markov's inequality.  \\

\proofstep{Step 5: proof of $(i) \implies (ii)$}\footnote{This implication, namely that the $\KL-$divergence diverges when there exists a one-sided test, is a very general result which is not specific to our context. The last implication $(ii) \implies (v)$ however, strongly relies upon the structure of the problem, see Step 6 hereafter.}

Assume $(i)$, that is existence of a one-sided test. Then for every $d \geq 0$ there is an event $A_d \subset \cX_d^2$ such that $\dPl_d(A_d) \underset{d \to\infty}{\longrightarrow} 0$ and $\beta' := \liminf_{d \to\infty} \dPls_d(A_d) >0$. Elementary properties of the Kullback-Leibler divergence entail
\begin{align*}
\KL_d & \geq \dPls_d(A_d) \log\frac{\dPls_d(A_d)}{\dPl_d(A_d)}  + (1-\dPls_d(A_d)) \log\frac{1-\dPls_d(A_d)}{1-\dPl_d(A_d)} \\
& = - \dPls_d(A_d)\log \dPl_d(A_d)   + \dPls_d(A_d) \log \dPls_d(A_d) \\ & \quad \quad + (1-\dPls_d(A_d))\log(1-\dPls_d(A_d)) - \underbrace{(1-\dPls_d(A_d))\log(1-\dPl_d(A_d))}_{\leq 0} \\
& \geq - \dPls_d(A_d)\log \dPl_d(A_d) + g(\dPls_d(A_d)) \, ,
\end{align*} 
where for $x \in [0,1]$, $g$ is defined by $g(x) :=  x \log(x) + (1-x) \log(1-x)$. Function $g$ is minimal at $x=1/2$ and $g(1/2)=-\log(2)$, which gives the final bound $$\liminf_{d \to\infty}  \KL_d \geq \beta' \liminf_{d \to\infty}(  - \log \dPl_d(A_d) )- \log 2 = + \infty \, .$$ Hence, $(ii)$ holds.\\

\proofstep{Step 6: proof of $(ii) \implies (v)$}
Assume $(ii)$, that is $\KL_d = \dEls_{\infty} \left[\log L_d\right] \underset{d \to \infty}{\longrightarrow} +\infty$. Recall that $\lambda s>1$.  Under model $\dPls_{\infty}$, recall that $\tau^*$ denotes the intersection tree. Let us define $$W:=\lim_{d\to\infty} \card{\cL_d(\tau^*)} (\lambda s)^{-d} \, .$$ The random variable $W$ is defined as an almost sure limit, which exists from general branching process theory, $(\card{\cL_d(\tau^*)} (\lambda s)^{-d})_{d \geq 0}$ begin a non-negative martingale. 

Under $\dPls_{\infty}$, on the event $\cA_\infty$ that $\tau^*$ survives, which has strictly positive probability for $\lambda s>1$, it holds that $W>0$. In addition, consider $\sigma_*$ (resp. $\sigma'_*$) be the natural injection from $\tau^*$ to $T$ (resp. to $T'$) -- such natural injections have to exist in the correlated model.

Let $d,m$ be two integers. In view of the explicit formula \eqref{eq:lemma:LR_developed} of Lemma \ref{lemma:LR_developed}, we have the lower bound on event $\cA_\infty$:
\begin{flalign*}
L_{d+m}(T,T')& \geq \prod_{u \in \cV_{d-1}(\tau^*)}\Psi(c_{\tau^*}(u),c_{T}(\sigma_*(u)),c_{T'}(\sigma'_*(u))\prod_{v \in \cL_d (\tau^*)}L_m(T_{\sigma_*(v)},T'_{\sigma'_*(v)})
\\
&\geq \prod_{u \in \cV_{d-1}(\tau^*)}\Psi(c_{\tau^*}(u),c_{T}(\sigma_*(u)),c_{T'}(\sigma'_*(u)) \times e^{\card{\cL_d(\tau^*)}[\dEls_{\infty} \left[\log L_m\right]-o_d(1)]} ,
\end{flalign*} where the second line is obtained by the law of large numbers, which can be applied since $\cL_d(\tau^*) \underset{d \to \infty}{\longrightarrow} + \infty$ on event $\cA_\infty$.

For $d$ large, here again by the law of large numbers, the first product is with high probability lower-bounded by $e^{C W (\lambda s)^d}$ for some fixed constant $C$ which may be negative. Choosing $m$ of order 1 but sufficiently large, since by assumption $\dEls_{\infty} \left[\log L_m\right] \underset{m \to \infty}{\longrightarrow} +\infty$, we can ensure that the second factor is larger than $e^{C' W (\lambda s)^d}$ for some arbitrary $C'$. Taking $C'>0$ large enough (namely $C'>-C$) ensures that, on the event $\cA_\infty$ that $\tau^*$ survives, $(\lambda s)^{-d} \log L_d$ is lower-bounded $C'+C>0$, which implies $(v)$, since $\dPls_\infty(\cA_\infty) = 1-\pext(\lambda s) $.\\

This last step concludes the proof of Theorem \ref{theorem:main_result_TREES}.
\end{proof}
We end this section taking a detour, introducing a Markov transition semi-group on trees that arises naturally in our study.
\subsection{A Markov transition kernel on trees}
We can take a dynamic view on the correlated model: the joint distribution of the pair of trees $(t,t')$ under $\dPls_{\infty}$ will be, up to relabeling, interpreted as the joint distribution of $(X_0,X_T)$, where $X_0$ is the initial state of a Markov process, distributed according to its stationary distribution $\GWl_{d}$, and $X_T$ is its state at time $T$. The time parameter $T$ is in one-to-one correspondence with the correlation parameter $s$ of our model, namely
\begin{equation*}
    T=-\log(s) \, .
\end{equation*}

For $d \geq 0$, we define $\mathfrak{M}_d$ the linear operator indexed on trees of $\cX_{d}$, defined as follows:
\begin{equation}\label{eq:def_M}
	\mathfrak{M}_d(t,t') := \frac{\dPls_{d}(t,t')}{\dPl_{d}(t)} \, .
\end{equation}
$\mathfrak{M}_d$ is identified to the \emph{transition kernel} of the Markov chain with transitions denoted by $$t \overset{\lambda,s}{\longrightarrow} t',$$  where $t'$ is obtained from $t$ from the following two-step procedure:
\begin{itemize}
\item[$1.$] Extract $\tau$, a $s-$subsampling of $t$;
\item[$2.$] Draw a $(\lambda,s)-$augmentation of $\tau$ at depth $d$, that is $t' \sim \Augls_d(\tau)$. 
\end{itemize}
We next denote $\mathfrak{M}_d(s) := \mathfrak{M}_d$  to emphasize its dependence on $s$. See figure \ref{fig:transition_markov} for an illustration.

\begin{figure}[H]
    \centering
    \includegraphics[scale=0.42]{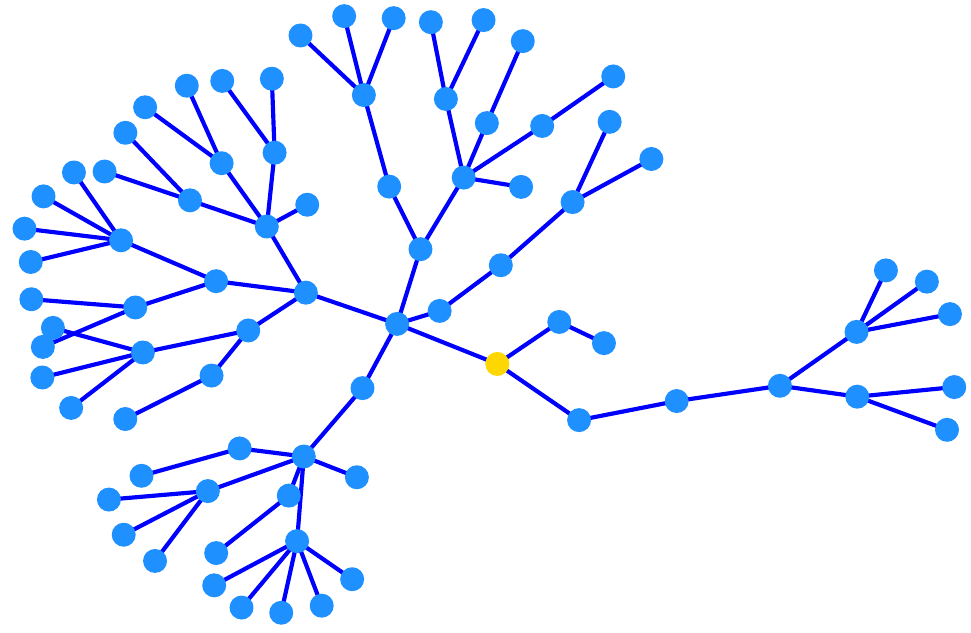}
    \hspace{0.06cm}
    \includegraphics[scale=0.42]{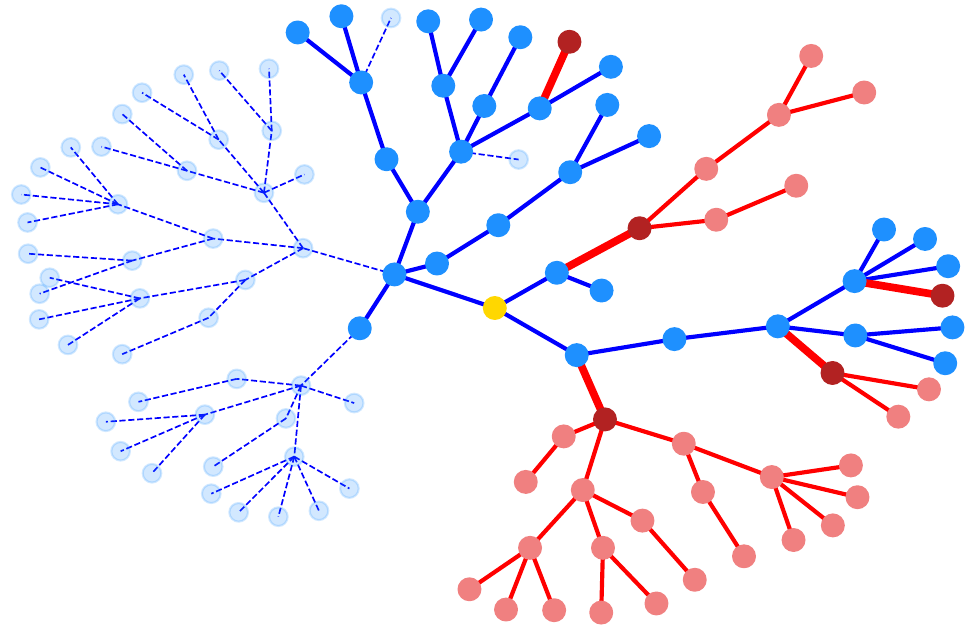}
    \caption{Example of a transition described hereabove, with $\lambda = 1.85$, $s=0.85$, at depth $d=5$. The original tree $t$ is drawn on the left. On the right, $t'$ is obtained as follows: first extracting a $s-$subsampling $\tau$ of $t$ (dashed blue edges are deleted), and drawing a $(\lambda,s)-$augmentation of $\tau$ -- first attaching new children to all vertices of $\tau$ (dark red nodes with thick edges), and attaching new Galton-Watson trees to these new children (light red nodes with standard edges). Labels are not shown.}
    \label{fig:transition_markov}
\end{figure}

A remarkable property of this kernel is the following semi-group structure:

\begin{proposition}[Consistency of kernels $\mathfrak{M}_d(s)$]\label{prop:consistency}
Let $\lambda >0$ and $s, s' \in [0,1]$. Then, for all $n \geq 1$,
\begin{equation}\label{eq:prop:consistency}
	\mathfrak{M}_d(s) \mathfrak{M}_d(s') = \mathfrak{M}_d(s') \mathfrak{M}_d(s) = \mathfrak{M}_d(ss').
\end{equation}
\end{proposition}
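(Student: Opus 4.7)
The natural strategy is to realize the composition $M_n(s) M_n(s')$ as the marginal of a three-tree coupling. Draw $\bt_0\sim\GW_{\lambda,n}$, generate $\bt_1$ from $\bt_0$ using the three-step procedure defining $M_n(s)$ -- with explicit intermediate intersection subtree $\btau_{01}$, the $s$-subsampling of $\bt_0$ -- and then generate $\bt_2$ from $\bt_1$ using $M_n(s')$, with intermediate intersection subtree $\btau_{12}$, the $s'$-subsampling of $\bt_1$. By the Markov property of the coupling, the conditional law of $\bt_2$ given $\bt_0$ is exactly $(M_n(s) M_n(s'))(\bt_0,\cdot)$, so proving $M_n(s) M_n(s') = M_n(ss')$ reduces to identifying the joint law of $(\bt_0, \bt_2)$ with the distribution $\dP_{1,n}$ associated with correlation parameter $ss'$ -- i.e.\ exhibiting a common intersection tree $\btau_{02}\sim\GW_{\lambda ss',n}$ such that, up to uniform relabelings, $\bt_0$ and $\bt_2$ are conditionally independent $(\lambda, ss')$-augmentations of $\btau_{02}$.

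The first step is the construction of $\btau_{02}$ inside $\bt_0$. Using the canonical embedding $\btau_{01}\hookrightarrow\bt_0$ from the first subsampling and the canonical embedding $\btau_{01}\hookrightarrow\bt_1$ as the ``old'' part of $\btau_{01}^+$ (modulo the uniform relabeling performed in step 3 of $M_n(s)$), I define $\btau_{02}$ as the subtree of $\bt_0$ made of those edges whose image in $\bt_1$ belongs to $\btau_{12}$. The essential elementary observation is that the ancestors in $\bt_1$ of an old edge are themselves old edges: the parent endpoint of an old edge lies in $\btau_{01}$, hence so does its own parent, so the connecting edge is also old. Consequently the connectivity-to-root requirement in the second subsampling introduces no additional constraint, and each edge of $\bt_0$ is retained in $\btau_{02}$ independently with probability $ss'$, since the coin flips of the first subsampling (applied to $\bt_0$) and of the second subsampling (applied to $\bt_1$) are globally independent. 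By Poisson thinning of the $\Poi(\lambda)$ offspring of $\bt_0$ this gives $\btau_{02}\sim\GW_{\lambda ss',n}$, and by Poisson splitting, $\bt_0$ conditionally on $\btau_{02}$ is a $(\lambda,ss')$-augmentation of $\btau_{02}$ up to uniform relabeling.

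The main obstacle is the analogous statement for $\bt_2$. At a node $v$ of $\btau_{02}$ viewed inside $\bt_2$, the ``extra'' offspring decompose into two independent sources: $(a)$ children of $v$ in $\btau_{12}\setminus\btau_{02}$, namely new children added by the augmentation $\btau_{01}\to\btau_{01}^+$ (count $\Poi(\lambda(1-s))$) that survived the $s'$-subsampling of $\bt_1$ (probability $s'$ each), carrying the subtrees they received at creation; and $(b)$ children of $v$ in $\bt_2\setminus\btau_{12}$, added by the final augmentation $\btau_{12}\to\bt_2$, with count $\Poi(\lambda(1-s'))$ and fresh $\GW_\lambda$ subtrees. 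Poisson superposition yields a total of $\Poi\bigl(\lambda(1-s)s'+\lambda(1-s')\bigr)=\Poi(\lambda(1-ss'))$ extras per node, matching the required augmentation rate. The subtrees of type $(b)$ are $\GW_\lambda$ by definition, and those of type $(a)$ remain $\GW_\lambda$ because the $\GW_\lambda$ subtree hanging off a new child in $\btau_{01}^+$ undergoes one full $M(s')$-transition in going from $\bt_1$ to $\bt_2$, and $\GW_{\lambda,n}$ is stationary under $M_n(s')$ -- this stationarity follows directly from the symmetry of $\dP_{1,n}$ in its two arguments. Conditional independence of $\bt_0$ and $\bt_2$ given $\btau_{02}$ is then immediate since the extras in $\bt_0$ and the extras in $\bt_2$ are built from disjoint independent Poisson sources. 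This identifies the law of $(\bt_0,\bt_2)$ with $\dP_{1,n}$ at correlation parameter $ss'$, yielding $M_n(s) M_n(s') = M_n(ss')$; the commutativity $M_n(s) M_n(s') = M_n(s') M_n(s)$ is then immediate from $ss'=s's$.
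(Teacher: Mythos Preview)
Your proof is correct and follows essentially the same approach as the paper: identify the composed intersection as an $ss'$-subsampling, decompose the extras into the same two Poisson pieces ($\Poi(\lambda(1-s)s')$ from surviving first-stage additions and $\Poi(\lambda(1-s'))$ from the second augmentation), and invoke stationarity of $\GW_\lambda$ under $M_n(s')$ for the type-$(a)$ subtrees. The only cosmetic difference is that the paper argues pointwise from a fixed starting tree $t$ whereas you phrase everything via the symmetric $\dP_1$ coupling with $\bt_0\sim\GW_{\lambda,n}$, which is equivalent since $\GW_{\lambda,n}$ has full support on $\cX_n$.
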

\begin{proof} 
The proof consists in verifying that applying transitions $\mathfrak{M}_d(s)$ and $\mathfrak{M}_d(s')$ successively is equivalent in distribution to applying transition $\mathfrak{M}_d(s s')$. 
To do so, we use a coupling argument. For $t \in \cX_d$, let us apply two transitions $t \overset{\lambda, s}{\longrightarrow} \widetilde{t} \overset{\lambda, s'}{\longrightarrow} t'$ in a coupled fashion. Since the relabelings are all uniform, we can check this consistency forgetting the labels.

\proofstep{Step 1: first transition} 
First extract $\widetilde{\tau}$, a $s-$subsampling of $t$. To each vertex $u$ of $\widetilde{\tau}$ we attach an independent number $\Poi(\lambda (1-s))$ of new children. The set of these new vertices is denoted by $\widetilde{V}^+$. Then, to each vertex $u \in \widetilde{V}^+$ we attach an independent tree $\widetilde{t}_u$ with distribution $\GW_\lambda$. This resulting tree $\widetilde{t}$ is indeed obtained by a transition $t \overset{\lambda, s}{\longrightarrow} \widetilde{t}$.\\

\proofstep{Step 2: second transition}
For the second transition, we sample $t$ from $\widetilde{t}$ as follows:
\begin{itemize}
    \item[$1.$] First, we extract $\tau$ as a $s'-$subsampling of $\widetilde{\tau}$;
    \item[$2.$] To any vertex $u$ of $\tau$, we keep each previous child $v$ of $u$ in tree $\widetilde{t}$ that lied in $\widetilde{V}_+$ independently with probability $s'$, the set of children that are kept is denoted by $V^+_1$;
    \item[$3.$] To any vertex $u$ of $\tau$, we also attach an independent number $\Poi(\lambda (1-s'))$ of new children. The set of these new vertices are referred to as $V^+_2$;
    \item[$4.$] To any vertex $v \in V^+_1$, we attach a tree $t'_v$ to node $v$, where $t'_v$ is obtained from the previous $\widetilde{t}_v$ by a transition $\widetilde{t}_v \overset{\lambda, s'}{\longrightarrow} t'_v$;
    \item[$5.$] To each vertex $w \in V^+_2$ we attach an independent tree $t'_w$ with distribution $\GW_\lambda$.
\end{itemize}

Two simple properties denoted by $(\rPone)$ and $(\rPtwo)$ are key to our result. First, a $s'-$ subsampling of a $s-$ subsampling is distributed as a $(ss')-$ subsampling $(\rPone)$. Second, by elemetary properties of the Poisson distribution, a $s-$subsampling of a $\GWl_d$ tree has distribution $\GWls_d$ $(\rPtwo)$.

Property $(\rPtwo)$ implies that the tree $t'$ obtained after the above five steps can be indeed sampled from a transition $\widetilde{t} \overset{\lambda, s}{\longrightarrow} t'$.

On the other hand, $t'$ has also been obtained from $t$ by the following process: from the initial tree $t$, by step 1. hereabove and $(\rPone)$ we extracted $\tau$ as a $ss'-$subsampling of $t$, and we attached to each vertex of $\tau$ some new children: the sum of two independent $\Poi(\lambda(1-s)s')$ (step 2., children in $V^+_1$) and $\Poi(\lambda(1-s'))$ (step 3. children in $V^+_1$), hence again of Poisson distribution with parameter $\lambda(1-s)s' + \lambda(1-s') = \lambda(1-ss')$. By steps 4. and 5. and $\rPtwo$ the trees attached to every vertex in $V^+ := V^+_1 \cup V^+_2$ are i.i.d. with distribution $\GW_\lambda$, independent of $t$. Hence, $t'$ can also have been obtained from $t$ from a transition $t \overset{\lambda, ss'}{\longrightarrow} t'$.

This proves the desired consistency property \eqref{eq:prop:consistency}. 
\end{proof}

\section{Sufficient conditions based on Kullback-Leibler divergence}\label{section:KL}
In view of Theorem \ref{theorem:main_result_TREES}, we are interested in in finding sufficient conditions on $\lambda$ and $s$ for $\KL_d$ to diverge for large $d$. This is the object of the following short section. We refer to \eqref{eq:def_KL} for the definition of $\KL_d$. 

We first start by the following easy Lemma
\begin{lemma}\label{lemma:KL_strict_croissante}
    The sequence $\KL_d$ is strictly increasing with $d$. In particular $\KL_d$ has a (possibly infinite) limit when $d \to \infty$.
\end{lemma}
\begin{proof}
\begin{flalign*}
    \KL_{d+1} &= \dEls_{\infty}[\log (L_{d+1})]
    = \dEl_{\infty}[L_{d+1}\log (L_{d+1})]= \dEl_{\infty}[\dE[L_{d+1}\log (L_{d+1}) | \cF_d]] \\ & \overset{(a)}{>} \dEl_{\infty}[\dE[L_{d+1} | \cF_d] \log (\dE[L_{d+1} | \cF_d]) ]  \overset{(b)}{=} \dEl_{\infty}[L_{d} \log (L_{d}) ] = \KL_d \, .
\end{flalign*} We applied Jensen's inequality in $(a)$: the inequality is strict since $x \to x\log(x)$ is strictly convex, and since $L_{d+1}=L_d$ does not holds almost surely. In $(b)$, we used Proposition \ref{prop:LR_martingale}, i.e. that $(L_d)_{d \geq 0}$ is a $\cF_d-$martingale under $\dPl_{\infty}$.
\end{proof}

\begin{remark}[Corroborating impossibility results]
Although the following remark will not be used directly in the sequel, it remains of interest in our study. Recall that the entropy $\Ent(\p)$ of a probability distribution $\p: \cX \to [0,1]$ is defined by 
    $$\Ent(\p) := - \dE_{X \sim \p}[\p(X)] \, .$$
    
    We can easily show that $\KL_d$  satisfies
    $$\KL_d \leq \Ent (\GW_{\lambda s,d}) \, .$$

This is a consequence of data processing inequality. Note that models $\dPls_d$ and $\dPls_d$ can be described by the channel represented in Figure \ref{fig:channel}.

\begin{figure}[H]
\centering
\begin{tikzpicture}[node distance=1cm]  
\draw (0,0)  node (tautau) {$(\tau,\tau')=(\tau,\tau)$ for $\tau^*\sim \GWls_d$};
\draw (0,-2) node (tautaup) {$(\tau,\tau') \sim \GWls_d \otimes \GWls_d$};
\draw (4.5,-1) node[rectangle,draw,draw opacity=0.5] (center) {$\Augls_d(\tau) \otimes \Augls_d (\tau')$};
\draw (9,0)  node (Pls) {$(T,T')\sim \dPls_d$};
\draw (9,-2) node (Pl) {$(T,T')\sim \dPl_d$};
\draw[->,thick,red] (tautau) -- (center);
\draw[->,thick,blue] (tautaup) -- (center);
\draw[->,thick,red] (center) -- (Pls);
\draw[->,thick,blue] (center) -- (Pl);
\end{tikzpicture} 
\medskip
\caption{Channel structure for $\dPls_d$ and $\dPls_d$.} 
\label{fig:channel}
\end{figure}

Since $\dPls_{d}$ and $\dPl_d$ are obtained applying the same kernel to different input distributions, data processing inequality states that the $\KL-$divergence between the outputs (i.e. $\KL_d$) is less than the $\KL-$divergence between the inputs, which reads
\begin{equation*}
    \sum_{\tau\in\cX_d}\GW_{\lambda s,d}(\tau)\log\left( \frac{\GW_{\lambda s,d}(\tau)}{\GW_{\lambda s,d}(\tau)^2}\right)=\Ent(\GW_{\lambda s,d}).
\end{equation*}

Moreover, note that entropy $\Ent(\GW_{\lambda s,d})$ can be evaluated by the conditional entropy formula as
\begin{equation*}
    \Ent(\GW_{\lambda s,d})=\Ent(\GW_{\lambda s,d-1})+(\lambda s)^{d-1}\Ent(\p_{\lambda s}), 
\end{equation*} which implies that whenever $\lambda s <1$, $\Ent(\GW_{\lambda s,d})$ is uniformly bounded in $d$ and thus so is $\KL_d$ by the above Fact:
\begin{equation*}
    \lim_{d\to\infty} \KL_d \leq \frac{1}{1-\lambda s}\Ent(\p_{\lambda s})<+\infty.
\end{equation*} This again corroborates the results of \cite{ganassali2021impossibility}. In fact, we will next do better, providing a better upper bound for $\KL_d$ giving a looser sufficient condition for $\KL_d$ to remain bounded, see Section \ref{section:hard_phase}.
\end{remark}

We are now ready to give some insights for condition $(ii)$ of Theorem \ref{theorem:main_result_TREES} to hold. First, let us establish the following Lemma:
\begin{lemma}\label{lemma:geo_rec_KL}
For all $d \geq 1$, one has
\begin{equation}\label{eq:lemma:geo_rec_KL}
\KL_{d}\geq \lambda s \KL_{d-1} +\lambda s \left(\log(s/\lambda)  +1\right) +2\lambda (1-s)\log(1-s) \, .
\end{equation}
\end{lemma}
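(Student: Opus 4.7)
The plan is to use the recursive formula \eqref{eq:LR_rec_2} for $L_{d+1}(\bt,\bt')$ and lower-bound it by retaining only a carefully chosen slice of its terms. Under $\dP_1$, write $\bk$ for the root degree of the intersection tree $\btau^*$, so $\bk\sim\Poi(\lambda s)$, and $\bc=\bk+\bc^+$, $\bc'=\bk+\bc'^+$ for the root degrees of $\bt,\bt'$, where $\bc^+,\bc'^+\sim\Poi(\lambda\bar s)$ are independent of each other and of $\bk$. Let $\bsigma^*\in\cS(\bk,\bc)$ and $\bsigma'^*\in\cS(\bk,\bc')$ be the random injections locating the $\bk$ intersection root-children inside $\bt,\bt'$ after uniform relabeling. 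By construction, conditionally on $\bk$ the pairs $(\bt_{\bsigma^*(i)},\bt'_{\bsigma'^*(i)})_{i=1}^{\bk}$ are i.i.d.\ with law $\dP_{1,d}$.

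The key observation is that for every $\pi\in\cS_{\bk}$ the pair $(\bsigma^*\circ\pi,\bsigma'^*\circ\pi)$ lies in $\cS(\bk,\bc)\times\cS(\bk,\bc')$, and a re-indexing $j=\pi(i)$ yields
$$\prod_{i=1}^{\bk}L_d\bigl(\bt_{\bsigma^*(\pi(i))},\bt'_{\bsigma'^*(\pi(i))}\bigr)=\prod_{j=1}^{\bk}L_d\bigl(\bt_{\bsigma^*(j)},\bt'_{\bsigma'^*(j)}\bigr).$$
Hence these $\bk!$ distinct pairs contribute identically to the $k=\bk$ slice of \eqref{eq:LR_rec_2}, giving the almost-sure lower bound
$$L_{d+1}(\bt,\bt')\;\geq\;\bk!\cdot\psi(\bk,\bc,\bc')\prod_{i=1}^{\bk}L_d\bigl(\bt_{\bsigma^*(i)},\bt'_{\bsigma'^*(i)}\bigr).$$
Taking logarithms and using the explicit formula $\psi(k,c,c')=e^{\lambda s}s^{k}\bar s^{c+c'-2k}/(\lambda^{k}k!)$, the $-\log \bk!$ piece inside $\log\psi$ is cancelled exactly by the combinatorial prefactor $\log \bk!$, leaving
$$\log L_{d+1}(\bt,\bt')\;\geq\;\lambda s+\bk\log(s/\lambda)+(\bc+\bc'-2\bk)\log\bar s+\sum_{i=1}^{\bk}\log L_d\bigl(\bt_{\bsigma^*(i)},\bt'_{\bsigma'^*(i)}\bigr).$$
Taking expectation under $\dP_1$, using $\dE_1[\bk]=\lambda s$, $\dE_1[\bc+\bc'-2\bk]=2\lambda\bar s$, and the conditional i.i.d.\ property above (so that Wald's identity evaluates the final sum as $\lambda s\cdot \KL_d$), yields \eqref{eq:lemma:geo_rec_KL}.

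The only delicate point is precisely this $-\log\bk!$ term in $\log\psi$: a naive lower bound keeping only the single pair $(\bsigma^*,\bsigma'^*)$ leaves behind an unwanted $-\dE_1[\log\bk!]\leq 0$ residue that just misses the claimed constant. Bundling together the $\bk!$ permutation-equivalent pairs $(\bsigma^*\circ\pi,\bsigma'^*\circ\pi)$ is what absorbs this entropy factor and makes the bound tight enough to produce a clean geometric recursion for $\KL_d$.
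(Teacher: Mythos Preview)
Your proof is correct and follows essentially the same approach as the paper: both restrict the recursion \eqref{eq:LR_rec_2} to the slice $k=\bk$ (the intersection root degree, which the paper denotes $\bc$), keep the $\bk!$ injection pairs obtained by permuting the intersection children simultaneously on both sides, and use this factor to cancel the $1/\bk!$ in $\psi$ before taking logarithms and expectations. Your exposition is somewhat more explicit about why the $\bk!$ multiplicity is needed to absorb the $-\log\bk!$ term, but the argument is the same.
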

\begin{proof}
Under $\dPls_{\infty}$, let $c$ be the degree of the root in $\tau^*$ and $c+\Delta$ (resp. $c+\Delta')$ the degree of the root in $T$ (resp. in $T'$). In the recursive formula \eqref{eq:lemma:LR_rec} of Lemma \ref{lemma:LR_rec} for $L_d$, setting $k=c$ in the first summation and considering only the $c!$ pairs $(\sigma,\sigma')$ that correctly make the $c$ children of $\tau^*$'s root in $T$ and $T'$ correspond, we obtain the following lower-bound:
\begin{flalign*}
L_d(T,T') &\geq \Psi(c,c+\Delta,c+\Delta') \times c!\times \prod_{u=1}^{c} L_{d-1}(T_u,T'_u) \\
& = e^{\lambda s} (s/\lambda)^{c}(1-s)^{\Delta+\Delta'}  \prod_{u=1}^{c} L_{d-1}(T_u,T'_u) \, .
\end{flalign*}
Taking logarithms and then expectations, we get 
\begin{flalign*}
\KL_d = \dEls_\infty[\log(L_d(T,T'))] & \geq  \lambda s +  \dEls_\infty[c] \cdot  \log(s/\lambda) + \dEls_\infty[\Delta+\Delta'] \cdot \log(1-s) \\
& \quad \quad \quad + \dEls_\infty[c] \cdot \dEls_\infty[\log(L_{d-1}(T,T'))] \, .
\end{flalign*}
since $\dEls_{\infty} \left[c\right]=\lambda s$ and $\dEls_{\infty} \left[ \Delta \right]=\dEls_{\infty} \left[\Delta'\right]=\lambda(1-s)$, the result follows. 
\end{proof}

The recursive inequality \eqref{eq:lemma:geo_rec_KL} of Lemma \ref{lemma:geo_rec_KL} can sometimes be enough to show the divergence of $\KL_d$, provided favorable initial conditions. These initial conditions are introduced in the following
\begin{corollary}\label{cor:suff_cond_KL}
Assume that $\lambda s>1$ and that
\begin{equation}\label{eq:cor:suff_cond_KL}
\KL_1 \geq \frac{1}{\lambda s-1} \left[\lambda s(\log(\lambda/s)-1)-2\lambda(1-s)\log(1-s)\right] \, .
\end{equation}
Then $\KL_d \underset{d \to \infty}{\longrightarrow} +\infty$.
\end{corollary}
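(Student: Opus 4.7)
The plan is to simply iterate the linear recurrence provided by Lemma \ref{lemma:geo_rec_KL}. Writing $\alpha := \lambda s$ and
\[
\beta := \lambda s(\log(s/\lambda) + 1) + 2\lambda(1-s)\log(1-s),
\]
Lemma \ref{lemma:geo_rec_KL} reads $\KL_{d+1} \geq \alpha \KL_d + \beta$ for all $d \geq 1$. Since $\alpha > 1$ by hypothesis, the affine map $x \mapsto \alpha x + \beta$ has a unique fixed point
\[
x^* \;=\; \frac{-\beta}{\alpha - 1} \;=\; \frac{\lambda s (\log(\lambda/s) - 1) - 2\lambda(1-s)\log(1-s)}{\lambda s - 1},
\]
which is precisely the right-hand side of \eqref{eq:cor:suff_cond_KL}.

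Next I would subtract $x^*$ from both sides of the recursion to obtain $\KL_{d+1} - x^* \geq \alpha (\KL_d - x^*)$, and iterate to get $\KL_d - x^* \geq \alpha^{d-1}(\KL_1 - x^*)$ for all $d \geq 1$. The hypothesis \eqref{eq:cor:suff_cond_KL} is exactly $\KL_1 > x^*$, so $\KL_1 - x^* > 0$, and since $\alpha > 1$ the right-hand side diverges to $+\infty$. As $(\KL_d)_d$ is monotone increasing in $d$ (noted right after \eqref{eq:def_KL}), this forces $\KL_\infty = +\infty$.

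There is no real obstacle here: the corollary is just the standard observation that an affine recursion with expansion factor $\alpha > 1$ escapes to $+\infty$ as soon as the initial value lies strictly above the fixed point. The only content of the argument is the identification of the threshold in \eqref{eq:cor:suff_cond_KL} with $-\beta/(\alpha-1)$, which is a direct algebraic manipulation of the constants appearing in \eqref{eq:lemma:geo_rec_KL}.
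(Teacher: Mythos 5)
Your proof is correct and follows essentially the same route as the paper: both iterate the affine recursion of Lemma~\ref{lemma:geo_rec_KL}, using that $\lambda s>1$ makes the map expansive and that \eqref{eq:cor:suff_cond_KL} places $\KL_1$ strictly above the relevant threshold. The only difference is that you subtract the fixed point $x^*$, which makes divergence immediate once $\KL_1>x^*$, while the paper subtracts $\KL_1$ and then separately checks $\KL_2>\KL_1$ via a Jensen argument---a step your reformulation shows is in fact already implied by the strict inequality in \eqref{eq:cor:suff_cond_KL}.
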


\begin{proof}
The proof follows from the recursive inequality \eqref{eq:lemma:geo_rec_KL} of Lemma \ref{lemma:geo_rec_KL}. Indeed, let us assume \eqref{eq:cor:suff_cond_KL}. Then, for all $d \geq 1$, 
\begin{flalign*}
    \KL_{d+1} - \lambda s \KL_{d} & \geq \lambda s \left(\log(s/\lambda)  +1\right) +2\lambda (1-s)\log(1-s) \\
    & \overset{(a)}{\geq} -(\lambda s - 1) \KL_1 ,
\end{flalign*} where we used inequality \eqref{eq:cor:suff_cond_KL} and condition $\lambda s > 1$ in $(a)$. The latter implies that for all $d \geq 1$, $\KL_{d+1}-\KL_1 \geq \lambda s(\KL_{d}-\KL_1)$,
hence $\KL_{d}$ diverges geometrically to infinity, since we have $\KL_2>\KL_1$ by Lemma \ref{lemma:KL_strict_croissante}.
\end{proof}

Sufficient conditions on $\lambda$ and $s$ can be derived when enforcing the right-hand side of \ref{eq:cor:suff_cond_KL} to non-positive, that is studying the sign of
\begin{equation*}
    f_\lambda(s) := s(\log(\lambda)- \log(s)-1)-2(1-s)\log(1-s) \, ,
\end{equation*} with the continuous extensions $f_\lambda(0)=0$ and $f_\lambda(1)=\log(\lambda)-1$.

Studying the derivative of $f'_\lambda = \log(\lambda) - \log(s) + 2\log(1-s)$, one can easily show that $f_\lambda$ is strictly increasing from $s=0$ to $s=s_0(\lambda) \in (0,1)$ where $s_0(\lambda)$ is given by 
$$s_0(\lambda):= 1 - \frac{\sqrt{4\lambda + 1}-1}{2\lambda}, $$ 
and then strictly decreases from $s=s_0(\lambda)$ to $s=1$. Since $f_\lambda(0)=0$, this shows that there is at most a single root of $f_\lambda$ in the open interval $(0,1)$. 

Since $f_\lambda(1)=\log(\lambda)-1$, there is a root of $f_\lambda$ in $(0,1)$ is and only if $\lambda < e$. In this case we denote by $s^*(\lambda)$ this root. We also remark that 
$$f_\lambda(1/\lambda)= (2 \log(\lambda)-1)/\lambda - 2(1-1/\lambda)\log(1-1/\lambda) ,   $$
and studying the above expression, there exists a value $\lambda^*>1$ such that $f_\lambda(1/\lambda) \leq 0$ for all $\lambda \in (1,\lambda^*]$. Numerically, $\lambda^* \sim 1.17789383...$ This implies that for $\lambda \in (1,\lambda^*]$, the previously defined root $s^*(\lambda)$ satisfies $s^*(\lambda) \leq 1/\lambda$, and the condition $\lambda s>1$ of Theorem \ref{theorem:main_result_TREES} becomes the only one to be sufficient for existence of one-sided tests. 

The above results, together with Theorem \ref{theorem:main_result_TREES}, prove the main Theorem of this section. One-sided detectability always refers for the tree correlation detection problem.

\begin{theorem}\label{theorem:suff_cond_KL}
We have the following:
\begin{itemize}
    \item if $1 < \lambda \leq \lambda^*$, where $\lambda^* \sim 1.178...$ (see above), then one-sided detectability holds as soon as $\lambda s >1$.

    \item $\lambda^* < \lambda < e$, then one-sided detectability holds as soon as $s^*(\lambda) \leq s \leq 1$, where $s^*(\lambda)$ is defined by 
    \begin{equation*}
        s^*(\lambda):=\sup\{s\in[0,1]: s(\log(\lambda)- \log(s)-1)-2(1-s)\log(1-s) \geq 0\}
    \end{equation*} and verifies $1/\lambda < s^*(\lambda)<1$.
\end{itemize}
\end{theorem}

\begin{remark}
The result stated in the first point of Theorem \ref{theorem:suff_cond_KL} is similar to those obtained in \cite{Ganassali20a}, however the present derivation is more direct and explicit. Note that anticipating Section \ref{section:graph_matching}, this result states that for small values of $\lambda$, there is no hard phase for sparse graph alignment, and one gets immediately from the impossible phase to the easy phase when crossing the border $\lambda s = 1$. 
\end{remark}

\begin{remark}
    Other sufficient conditions based on asymptotic expansions of $\KL_1$, this time for large $\lambda$, could also be obtained from condition \eqref{eq:cor:suff_cond_KL} of Corollary \ref{cor:suff_cond_KL}. However, the resulting conditions do not appear as sharp as those obtained by the analysis of automorphisms of $\tau^*$, which is the object of the next section.
\end{remark}

\section{Sufficient conditions based on counting automorphisms of Galton-Watson trees}\label{section:autos_GW}
In this Section, we show how counting automorphisms of Galton-Watson trees gives a sufficient condition for the existence of one-sided tests at large $\lambda$, and provide along the way evaluations of this number of automorphisms. 

\subsection{A lower bound on the likelihood ratio}
We work under model $\dPls_{d}$. As in step 6 of the proof of Theorem \ref{theorem:main_result_TREES}, we will consider $\sigma_*$ (resp. $\sigma'_*$) be the natural injection from $\tau^*$ to $T$ (resp. to $T'$) -- such natural injections have to exist in the correlated model. We denote, for each $u \in \cV_{d-1}(\tau^*)$:
\begin{equation}
c_u:=c_{\tau^*}(u),\; \Delta_u:=c_{T}(\sigma_*(u))-c_{\tau^*}(u),\; \Delta'_u:=c_{T'}(\sigma'_*(u))-c_{\tau^*}(u).
\end{equation} We now prove the following
\begin{lemma}\label{lemma:lower_bound_LR}
Under $\dPls_{d}$, we have the following inequality:
\begin{equation}\label{eq:lemma:lower_bound_LR}
L_d(T,T') \geq \card{\Aut(\tau^*)} \prod_{u \in \cV_{d-1}(\tau^*)}e^{\lambda s} (s/\lambda)^{ c_u}(1-s)^{\Delta_u+\Delta'_u} \prod_{v \in \cL_{d-1}(\tau^*)}\binom{c_v+\Delta_v}{c_v}\binom{c_v+\Delta'_v}{c_v},
\end{equation}
where we recall that $\Aut(\tau^*)$ denotes the set of automorphisms of tree $\tau^*$.
\end{lemma}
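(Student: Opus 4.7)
My plan is to prove the bound by induction on $n$, based on the explicit formula from Lemma~\ref{lemma:LR_developed} specialized to $\tau=\btau^*$ together with a carefully chosen family of injection pairs $(\sigma,\sigma')$. For the base case $n=1$, the tree $\btau^*$ consists of its root with $\bc_\varnothing$ leaf children only, so $|\Aut(\btau^*)|=\bc_\varnothing!$; retaining only the $k=\bc_\varnothing$ term in \eqref{eq:LR_rec_1} and simplifying $\psi$ directly recovers the right-hand side of \eqref{eq:lemma:lower_bound_LR}.

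For the inductive step, I would apply the recursion \eqref{eq:LR_rec_2}, keep only the $k=\bc_\varnothing$ term, and lower-bound its inner double sum by retaining pairs of the form $(\sigma,\sigma')=(\alpha,\pi\circ\alpha)$ with $\alpha\in\cS_{\bc_\varnothing}$ arbitrary and $\pi$ ranging over the set $R_{\btau^*}:=\{\pi\in\cS_{\bc_\varnothing}: \btau^*_{\pi(j)}\simeq\btau^*_j\text{ as labeled trees}\,\forall j\}$ of root-level isomorphism-preserving permutations. Reindexing $j=\alpha(i)$ shows that $\prod_i L_{n-1}(\bt_{\sigma(i)},\bt'_{\sigma'(i)})=\prod_j L_{n-1}(\bt_j,\bt'_{\pi(j)})$ does not depend on $\alpha$, yielding a clean $\bc_\varnothing!$ prefactor. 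For each such $\pi$ and each $j$, I fix an isomorphism $\phi_j:\btau^*_j\to\btau^*_{\pi(j)}$ and apply the inductive hypothesis to $L_{n-1}(\bt_j,\bt'_{\pi(j)})$, viewing $\btau^*_j$ as the common intersection subtree embedded canonically in $\bt_j$ and via $\phi_j$ in $\bt'_{\pi(j)}$. The resulting bound contains a factor $|\Aut(\btau^*_j)|$ together with product terms pairing $\bDelta$ at nodes of $\btau^*_j$ with $\bDelta'$ at their $\phi_j$-images. By bijectivity of $\pi$ and of the $\phi_j$'s, these $\phi_j$-dependent products reorganize into $\pi$-independent totals indexed over $\cV_{n-1}(\btau^*)\setminus\{\varnothing\}$ and $\cL_{n-1}(\btau^*)$; summing over $\pi\in R_{\btau^*}$ then brings in the factor $|R_{\btau^*}|$, and the standard decomposition $|\Aut(\btau^*)|=|R_{\btau^*}|\cdot\prod_j|\Aut(\btau^*_j)|$ reconstructs the $|\Aut(\btau^*)|$ prefactor. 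Finally, the $\bc_\varnothing!$ coming from the $\alpha$-sum cancels the $1/\bc_\varnothing!$ in $\psi(\bc_\varnothing,c,c')$ to produce the missing root-level contribution $s^{\bc_\varnothing}\bar{s}^{\bDelta_\varnothing+\bDelta'_\varnothing}/(e^{-\lambda s}\lambda^{\bc_\varnothing})$.

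The main obstacle I anticipate is the formal application of the inductive hypothesis to $L_{n-1}(\bt_j,\bt'_{\pi(j)})$ when $\pi\neq\id$: the common intersection subtree $\btau^*_j\simeq\btau^*_{\pi(j)}$ is not canonically embedded in $\bt'_{\pi(j)}$ under the identity but rather via $\phi_j$, so the induction must be read as a slightly more general statement allowing arbitrary embeddings $(\bsigma^*,\bsigma'^*)$ of the intersection tree (an extension which the same proof automatically yields). Closely related is the delicate combinatorial bookkeeping needed to verify that the $\phi_j$-dependent aggregates $\sum_{j,i}\bDelta'_{\phi_j(i)}$ and $\prod_{j,i\in\cL_{n-2}(\btau^*_j)}\binom{\bc_i+\bDelta'_{\phi_j(i)}}{\bc_i}$ indeed reorganize to $(\pi,\phi_j)$-invariant totals, without which the telescoping of the root-level combinatorial factors into $|\Aut(\btau^*)|$ would not go through.
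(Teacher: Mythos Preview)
Your inductive argument is correct, and the ``main obstacle'' you flag is genuine but you handle it properly: the deterministic inequality \eqref{eq:lemma:lower_bound_LR} indeed holds for any pair of embeddings $(\bsigma^*,\bsigma'^*)$ of the intersection tree, and once this generalized form is taken as the inductive hypothesis, the reindexing via the bijections $\phi_j$ and $\pi$ goes through exactly as you describe.

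The paper, however, does not proceed by induction. It works directly from the developed formula \eqref{eq:lemma:LR_developed}, restricting the outer sum to the equivalence class $\{\tau\in\cX_n:\tau\equiv\btau^*\}$. It then counts separately (a) the number of such relabelings, namely $\bigl(\prod_{i\in\cV_{n-1}(\btau^*)}\bc_i!\bigr)/|\Aut(\btau^*)|$, and (b) for each fixed $\tau\equiv\btau^*$, a lower bound of $|\Aut(\btau^*)|^2\prod_{i\in\cL_{n-1}}\binom{\bc_i+\bDelta_i}{\bc_i}\binom{\bc_i+\bDelta'_i}{\bc_i}$ on the number of injection pairs in $\cS(\tau,\bt)\times\cS(\tau,\bt')$ that hit the true images of $\btau^*$ in $\bt,\bt'$. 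Multiplying these and simplifying $\psi$ yields \eqref{eq:lemma:lower_bound_LR} in one stroke.

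The two routes are dual: the paper lets $|\Aut(\btau^*)|$ emerge globally from the ratio between distinct relabelings and injection pairs, while you rebuild it recursively via $|\Aut(\btau^*)|=|R_{\btau^*}|\prod_j|\Aut(\btau^*_j)|$. The paper's approach is shorter and sidesteps the embedding bookkeeping entirely, since it never compares $\bt_j$ with $\bt'_{\pi(j)}$ for $\pi\neq\id$. Your approach makes the recursive structure of the bound transparent and would generalize more readily if one wanted depth-dependent refinements, at the cost of the extra bijectivity verification you outline.
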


\begin{proof}
With the developed expression \eqref{eq:lemma:LR_developed} of $L_d$ in Lemma \ref{lemma:LR_developed}, we are going to sum only over subtrees $\tau$ that are equal to $\tau^*$, up to some relabeling. We denote this equivalence by $\tau \equiv \tau^*$. For each of these $\tau$, in the second sum (over injections) we are only going to consider pairs $\sigma \in \widetilde{\cS}(\tau,t),\sigma' \in \widetilde{\cS}(\tau,t')$ where $\widetilde{\cS}(\tau,t)$ (resp. $\widetilde{\cS}(\tau,t')$) is the set of injections in $\cS(\tau,t)$ (resp. in $\cS(\tau,t')$) that coincide with $\sigma_*$ and $\sigma'_*$, almost everywhere, but on final leaves at depth $d$. 

We thus have the following lower bound

\begin{flalign}\label{eq:lower_bound_LR_1}
    L_d(T,T') & \geq \sum_{\substack{\tau \in \cX_d \\ \tau \equiv \tau^*}} \sum_{ \substack{\sigma \in \widetilde{\cS}(\tau,t) \\ \sigma' \in \widetilde{\cS}(\tau,t')}} \prod_{u \in \cV_{d-1}(\tau)}\Psi(c_u,c_{u}+\Delta_{u},c_{u}+\Delta'_{u}) \nonumber \\
    & = \left(\sum_{\substack{\tau \in \cX_d \\ \tau \equiv \tau^*}} |\widetilde{\cS}(\tau,t)| \cdot |\widetilde{\cS}(\tau,t')|\right) \cdot \left( \prod_{u \in \cV_{d-1}(\tau^*)}  \frac{1}{c_u !} e^{\lambda s} (s/\lambda)^{ c_u}(1-s)^{\Delta_u+\Delta'_u}  \right) \, .
\end{flalign}
Note that any tree $\tau\in \cX_d$ such that $\tau\equiv \tau^*$ can be described by a relabeling of the form
\begin{equation*}
    \left \lbrace \xi_u \in \cS_{c_u}, u \in \cV_{d-1}(\tau^*)\right \rbrace,
\end{equation*} giving the reordering of the children of each node of $\tau^*$ at depth $d-1$.
Moreover, the number of relabelings $r$ that produce this particular tree $\tau$ is precisely given by $\card{\Aut(\tau^*)}$. Thus the number of trees in the summation \eqref{eq:lower_bound_LR_1} is precisely
\begin{equation}\label{eq:toto}
\card{\{\tau\in \cX_d: \tau \equiv \tau^*\}}= \frac{1}{\card{\Aut(\tau^*)}} \prod_{u \in \cV_{d-1}(\tau^*)}c_u! \, .
\end{equation}

Now, note that for any tree $\tau \equiv \tau^*$, we have 
\begin{equation}\label{eq:tildeS}
|\widetilde{\cS}(\tau,t)| = \card{\Aut(\tau^*)} \prod_{v \in\cL_{d-1}(\tau^*)}\binom{c_v+\Delta_v}{c_v} \quad \mbox{and} \quad 
|\widetilde{\cS}(\tau,t')| = \card{\Aut(\tau^*)} \prod_{v \in\cL_{d-1}(\tau^*)}\binom{c_v+\Delta'_v}{c_v} \, .
\end{equation} 
Indeed, factors $\binom{c_v+\Delta_v}{c_v}$ (resp.  $\binom{c_v+\Delta'_v}{c_v}$) denotes the number of subsets of the $c_v+\Delta_v$ children of $v$ at depth $d$ in $t$ (respectively, of the $c_v+\Delta'_v$ children of $v$ at depth $d$ in $t'$) that we can associate as children of $v$ in the injection $\sigma \in \widetilde{\cS}(\tau,t)$ (respectively, $\sigma' \in \widetilde{\cS}(\tau,t')$), the order in which they are considered being determined by the permutation $\xi_v$ in the relabeling characterizing $\tau \equiv \tau^*$.

The proof is concluded by combining \eqref{eq:lower_bound_LR_1}, \eqref{eq:toto} and \eqref{eq:tildeS}.
\end{proof}

\subsection{Another sufficient condition for one-sided tests}
We now state a result which establishes a lower-bound on the number $|\Aut(\tau^*)|$ of automorphisms for $\tau^*\sim \GWls_{d}$.
\begin{proposition}\label{proposition:auto_GW}
Let $r$ be a sufficiently large constant (in particular, $r>1$). For $\tau^*\sim \GW^{(r)}_{d}$, let us denote by $W$ the almost sure limit:
\begin{equation}\label{eq:mgle_limit}
W := \lim_{d \to\infty} r^{-d}\card{\cL_d(\tau^*)} \, .
\end{equation}
Let $\cA_\infty$ be the event on which $\tau^*$ survives, which occurs with probability $1-\pext(r)>0$, and on which $W>0$. 
We then have with high probability the lower bound
\begin{equation}\label{eq:prop_lower_bound_aut}
\log\left(\frac{\card{\Aut(\tau^*)}}{\prod_{u \in \cV_{d-1}(\tau^*)}e^{-r}r^{c_{\tau^*}(i)}}\right)\one_{\cA_\infty} \geq (1-o_{\dP}(1)) \frac{W r^{d}}{r -1} \left[\frac{\log^{3/2} r}{3 \sqrt{r}}+O_r\left(\frac{\log^{5/4} r}{\sqrt{r}}\right)\right]\one_{\cA_\infty} \, .
\end{equation}
\end{proposition}

Proposition \ref{proposition:auto_GW}, proved in Appendix \ref{appendix:proof:proposition:auto_GW}, could be of independent interest. We believe that a little more work could easily show that inequality \eqref{eq:prop_lower_bound_aut} is exponentially tight, i.e. gives the right exponential order for the estimation of the number of automorphism of a Galton-Watson tree. 
We next show that Lemma \ref{lemma:lower_bound_LR} together with Proposition \ref{proposition:auto_GW} yield a sufficient condition for the existence of  one-sided test.

We are now in a position to prove the central result of this section.

\begin{theorem}\label{theorem:suff_cond_auto}
There exists a constant $r_0$ such that if 
\begin{equation}\label{eq:theorem:suff_cond_auto}
\lambda s > r_0 \quad \mbox{and} \quad 1-s \leq \frac{1}{(3+\eta)}\sqrt{\frac{\log(\lambda s)}{\lambda ^3 s}},
\end{equation} for some $\eta>0$, then one-sided detectability holds.
\end{theorem}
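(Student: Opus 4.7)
The plan is to invoke condition $(v)$ of Theorem \ref{theorem:main_result_TREES}: it suffices to prove, under $\dP_1$, that $\liminf_{n\to\infty}(\lambda s)^{-n}\log \bL_n > 0$ almost surely on the survival event of $\btau^*$, which has probability $1-\pext(\lambda s)$. Setting $r := \lambda s$, I feed Proposition \ref{proposition:auto_GW} into Lemma \ref{lemma:lower_bound_LR}. The Poisson factor $e^{-r}r^{\bc_i}$ from the automorphism normalisation cancels cleanly with the factor $(s/\lambda)^{\bc_i} e^{\lambda s}$ inside the product (using $\log s - \log \lambda + \log r = 2\log s$), yielding
\begin{align*}
\log \bL_n \;&\geq\; \bK(1-o_\dP(1))\frac{\log^{3/2} r}{3\sqrt r} + \sum_{i\in\cV_{n-1}(\btau^*)}\bigl[2\bc_i\log s + (\bDelta_i + \bDelta'_i)\log(1-s)\bigr] \\
&\quad + \sum_{i\in\cL_{n-1}(\btau^*)}\log\binom{\bc_i+\bDelta_i}{\bc_i}\binom{\bc_i+\bDelta'_i}{\bc_i}.
\end{align*}

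Next I apply a quantitative law of large numbers on $\btau^*$. Under $\dP_1$, the triples $(\bc_i,\bDelta_i,\bDelta'_i)$ are i.i.d.\ Poissons of parameters $(r,\lambda(1-s),\lambda(1-s))$, and on survival $|\cV_{n-1}(\btau^*)|\sim \bK$ and $|\cL_{n-1}(\btau^*)|\sim \bK(r-1)/r$ both follow from $r^{-n}|\cL_n(\btau^*)|\to \bw > 0$. This replaces the ``internal'' sum by $\bK\cdot[2\lambda s \log s + 2\lambda(1-s)\log(1-s)]$ up to $o_\dP(\bK)$, and via the elementary bound $\log\binom{\bc+\bDelta}{\bc}\geq \bDelta\log \bc - \log(\bDelta!)$ together with $\dE_1[\log\bc]\approx \log r$ (with Poisson tail estimates to discard atypically small $\bc$), the leaf sum is at least $2\bK\,\lambda(1-s)\log r$ up to lower-order terms.

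Writing $\delta := 1-s$ and expanding $\log s = -\delta + O(\delta^2)$, the internal cost magnitude is $2\lambda\delta + 2\lambda\delta|\log \delta| + O(\lambda\delta^2)$, dominated by the $|\log \delta|$ term. The hypothesis $\delta \leq \tfrac{1}{3+\eta}\sqrt{\log r/(\lambda^3 s)}$ forces $|\log \delta| \leq (\tfrac{3}{2}+o_r(1))\log r$ and $\lambda\delta \leq \sqrt{\log r/r}/(3+\eta)$, so the internal cost is at most $\frac{3}{3+\eta}\cdot \frac{(\log r)^{3/2}}{\sqrt r}$ per unit of $\bK$, while the leaf gain contributes at least $\frac{2}{3+\eta}\cdot \frac{(\log r)^{3/2}}{\sqrt r}$. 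Combined with the automorphism gain $\frac{1}{3}\cdot \frac{(\log r)^{3/2}}{\sqrt r}$, and choosing $r_0$ large enough to absorb the $O_r(\log^{5/4} r/\sqrt r)$ remainder in Proposition \ref{proposition:auto_GW}, I obtain
\begin{equation*}
(\lambda s)^{-n}\log \bL_n \;\gtrsim\; \frac{\bw}{r-1}\cdot \frac{(\log r)^{3/2}}{\sqrt r}\left(\frac{1}{3}-\frac{1}{3+\eta}\right) + o_\dP(1),
\end{equation*}
which is strictly positive on $\{\bw>0\}$, establishing condition $(v)$.

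The main obstacle is the quantitative LLN step: the sums run over $\Theta((\lambda s)^n)$ random variables that are only conditionally independent given the random tree $\btau^*$, so I would use Chebyshev inequalities conditional on $\btau^*$ combined with the convergence of $r^{-n}|\cL_n(\btau^*)|$ to $\bw$, or an inductive depth-wise argument, to guarantee concentration at rate $o(\bK)$. A secondary delicate step is pinning down the leaf coefficient $2/(3+\eta)$ precisely: the elementary binomial bound must be combined with Poisson tail estimates on $\bc \sim \Poi(r)$ to discard the atypically small $\bc$ events (where $\log \bc \ll \log r$), without losing constants that would ruin the inequality.
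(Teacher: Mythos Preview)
Your overall strategy---feed Proposition~\ref{proposition:auto_GW} into Lemma~\ref{lemma:lower_bound_LR}, apply the law of large numbers to the Poisson sums, and verify condition~$(v)$ of Theorem~\ref{theorem:main_result_TREES}---is exactly what the paper does, and your first display (the cancellation via $\log s-\log\lambda+\log r=2\log s$) is correct.

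The final accounting step, however, contains a genuine error. The hypothesis $\delta\le \tfrac{1}{3+\eta}\sqrt{\log r/(\lambda^3 s)}$ is an \emph{upper} bound on $\delta$ and therefore gives no upper bound on $|\log\delta|$: your claim ``$|\log\delta|\le (\tfrac32+o_r(1))\log r$'' is simply false as stated (take $\delta$ much smaller than the threshold). More damagingly, the leaf gain $2\lambda\delta\log r$ is at \emph{most} $\tfrac{2}{3+\eta}(\log r)^{3/2}/\sqrt r$, not at least; as $\delta\to 0$ the leaf gain vanishes. Bounding the internal cost from above and the leaf gain from below separately, as you do, cannot produce a valid lower bound on their sum when one of the two individual bounds fails.

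The fix, which is how the paper proceeds, is to combine the two contributions \emph{before} bounding. Writing $\alpha:=\lambda(1-s)$ and using $|\log\delta|=\log\lambda-\log\alpha\approx\log r+|\log\alpha|$, the internal cost $2\alpha|\log\delta|+O(\alpha)$ and the leaf gain $2\alpha\log r$ share a common $2\alpha\log r$ piece that cancels exactly, leaving a net cost of $2\alpha|\log\alpha|+O(\alpha)$. Since $\alpha\mapsto\alpha|\log\alpha|$ is increasing on $(0,e^{-1})$, its supremum over the hypothesis range is attained at the boundary $\alpha=\tfrac{1}{3+\eta}\sqrt{\log r/r}$, where $|\log\alpha|\sim\tfrac12\log r$, yielding precisely $\tfrac{1}{3+\eta}(\log r)^{3/2}/\sqrt r$. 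This recovers your final displayed inequality by a valid argument: the worst case is indeed the boundary, but you must first do the cancellation to see that the net cost is monotone in $\alpha$.
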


\begin{proof} 
We recall that $\cA_\infty$ denotes the event on which $\tau^*$ survives, and we define
\begin{equation*}
W := \lim_{d \to\infty} (\lambda s)^{-d}\card{\cL_d(\tau^*)} \, .
\end{equation*}The proof consists in showing that under assumptions \eqref{eq:theorem:suff_cond_auto}, condition $(iii)$ of Theorem \ref{theorem:main_result_TREES} is satisfied.
In the lower bound \eqref{eq:lemma:lower_bound_LR} of Lemma \ref{lemma:lower_bound_LR}, consider the factor
\begin{equation*}
     X := \prod_{u \in \cV_{d-1}(\tau^*)}e^{\lambda s} (s/\lambda)^{ c_u}(1-s)^{\Delta_u+\Delta'_u} \prod_{v \in \cL_{d-1}(\tau^*)}\binom{c_v+\Delta_v}{c_v}\binom{c_v+\Delta'_v}{c_v} \, .
 \end{equation*} Note that $X$ is a random variable.
On event $\cA_\infty$, we can appeal to the law of large numbers to compute a high probability equivalent of $\log(X)$, since
\begin{flalign}\label{eq:equiv_A1}
    A & := \log\left(\prod_{u \in \cV_{d-1}} \frac{s^{c_i} (1-s)^{\Delta_i+\Delta'_i}}{e^{-\lambda s}\lambda ^{c_i}} \right) \sim \frac{W (\lambda s)^{d}}{\lambda s -1} [ \lambda s (\log(s/\lambda)+1) + 2\lambda (1-s) \log (1-s)]
\end{flalign} and 
\begin{flalign}\label{eq:equiv_B1}
     B & := \log\left(\prod_{i\in\cL_{d-1}(\tau^*)}\binom{c_i+\Delta_i}{c_i}\binom{c_i+\Delta'_i}{c_i}\right) \nonumber \\ 
     & \sim 2 W (\lambda s)^{d-1} \left( \dE[\log(\Poi(\lambda)!)]-\dE[\log(\Poi(\lambda(1-s))!)]-\dE[\log(\Poi(\lambda s)!)]\right).
\end{flalign}
Let us introduce the notations $r := \lambda s$, $\alpha := \lambda (1-s)$, such that $\lambda = \alpha + r$ and $s = \frac{r}{\alpha + r}$. We will identify equivalents of exponents of interest as  $\alpha \to 0$ and $r \to \infty$. In this regime, \eqref{eq:equiv_A1} becomes
\begin{flalign}\label{eq:equiv_A2}
    A  & \sim \frac{W r^{d}}{r -1} \left(-2 r\log(1+\alpha/r)+2\alpha\log\left(\frac{\alpha/r}{1+\alpha/r}\right)-r\log r+r\right) \nonumber \\
    & \sim \frac{W r^{d}}{r -1}  \left(-r \log r + r - 2 \alpha \log r + 2\alpha\log\alpha  + O(\alpha) \right) \, .
\end{flalign} We have the classical estimate for large $\mu$:
\begin{equation}\label{eq:moment_factoriel}
\dE [\log(\Poi(\mu)!) ]=\mu\log(\mu)-\mu+\frac{1}{2}\log(2\pi e \mu)+O\left(\frac{1}{\mu}\right),
\end{equation}
Using \eqref{eq:moment_factoriel} and noting that in this regime, $\dE [\log(\Poi(\alpha)!) ] = O(\alpha^2)$, \eqref{eq:equiv_B1} becomes
\begin{flalign}\label{eq:equiv_B2}
B & \sim 2 W r^{d-1}\left( \alpha\log(r)+O(\alpha)\right) \, .
\end{flalign} Combined, \eqref{eq:equiv_A2} and \eqref{eq:equiv_B2} give that with high probability, under $\cA_\infty$, 
\begin{flalign}\label{eq:equiv_A+B}
\log(X) = A+B & \sim \frac{W r^{d}}{r -1} \left(\left(1-\frac{1}{r}\right)\times 2 \alpha\log(r) -r\log r + r - 2\alpha\log(r)+2\alpha\log(\alpha)+O(\alpha)\right) \nonumber\\
&\sim \frac{W r^{d}}{r -1}\left(-r\log r + r + 2\alpha\log(\alpha)+O(\alpha)\right) \, .
\end{flalign}

Combining \eqref{eq:equiv_A+B} with the results of Proposition \ref{proposition:auto_GW} entails that with high probability, under $\cA_\infty$, 
\begin{flalign*}
\log L_d & \geq \frac{W r^{d}}{r -1} \left[r\log r - r +  \frac{\log^{3/2}(r)}{3\sqrt{r}}+O\left(\frac{\log^{5/4} r}{\sqrt{r}}\right) \right] + \frac{W r^{d}}{r -1} \left[-r\log r + r + 2\alpha\log(\alpha)+O(\alpha)\right]\\
& = \frac{W r^{d}}{r -1} \left[2\alpha \log \alpha +  \frac{\log^{3/2}(r)}{3\sqrt{r}}+O\left(\frac{\log^{5/4} r}{\sqrt{r}}\right)  + O(\alpha) \right].
\end{flalign*} 
Then, under assumption \eqref{eq:theorem:suff_cond_auto}, we have $\alpha\leq \frac{1}{3+\eta}\sqrt{\log(r)/r}$ so that, for sufficiently large $r$,
\begin{equation*}
    2\alpha \log \alpha +  \frac{\log^{3/2}(r)}{3\sqrt{r}} > \Omega\left(\frac{\log^{3/2}(r)}{\sqrt{r}}\right).
\end{equation*}
It follows that condition $(iii)$ of Theorem \ref{theorem:main_result_TREES} holds with e.g. $a_d=d$. Hence, one-sided detectability holds. 
\end{proof}

\section{A sufficient condition for impossibility of correlation detection: conjectured hard phase for partial graph alignment}
\label{section:hard_phase}
In the present section we establish that, for $\lambda s^2<1$ and sufficiently large $\lambda$, $\lim_d \KL_d<+\infty$ and hence, by Theorem \ref{theorem:main_result_TREES}, one-sided detectability fails for our tree correlation problem. Since there exists a range of parameters $(\lambda,s)$ for which partial alignment can be information-theoretically achieved while $\lambda s^2<1$ (it suffices to have $4<\lambda s< s^{-1}$ in view of \cite{Wu2021SettlingTS}) we therefore conclude that the conjectured hard phase for partial graph alignment (see Conjecture \ref{conjecture:hard_phase}) is non empty. \\

Note that the Kullback-Leibler divergence $\KL_d$ also coincides with the mutual information $\rI_d(T;T')$ between $T$ and $T'$ under $\dPls_d$. 
Note that under $\dPls_d$, $T$ and $T'$ are mutually independent conditionally on $\tau^*$, hence by the data processing inequality we have

\begin{equation}\label{eq:bounding_KLd_with_I}
    \KL_d=\rI_d(T,T') \leq  \rI_d(T;\tau^*) \, .
\end{equation} To establish that $\lim_d \KL_d<\infty$, it therefore suffices to prove that $\rI_d(T;\tau^*)$ is uniformly bounded. Using the simple inequality $\log(x) \leq x-1 \leq x$, we have
\begin{flalign}\label{eq:upper_bound_mutual_info}
\rI_d(T;\tau^*) &\leq \dEls_{d}\left[\frac{\dPls_{\infty}(T,\tau^*)}{\GWl_d(T) \GWls_d(\tau^*)}\right] 
 = \dEls_{d}\left[\frac{\dPls_{d}(\tau^* \, | \, T)}{\GWls_d(\tau^*)}\right] =: V_d \, .
\end{flalign}

The following Lemma gives a recursive inequality for the previously defined $V_d$. 
\begin{lemma}\label{lemma:bounding_v_d}
The quantity $V_d$ defined in \eqref{eq:upper_bound_mutual_info} verifies
\begin{equation}\label{eq:lemma:bounding_v_d}
V_d\leq f(V_{d-1}),
\end{equation}
where 
\begin{equation}\label{eq:f_bound_v_d}
f(x):=\frac{1}{1- s x}\exp\left(\frac{\kappa(1-s)(x-1)}{1-sx} \right) ,
\end{equation} with $\kappa := \lambda s^2$.
\end{lemma}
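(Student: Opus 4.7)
The proof will combine the cycle/path decomposition of injections developed just before the lemma with a spectral analysis of the operator $M_{d-1}$. First I would establish that the spectrum of $M_{d-1}$ lies in $[0,1]$: positive semi-definiteness is immediate from the factored form \eqref{eq:def_op_M}, and the rescaled kernel $\tilde M_{d-1}(\tau_1,\tau_2) := \sqrt{r_{d-1}(\tau_2)/r_{d-1}(\tau_1)}\,M_{d-1}(\tau_1,\tau_2)$ is row-stochastic, since $\sum_{\tau_2} q_{d-1}(\tau_2|t)=1$ for every $t$. Because $M_{d-1}$ and $\tilde M_{d-1}$ are related by the diagonal conjugation $\tilde M_{d-1}=D^{-1}M_{d-1}D$ with $D=\diag(\sqrt{r_{d-1}})$, they share eigenvalues; hence $\mu_i\in[0,1]$, and in particular $\mu_i^\ell \le \mu_i$ gives $\Tr(M_{d-1}^\ell)\le \Tr(M_{d-1})=V_{d-1}$ for all $\ell\ge 1$.

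Next, for each $\sigma\in\cS(c,c+\Delta)$ the cycle/path factorization established in the discussion preceding the lemma makes the inner expectation in \eqref{eq_V_d_recursive} equal to $\prod_j \Tr(M_{d-1}^{\ell_j})$, the product running over the closed cycles of $\sigma$ (of lengths $\ell_j$), since open paths contribute $1$ each. The spectral bound then yields $\prod_j \Tr(M_{d-1}^{\ell_j})\le V_{d-1}^{k(\sigma)}$, with $k(\sigma)$ the number of closed cycles. I would then enumerate $\sigma$ by cycle structure: writing $[c]=S\sqcup([c]\setminus S)$ with $|S|=m$ the union of cycles, $\sigma_{|S}$ is an arbitrary permutation of $S$, while $\sigma_{|[c]\setminus S}$ is an acyclic injection from $[c]\setminus S$ into $([c]\setminus S)\cup\{c+1,\dots,c+\Delta\}$. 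A short combinatorial argument (partitioning $[c]\setminus S$ into ordered paths terminating in the $\Delta$ sinks) shows that the number of such acyclic extensions equals the rising factorial $\Delta^{\overline{c-m}}$, while the classical unsigned-Stirling identity gives $\sum_{\pi\in\cS_m} V_{d-1}^{k(\pi)}=V_{d-1}^{\overline m}$.

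Plugging these counts into \eqref{eq_V_d_recursive} and swapping sums, the inner double sum over $c,m$ factorizes after the substitution $n':=c-m$ and collapses via the negative-binomial identity $\sum_{k\ge 0} y^{\overline k} s^k/k!=(1-s)^{-y}$ into $(1-s)^{-V_{d-1}-\Delta}$. The $(1-s)^\Delta$ weight cancels the $\Delta$-dependence, and the Poisson average over $\bDelta$ reduces to $1$, producing the intermediate bound $V_d\le(1-s)^{-V_{d-1}}$. The stated form $f(V_{d-1})$ is then recovered via Bernoulli's inequality $(1-s)^{V_{d-1}}\ge 1-sV_{d-1}$, valid for $V_{d-1}\ge 1$: this gives $(1-s)^{-V_{d-1}}\le(1-sV_{d-1})^{-1}\le f(V_{d-1})$ since the exponential correction in $f$ is at least $1$ in the admissible range $V_{d-1}\in[1,1/s)$ with $\kappa\ge 0$.

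I expect the principal technical point to be the combinatorial count of acyclic extensions. Verifying $|\{\sigma_{|[c]\setminus S}\}|=\Delta^{\overline{c-m}}$ requires a careful bijection (for instance with ordered set-partitions of $[c]\setminus S$ attached to the $\Delta$ sinks) combined with a Vandermonde-type identity; once this count is in hand, the remainder of the argument is routine generating-function manipulation plus Bernoulli's inequality. A secondary conceptual point is to justify the spectral argument in the infinite-dimensional setting of $\cX_{d-1}$, which is handled by the monotone-convergence truncation mentioned in the footnote preceding the lemma.
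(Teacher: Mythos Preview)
Your argument is correct and reaches the lemma by a genuinely different route than the paper. The paper uses only positive semi-definiteness of $M_{d-1}$ to bound $\Tr(M_{d-1}^\ell)\le (\Tr M_{d-1})^\ell=V_{d-1}^\ell$, which yields the exponent $F(\sigma)=\sum_j\ell_j$ (the total number of elements in closed cycles); it then coarsens further via $F(\sigma)\le |[c]\cap\sigma([c])|$, counts injections by this statistic as $\binom{c}{k}\binom{\Delta}{c-k}c!$, and a direct generating-function computation produces $f$ exactly. You instead exploit the extra spectral information $\mu_i\in[0,1]$ (via the similarity to the row-stochastic $\tilde M_{d-1}$) to get the sharper $\Tr(M_{d-1}^\ell)\le V_{d-1}$, which replaces the exponent by the number $k(\sigma)$ of cycles; exact enumeration via the Stirling identity and the rising-factorial count $\Delta^{\overline{c-m}}$ for acyclic extensions then collapses cleanly to the stronger bound $V_d\le(1-s)^{-V_{d-1}}$, which you finally relax to $f(V_{d-1})$ by Bernoulli. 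Your intermediate inequality is strictly tighter than the paper's and could in principle be fed directly into the fixed-point analysis of Theorem~\ref{theorem:suff_hard_phase}; the paper's route, on the other hand, is more elementary (no eigenvalue bound, no Stirling numbers) and lands on $f$ without the extra Bernoulli step.
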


\begin{proof}[Proof of Lemma \ref{lemma:bounding_v_d}] \,

\proofstep{Step 1: a nearly-recursive formula} As previously done under $\dPls_d$, Let us denote by $c$ the degree of the root node in $\tau^*$ and $c+\Delta$ the degree of the root node in $t$.

Using Bayes's rule and partitioning on $\sigma \in \cS(c,c+\Delta)$, we have\
\begin{flalign*}
     & \dPls_{d}(\tau^* = \tau \, | \, T=t) \\
     & \quad = \frac{\GW_{\lambda s,d}(\tau)}{\GWl_{d}(t)} \p_{\lambda(1-s)}(\Delta) \sum_{\sigma \in \cS(c,c+\Delta)} \frac{\Delta!}{(c + \Delta)!} \prod_{u \in [c]} \dPls_{d-1}(T_{\sigma(u)}=t_{\sigma(u)} \, | \, \tau^*_u = \tau_u) \prod_{i = c+1}^{c+\Delta} \GW_{\lambda,d-1}(t_{\sigma(u)})   \\
    & \quad = \frac{ \p_{\lambda s}(c) \p_{\lambda(1-s)}(\Delta) }{ \p_{\lambda}(c+\Delta) } \sum_{\sigma \in \cS(c,c+\Delta)} \frac{\Delta!}{(c + \Delta)!} \prod_{u \in [c]} \dPls_{d-1}(\tau^*_u = \tau_u \, | \, T_{\sigma(u)}=t_{\sigma(u)}) \\
    & \quad = \frac{s^{c} (1-s)^{\Delta}}{c !} \sum_{\sigma \in \cS(c,c+\Delta)} \prod_{u \in [c]} \dPls_{d-1}(\tau^*_u = \tau_u \, | \, T_{\sigma(u)}=t_{\sigma(u)}), \\ &&
\end{flalign*}
so that
\begin{equation*}
\frac{\dPls_{d}(\tau^* \, | \, T)}{\GWls_d(\tau^*)}=\frac{1}{\p_{\lambda s}(c)} \cdot \frac{s^{c} (1-s)^{\Delta}}{c !}
\sum_{\sigma \in \cS(c,c+\Delta)} \prod_{u \in [c]} \frac{\dPls_{d-1}(\tau^*_u  \, | \, T_{\sigma(u)})}{\GWls_{d-1}(\tau^*_u)} \, .
\end{equation*}
Taking expectation with respect to $\dPls_{d}$ entails
\begin{equation}\label{eq_V_d_recursive}
V_d=\sum_{c \geq 0}\sum_{\Delta \geq 0} \p_{\lambda(1-s)}(\Delta)\frac{s^c (1-s)^\Delta}{c!}\sum_{\sigma\in \cS(c,c+\Delta)}\dEls_{d-1}\left[\prod_{u \in [c]} \frac{\dPls_{d-1}(\tau^*_u  \, | \, T_{\sigma(u)})}{\GWls_{d-1}(\tau^*_u)} \, \bigg| \, c,  \Delta \right] \, .
\end{equation}
To evaluate the expectation of the products in \eqref{eq_V_d_recursive} and related them to $V_{d-1}$, we need to introduce the following definitions.\\

\textit{Open paths, closed cycles.}
For two integers $c,\Delta \geq 0$ and an injective mapping $\sigma \in \cS(c,c+\Delta)$, a sequence $(u_1,\ldots,u_\ell)$ of elements of $[c]$ is
\begin{itemize}
    \item an \emph{open path of $\sigma$} if
    \begin{equation*}
        u_1 \notin \sigma([c]), \quad \forall k=1,\ldots, \ell-1, \; \sigma(u_{k})=u_{k+1}, \quad \mbox{ and } \sigma(u_\ell)\notin [c].
    \end{equation*}
    
    \item a \emph{closed cycle of $\sigma$} if
    \begin{equation*}
        \forall k=1,\ldots, \ell-1, \; \sigma(u_{k})=u_{k+1} \quad \mbox{ and }\sigma(u_\ell)=u_1.
    \end{equation*}
    See an example of such open paths and closed cycles in Figure \ref{fig:examples_orbits}.
\end{itemize}

It is an easy fact to check that each injective mapping $\sigma \in \cS(c,c+\Delta)$ can be factorized in disjoint open paths and closed cycles. Since each term in $u$ in the product in \eqref{eq_V_d_recursive} only depends on the other terms $v$ in their own open path or closed cycle, the expectation term in \eqref{eq_V_d_recursive} factorizes according to the path/cycle decomposition of $\sigma$. \\
 
\proofstep{Step 2.1: open paths.} First consider an open path $O_\ell$ of $\sigma$ of length $\ell$, assumed without loss of generality to be given by $(1,\ldots,\ell)$, so that $\sigma(1)=2,\ldots,\sigma(\ell-1)=\ell$, and $\sigma(\ell)=c+1$. The expectation of the corresponding factor reads:
\begin{equation*} \label{eq:expectation_Ol}
\dEls_{d-1}\left[\prod_{u \in O_\ell} \frac{\dPls_{d-1}(\tau^*_u  \, | \, T_{\sigma(u)})}{\GWls_{d-1}(\tau^*_u)} \right] =  \dEls_{d-1} \left[ \prod_{k=1}^{\ell-1} \frac{\dPls_{d-1}(\tau^*_k  \, | \, T_{k+1})}{\GWls_{d-1}(\tau^*_k)} \times \frac{\dPls_{d-1}(\tau^*_\ell  \, | \, T_{1})}{\GWls_{d-1}(\tau^*_\ell)}  \right].
\end{equation*}
Now integrated over $T_1$, $\dPls_{d-1}(\tau^*_\ell  \, | \, T_{1})$ evaluates to $\GWls_{d-1}(\tau^*_\ell)$ and the last factor disappears. Integrating then successively with respect to $T_k$ for $k=\ell,\ell-1,\ldots,2$, we obtain that the factors corresponding to open cycles evaluate to $1$.

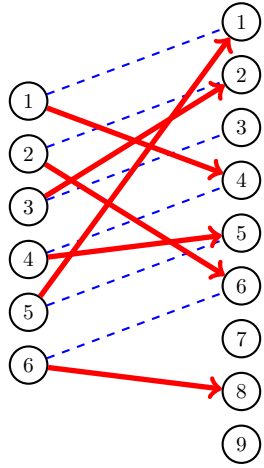
\begin{figure}[H]
	\centering
	\begin{tikzpicture}[scale=0.7,line width=0.4mm]
		\node[draw,circle,thick,scale=0.8] (A1) at (0,8.5) {$1$};
		\node[draw,circle,thick,scale=0.8] (B1) at (0,7.5) {$2$};
		\node[draw,circle,thick,scale=0.8] (C1) at (0,6.5) {$3$};
		\node[draw,circle,thick,scale=0.8] (D1) at (0,5.5) {$4$};
		\node[draw,circle,thick,scale=0.8] (E1) at (0,4.5) {$5$};
		\node[draw,circle,thick,scale=0.8] (F1) at (0,3.5) {$6$};

		\node[draw,circle,thick,scale=0.8] (A2) at (4,10) {$1$};
		\node[draw,circle,thick,scale=0.8] (B2) at (4,9) {$2$};
		\node[draw,circle,thick,scale=0.8] (C2) at (4,8) {$3$};
		\node[draw,circle,thick,scale=0.8] (D2) at (4,7) {$4$};
		\node[draw,circle,thick,scale=0.8] (E2) at (4,6) {$5$};
		\node[draw,circle,thick,scale=0.8] (F2) at (4,5) {$6$};
		\node[draw,circle,thick,scale=0.8] (G2) at (4,4) {$7$};
		\node[draw,circle,thick,scale=0.8] (H2) at (4,3) {$8$};
		\node[draw,circle,thick,scale=0.8] (I2) at (4,2) {$9$};
		
		\draw[blue,dashed,line width=0.8pt] (A1) to[bend right=0] (A2);
		\draw[blue,dashed,line width=0.8pt] (B1) to[bend right=0] (B2);
		\draw[blue,dashed,line width=0.8pt] (C1) to[bend right=0] (C2);
		\draw[blue,dashed,line width=0.8pt] (D1) to[bend right=0] (D2);
		\draw[blue,dashed,line width=0.8pt] (E1) to[bend right=0] (E2);
		\draw[blue,dashed,line width=0.8pt] (F1) to[bend right=0] (F2);

		\draw[red,line width=2pt,->] (A1) to[bend right=0] (D2);
		\draw[red,line width=2pt,->] (B1) to[bend right=0] (F2);
		\draw[red,line width=2pt,->] (C1) to[bend right=0] (B2);
		\draw[red,line width=2pt,->] (D1) to[bend right=0] (E2);
		\draw[red,line width=2pt,->] (E1) to[bend right=0] (A2);
		\draw[red,line width=2pt,->] (F1) to[bend right=0] (H2);
	\end{tikzpicture}
	\caption{Representation of $\sigma \in \cS(c,c+\Delta)$ with $c=6, \Delta=3$, and $\sigma(1)=4$, $\sigma(2)=6$, $\sigma(3)=2$, $\sigma(4)=5$, $\sigma(5)=1$, $\sigma(6) = 8$. In this example, $(1,4,5)$ (resp. $(3,2,6)$) is an open path (resp. closed cycle) of $\sigma$.}
	\label{fig:examples_orbits}
\end{figure}

\proofstep{Step 2.2: closed cycles.}
Consider next a closed cycle $C_\ell$ of $\sigma$ of length $\ell$, assumed without loss of generality to be given by $(1,\ldots,\ell)$. We also assume that the relabeling of children of the root in $\tau$ after the augmentation process is given by $\id$. Then, the expectation of the corresponding factor reads:
\begin{flalign} \label{eq:expectation_Cl}
\dEls_{d-1}\left[\prod_{u \in C_\ell} \frac{\dPls_{d-1}(\tau^*_u  \, | \, T_{\sigma(u)})}{\GWls_{d-1}(\tau^*_u)} \right] & =  \dEls_{d-1} \left[ \prod_{k\in [\ell]} \frac{\dPls_{d-1}(\tau^*_k  \, | \, T_{(k+1) \Mod \ell})}{\GWls_{d-1}(\tau^*_k)}  \right] \nonumber \\
& = \sum_{\substack{\tau_1,t_1 \in \cX_{d-1} \\ \ldots \\ \tau_\ell, t_\ell \in \cX_{d-1}}}\prod_{k\in [\ell]} \GWl_{d-1}(t_k) \dPls_{d-1}(\tau_k  \, | \, t_{k}) \cdot \frac{\dPls_{d-1}(\tau_k  \, | \, t_{(k+1) \Mod \ell})}{\GWls_{d-1}(\tau_k)} \, .
\end{flalign}
Now, for all $d \geq 0$, let us introduce the operator $\Phi_{d}$, indexed by trees in $\cX_{d}$: 
\begin{equation}\label{eq:def_op_M}
    \Phi_{d}(\tau_1,\tau_2):=\sum_{t\in \cX_{d}}\GWl_d(t)\frac{\dPls_{d}(\tau_1  \, | \, t) \cdot \dPls_{d}(\tau_2  \, | \, t)}{\sqrt{\GWls_{d}(\tau_1) \cdot \GWls_{d}(\tau_2)}} \, .
\end{equation}
$\Phi_d$ is symmetric and semi-definite positive, hence the operator is diagonalizable and its spectrum lies in $\dR_+$. Note that 
\begin{itemize}
    \item We have $$\Tr(\Phi_{d-1}) = \sum_{\tau,t \in \cX_d} \GWl_d(t)\frac{[\dPls_{d}(\tau \, | \, t)]^2}{\GWls_{d}(\tau)} = \sum_{\tau,t \in \cX_d} \dPls_{d}(\tau , t) \frac{\dPls_{d}(\tau \, | \, t)}{\GWls_{d}(\tau)} = V_d$$
    \item Moreover, when rearranging the terms, the expectation in \eqref{eq:expectation_Cl} exactly coincides with the trace of $\Phi_{d-1}^\ell$.
\end{itemize} It follows from these observations that\footnote{To make this argument fully rigorous, we consider truncated summations so that we are dealing with finite dimensional matrices, for which the trace inequality to follow clearly holds, and then use monotone convergence to obtain the desired inequality as written.}
\begin{equation} \label{eq:trace_phi_d}
    \dEls_{d-1}\left[\prod_{u \in C_\ell} \frac{\dPls_{d-1}(\tau^*_u  \, | \, T_{\sigma(u)})}{\GWls_{d-1}(\tau^*_u)} \right] =  
    \Tr(\Phi_{d-1}^\ell)\leq \Tr(\Phi_{d-1})^\ell= V_{d-1}^\ell.
\end{equation}

\proofstep{Step 3: recursive inequality.}
Now, for given $c,\Delta \geq 0$ and an injection $\sigma \in \cS(c,c+\Delta)$, let $F(\sigma)$ denote the number of elements $i\in [c]$ that belong to  closed cycles of $\sigma$. Putting together equations \eqref{eq_V_d_recursive} and \eqref{eq:trace_phi_d}, we have the following
\begin{equation*}
    V_d\leq \sum_{c,\Delta \geq 0} \p_{\lambda(1-s)}(\Delta)\frac{s^c (1-s)^\Delta}{c!}\sum_{\sigma\in \cS(c,c+\Delta)} V_{d-1}^{F(\sigma)} \, .
\end{equation*}
By \eqref{eq:upper_bound_mutual_info}, $V_d-1$ is greater that a mutual information which is alwyas non-negative. Hence, $V_{d} \geq 1$ for all $d$. To deal with the combinatorics of $F(\sigma)$, let us remark that $$F(\sigma) \leq \card{[c] \cap \sigma([c])} \, .$$ 
Then, for any $0 \leq k \leq c$, there are $\binom{c}{k}\binom{\Delta}{c-k}$ ways to chose the set $\sigma([c])$ such that $\card{[c] \cap \sigma([c])} = k$, and $c!$ distinct injections $\sigma$ with the same set $\sigma([c])$. Hence $V_d \leq f(V_{d-1})$ with
\begin{flalign*}
f(x) & := \sum_{c, \Delta \geq 0} \p_{\lambda(1-s)}(\Delta) s^c (1-s)^\Delta \sum_{k=0}^{c}\binom{c}{k}\binom{\Delta}{c-k} x^k\\
& = e^{-\lambda(1-s)} \sum_{k, \Delta \geq 0} x^k \frac{(\lambda(1-s)^2)^\Delta}{\Delta!}\sum_{c \geq k}  \binom{c}{k}\binom{\Delta}{c-k} s^c \\
& = e^{-\lambda(1-s)} \sum_{k, \Delta \geq 0} x^k \frac{(\lambda(1-s)^2)^\Delta}{\Delta!} s^k \sum_{c = 0}^{\Delta} \binom{c+k}{k}\binom{\Delta}{c} s^c \\
& = e^{-\lambda(1-s)} \sum_{k, c \geq 0} \frac{1}{c!} \binom{c+k}{k} (sx)^k s^c (\lambda(1-s)^2)^{c} \sum_{\Delta \geq c} \frac{(\lambda(1-s)^2)^{\Delta-c}}{(\Delta-c)!}\\
& = e^{-\lambda s (1-s)} \sum_{c \geq 0} \frac{(\lambda s (1-s)^2)^{c} }{c!} \sum_{k \geq 0} \binom{c+k}{k} (sx)^k\\
& = e^{-\lambda s (1-s)} \frac{1}{1-sx} \sum_{c \geq 0} \frac{1}{c!}\left(\frac{\lambda s (1-s)^2}{1-sx}\right)^{c} = \frac{1}{1-sx} \exp\left( \frac{\lambda s^2(1-s)(x-1)}{1-sx}\right) \, .
\end{flalign*} This ends the proof.
\end{proof}

We are now in a position to prove the main Theorem of this section.

\begin{theorem}\label{theorem:suff_hard_phase}
Assume $\kappa=\lambda s^2$ is fixed such that $\kappa<1$. Then for $\lambda$ sufficiently large, one-sided detectability fails.
\end{theorem}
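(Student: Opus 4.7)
The plan is to exploit the recursion $V_d \leq f(V_{d-1})$ from Lemma \ref{lemma:bounding_v_d} to show that $(V_d)_d$ stays bounded. Since $V_0 = 1$ (indeed $\btau^*_0$ and $\bt_0$ both reduce to the root and the posterior equals the prior), it suffices to exhibit some $x_0 > 1$ such that $f(x_0) \leq x_0$. Combined with monotonicity of $f$ on $(0,1/s)$, which follows from the direct computation
\begin{equation*}
\frac{d \log f}{dx}(x) = \frac{s}{1-sx} + \frac{\kappa(1-s)^2}{(1-sx)^2} > 0,
\end{equation*}
a straightforward induction then yields $V_d \leq x_0$ for all $d$: if $V_{d-1} \leq x_0$, then $V_d \leq f(V_{d-1}) \leq f(x_0) \leq x_0$.

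To produce such an $x_0$, I would take the limit $\lambda \to \infty$ with $\kappa = \lambda s^2$ fixed, so that $s = \sqrt{\kappa/\lambda} \to 0$. Pointwise in $x$, one checks that
\begin{equation*}
f(x) \xrightarrow[s \to 0]{} f_0(x) := e^{\kappa(x-1)}.
\end{equation*}
Since $\kappa < 1$, the equation $f_0(x) = x$, equivalently $\log x = \kappa(x-1)$, has $x=1$ as a solution with $f_0'(1) = \kappa < 1$; as $f_0$ is strictly convex and grows super-linearly, there exists a unique second fixed point $x^* > 1$, and $f_0(x) < x$ for all $x \in (1, x^*)$. Pick any $x_0 \in (1, x^*)$, e.g.\ $x_0 = (1+x^*)/2$. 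Then $f_0(x_0) < x_0$, and by continuity of $f(x_0)$ as a function of $s$ at the fixed point $x_0$, for $\lambda$ large enough (so $s$ small enough) we have both $s x_0 < 1$ and $f(x_0) < x_0$.

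The induction then gives $\limsup_d V_d \leq x_0 < \infty$. Combined with the chain of inequalities $\KL_d = I(\bt_d;\bt'_d) \leq I(\btau^*_d;\bt_d) \leq V_d$ established before Lemma \ref{lemma:bounding_v_d} (data processing and \eqref{eq:upper_bound_mutual_info}), this yields $\KL_\infty < \infty$. The equivalence $(i) \iff (ii)$ in Theorem \ref{theorem:main_result_TREES} then implies that one-sided testability fails.

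The only mildly delicate step is the perturbative argument: verifying that the qualitative features of $f_0$ (existence of $x^*$, and the inequality $f_0 < \mathrm{id}$ on $(1,x^*)$) persist for $f$ at the chosen $x_0$ when $s$ is small but positive. This is a one-variable continuity argument with no real obstacle, since $f$ depends smoothly on $s$ away from the pole at $x = 1/s$ and the chosen $x_0$ is bounded away from that pole as $s \to 0$. The substantive work has already been carried out in Lemma \ref{lemma:bounding_v_d} via the path/cycle decomposition and the Perron-type trace bound on cycles.
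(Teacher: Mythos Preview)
Your proof is correct and follows the same strategy as the paper: use the recursion $V_d \le f(V_{d-1})$ from Lemma~\ref{lemma:bounding_v_d}, monotonicity of $f$, and the initialization $V_0=1$ to trap the sequence in a bounded $f$-invariant interval. The only difference is in how that invariant region is exhibited: the paper linearizes $f$ near $x=1$ to show that $[1,1+\gamma s]$ is $f$-stable for a suitable constant $\gamma$ and small $s$, which yields the sharper estimate $V_d-1=O(s)$, whereas you pass to the pointwise limit $f_0(x)=e^{\kappa(x-1)}$, pick a fixed $x_0\in(1,x^*)$ with $f_0(x_0)<x_0$, and invoke continuity in $s$. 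Your route is slightly cleaner conceptually; the paper's gives a more quantitative bound.
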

\begin{proof}
Throughout the proof $f$ is the function defined in equation \eqref{eq:f_bound_v_d} of Lemma \ref{lemma:bounding_v_d}. \\

\proofstep{Step 1: bounding $f(1+y)$ around $y=0$.} We fix $\kappa<1$ together with $\eps \in (0,4\kappa)$ such that $\kappa+\eps<1$. Let $\gamma>0$ be an arbitrary constant chosen such that $\gamma > \frac{1}{1-\kappa-\eps}$. We emphasize that these three constants $\kappa, \eps$ and $\gamma$ are \emph{fixed}.

We shall consider $s>0$ sufficiently small, or equivalently $\lambda$ large enough, in particular such that $\gamma s<1$. Let  $y\in[0,\gamma s]$. Note that
\begin{equation*}
    \exp\left(\kappa\frac{y(1-s)}{1-s(y+1)}\right) \leq \exp\left(\kappa y /(1-2s)\right).
\end{equation*} Then, assuming $\lambda$ large enough to ensure $\frac{1}{1-2s} \leq 1+\frac{\eps}{4 \kappa}$ as well as $2 e^2 \kappa^2 \gamma s \leq \eps$, we get
\begin{equation}\label{eq:proof_th4_1}
\exp\left(\kappa y /(1-2s)\right) \leq\exp\left(\kappa y + \eps y/4 \right) \leq 1+ (\kappa+\eps/2)y \, .
\end{equation}
Note also that, $1/(1-t) \leq 1+t+3t^2$ for $t \in (0,2/3)$. Assuming $\lambda$ large enough to ensure $s(y+1)\leq 2s < 2/3$, and using $y \leq \gamma s \leq 1$, we get
\begin{equation}\label{eq:proof_th4_2}
    \frac{1}{1-s(y+1)} \leq 1+s(y+1)+3(s(y+1))^2 \leq 1+s + C s^2,
\end{equation} where $C := \gamma + 12$. Together, these last two bounds \eqref{eq:proof_th4_1} and \eqref{eq:proof_th4_2} entail, for any $y \in [0,\gamma s]$:
\begin{equation}\label{eq:bound_iter_f}
    f(1+y)-1 \leq (1+(\kappa +\eps/2)y)(1+s+ C s^2)-1 \leq s+ C s^2 +(\kappa+\eps)y,
\end{equation}
where we assumed $\lambda$ large enough to ensure $(\kappa+\eps/2)(1+s+C s^2) \leq \kappa+\eps$. 
Note now that since $1 + (\kappa+\eps)\gamma < \gamma$ (see the intial assumptions), we can take $\lambda$ large enough so that $1+(\gamma+12)s + (\kappa+\eps)\gamma \leq \gamma$, and then it holds that for any $y \in [0,\gamma s]$:
\begin{equation}\label{eq:bound_induction_ok}
0 \leq f(1+y)-1 \leq \gamma s \, .
\end{equation}

\proofstep{Step 2: use induction.} Let us now prove by recursion that $V_d \leq 1+\gamma s$. Note that $V_0=1$, hence the statement is true at $d=0$. Assume that $V_0, \ldots, V_d$ are all less than $1+\gamma s$. Then, $V_d-1 \in [0,\gamma s]$ and by Lemma \ref{lemma:bounding_v_d} and equation \eqref{eq:bound_induction_ok}, we have
$$ V_{d+1} \leq f(V_d) = f(V_d-1+1) \leq 1 + \gamma s \, .$$

Since we stablished ealier on in \eqref{eq:bounding_KLd_with_I} and \eqref{eq:upper_bound_mutual_info} that $$ \KL_d=\rI_d(T,T') \leq  \rI_d(T;\tau^*) \leq V_d ,$$ this proves that $\lim_d \KL_d < +\infty$, and applying Theorem \ref{theorem:main_result_TREES} ends the proof.
\end{proof}

\section{Consequences for polynomial time partial graph alignment}\label{section:graph_matching}
We now apply the previous results of Sections \ref{section:LR}, \ref{section:KL} and \ref{section:autos_GW} to one-sided partial graph alignment. Recall that we work under the correlated \ER model \eqref{eq:CER_model}. 
We will now describe our polynomial-time algorithm and its theoretical guarantees when one-sided detectablity holds in Theorem \ref{theorem:main_result_TREES}  -- in particular under the conditions of Theorem \ref{theorem:suff_cond_KL} or condition \eqref{eq:theorem:suff_cond_auto} of Theorem \ref{theorem:suff_cond_auto}.

\subsection{Intuition, algorithm description}\label{subsection:intuition_algorithm_desription}
In all this part we assume that $(\lambda,s)$ satisfy one of the conditions in Theorem \ref{theorem:main_result_TREES}.\\

\textit{Extending the tree correlation detection problem.} 
Let $(G, H) \sim \G(n,q=\lambda/n,s)$, with underlying alignment $\pi^*$. In order to distinguish matched pairs of nodes $(u,u')$, we consider their neighborhoods $\cN_{d,G}(u)$ and $\cN_{d,H}(u')$ at a given depth $d$: these neighborhoods are asymptotically distributed as Galton-Watson trees. In the case where the two vertices are actual matches, i.e. $u' = \pi^*(u)$, we are exactly in the setting of our tree correlation detection problem under $\dPls_{d}$: Point $(iii)$ of in Theorem \ref{theorem:main_result_TREES} shows that there exists a threshold $\theta_d$ such that with probability at least $1-\pext(\lambda s)>0$,
$$L_d(u,u') := L_d \left(\cN_{d,G}(u),\cN_{d,H}(u')\right) > \theta_d,$$ when $d \to \infty$. 
Point $(v)$ of Theorem \ref{theorem:main_result_TREES} shows that this threshold $\theta_d$ can be e.g. taken to be $\exp(n^\gamma)$ for some $\gamma \in (0,c \log (\lambda s))$.

At the same time, when nodes $u'$ and $\pi^*(u)$ are distinct and sufficiently far away in the underlying union graph, we can argue that we are also -- with high probability -- in the setting of the tree correlation detection problem under $\dPl_{d}$: since $\dEl_{d}\left[L_d\right]=1$, Markov's inequality shows that with high probability when $d \to \infty$,
$$L_d(u,u') \leq \theta_d.$$

\textit{Computation of the likelihood ratios.}
As mentioned in Remark \ref{remark:util_rec_algo}, formula \eqref{eq:lemma:LR_rec} of Lemma \ref{lemma:LR_rec} enables to compute such likelihood ratios efficiently on a graph, giving the exact expression for a \emph{message-passing} procedure, assuming that all neighborhoods are locally tree-like at depth $d$. Let us first define \emph{oriented likelihood ratios}: for any nodes $u,v$ of $G$ and nodes $u',v'$ of $H$, we write $L_d(u \leftarrow v,u' \leftarrow v')$ for the likelihood ratio at depth $d$ of two trees, the first one (resp. second one) being rooted at $u$ in $G$ (resp. $u'$ in $H$) where the edge $\left\lbrace u,v \right\rbrace$ (resp. $\left\lbrace u',v' \right\rbrace$), if present in the first place, has been deleted. In view of the recursive formula \eqref{eq:lemma:LR_rec} of Lemma \ref{lemma:LR_rec} these oriented likelihood ratios satisfy the following recursion:

\begin{equation}\label{eq:rec_oriented_LR}
    L_d(u \leftarrow v,u' \leftarrow v') = \sum_{k=0}^{d_u \wedge d'_{u'} - 1} \Psi \left(k, d_u - 1,d'_{u'} - 1\right) \sum_{\substack{\sigma \in \cS\left([k],\cN_{G}(u) \setminus \{ v\}  \right) \\ \sigma' \in \cS\left([k],\cN_{H}(u') \setminus \{ v'\} \right)}} \prod_{\ell=1}^{k} L_{d-1}(\sigma(\ell) \leftarrow u,\sigma'(\ell) \leftarrow u'),
\end{equation} where $d_u$ (resp. $d'_{u'}$) is the degree of $u$ in $G$ (resp. of $u'$ in $H$). The likelihood ratio $L_d(u,u')$ is then obtained by computing
\begin{equation}\label{eq:total_oriented_LR}
    L_d(u,u') = \sum_{k=0}^{d_u \wedge d'_{u'}} \Psi \left(k, d_u,d'_{u'}\right) \sum_{\substack{\sigma \in \cS\left([k],\cN_{G}(u) \right) \\ \sigma' \in \cS\left([k],\cN_{H}(u')  \right)}} \prod_{\ell=1}^{k} L_{d-1}(\sigma(\ell) \leftarrow u,\sigma'(\ell) \leftarrow u') \, .
\end{equation}

A natural idea is then to compute $L_d(i,u)$ for each pair $(u,u')$ with $d$ large enough (typically scaled in $\Theta(\log n)$ where $n$ is the number of vertices in $G$ and $H$) and to compare it to $\theta_d$ to decide whether $u$ in $G$ is matched to $u'$ in $H$. \\

\textit{A refined dangling trees trick.} 
However, as previously noted in \cite{Ganassali20a}, without  additional constraint, this strategy produces many falsely positive matches, tending e.g. to match $u$ with $u' \neq \pi^*(u)$ if there exists $v$ such that $\set{u,v}$ is an edge of $G$ and $\set{u',\pi^*(v)}$ is an edge of $H$, making the errors increase and the performance collapse. 

To fix this issue, we use the \emph{dangling trees trick}, already introduced in \cite{Ganassali20a}, improved here by considering three rather than two dangling trees: instead of just looking at their neighborhoods, we look for the downstream trees from distinct neighbors of $u$ in $G$ and of $u'$ in $H$. The trick is now to match $u$ with $u'$ if and only if there exists three distinct neighbors $v,w,x$ of $u$ in $G$ (resp. $v',w',x'$ of $u'$ in $H$) such that all three of the likelihood ratios $L_{d-1}(v \leftarrow u, v' \leftarrow u')$, $L_{d-1}(w \leftarrow u, w' \leftarrow u')$ and $L_{d-1}(x \leftarrow u, x' \leftarrow u')$ are larger than $\theta_{d-1}$. The proof of Theorem \ref{theorem:no_mismatchs} explains how this trick avoids false positives and why three dangling trees is a good choice.\\

\textit{Algorithm description.}
Our algorithm is as follows:
\begin{algorithm}[h]
\caption{\label{algo_GA} \texttt{MPAlign}: Message-passing algorithm for sparse graph alignment}
\SetAlgoLined

\textbf{Input:} Two graphs $G$ and $H$ of size $n$, average degree $\lambda$, depth $d$, threshold parameter $\theta$

\textbf{Output:} A set of pairs $\cM \subset V(G) \times V(H)$.

$\cM \gets \varnothing$

Compute $L_d(u \leftarrow v,u' \leftarrow v')$ for all $\set{u,v}\in E(G)$ and $\set{u',v'}\in E(H)$ with \eqref{eq:rec_oriented_LR}

\For{$(i,u) \in V(G) \times V(H)$}{
	\If{$\cN_{G}(u,d)$ and $\cN_{H}(u',d)$ contain no cycle, and $\exists \left\lbrace v,w,x \right\rbrace \subset \cN_{G}(u), \exists \left\lbrace v',w',x' \right\rbrace \subset \cN_{H}(u')$ such that $L_{d-1}(v \leftarrow u, v' \leftarrow u')> \theta$, $L_{d-1}(w \leftarrow u, w' \leftarrow u')> \theta$ and $L_{d-1}(x \leftarrow u, x' \leftarrow u')> \theta$}
	{	
	$\cM \gets \cM \cup \left\lbrace (u,u') \right\rbrace $
	}
}
\textbf{return} $\cM$

\end{algorithm}

\begin{remark}
To update the matrix of all likelihood ratios with \eqref{eq:rec_oriented_LR}, we update a matrix of size $O(n^2)$, each entry of which can be computed in time $O\left((d_{\max}!)^2\right)$ -- where $d_{\max}$ is the maximum degree in $G$ and $H$. Under the correlated \ER model,  $d_{\max} = O\left( \frac{\log n}{\log \log n} \right)$ \cite{Bollobas2001}, so that $d_{\max}!$ is polynomial in $n$.  Each iteration is thus polynomial in $n$ and since $d$ is taken order $\log(n)$, \texttt{MPAlign} (Algorithm \ref{algo_GA}) runs in polynomial time.
\end{remark}

We now state two results that will readily imply Theorem \ref{theorem:main_result_GRAPHS}.
\begin{theorem} \label{theorem:good_matches}
Let $(G,H) \sim \G(n,q=\lambda/n,s)$ and assume that any of the equivalent conditions of Theorem 1 holds.
Let $d = \lfloor c \log n \rfloor$ with $c \log \left(\lambda\left(2-s\right)\right)<1/2$. Let $\cM$ be the output of Alg. \ref{algo_GA}, taking $\theta  = \exp(n^\gamma)$ for some $\gamma \in (0,c \log (\lambda s))$. Then with high probability
\begin{equation}
\label{eq:theorem:good_matches}
\frac{1}{n} \sum_{u=1}^{n} \one_{\lbrace (u,\pi^*(u)) \in \cM \rbrace} \geq \Omega(1).
\end{equation} 
In other words, a non vanishing fraction of nodes is correctly recovered by \texttt{MPAlign}.
\end{theorem}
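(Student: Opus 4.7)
The plan is to apply the second moment method to $\bN := \sum_{i=1}^n \mathbf{1}_{\{(i,\bsigma^*(i))\in\bcM\}}$, showing $\dE[\bN] \geq c_1 n$ and $\Var(\bN) = o(n^2)$, so that Chebyshev's inequality delivers the claim.

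\textbf{First moment.} Fix a vertex $i$. The event $\{(i,\bsigma^*(i))\in\bcM\}$ is a function of the pair of depth-$d$ neighborhoods of $i$ and $\bsigma^*(i)$ in $\bG,\bG'$. Since $d=\lfloor c\log n\rfloor$ with $c\log(\lambda(2-s))<1/2$, a standard cycle-counting argument shows that the expected number of short cycles in the union-graph ball of radius $d$ around $i$ is $O((\lambda(2-s))^{2d}/n)=O(n^{2c\log(\lambda(2-s))-1})=o(1)$, so both neighborhoods are trees with probability $1-o(1)$. Conditional on this, a step-by-step exploration argument couples the joint distribution of the rooted neighborhoods with a sample from $\dP_{1,d}$ up to total variation $o(1)$, by matching each explored edge either with the intersection tree, the $\bG$-only augmentation, or the $\bG'$-only augmentation of $\dP_{1,d}$.

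Under $\dP_{1,d}$, a sufficient condition for success is that the root of the intersection tree $\btau^*$ has at least three children whose descendant pairs of subtrees yield oriented likelihood ratios exceeding $\beta$. Point $(v)$ of Theorem \ref{theorem:main_result_TREES} guarantees that, conditional on survival of $\btau^*$ (probability $1-\pext(\lambda s)>0$), one has $\liminf_d (\lambda s)^{-d}\log\bL_d > 0$ almost surely. Applied independently to the three children's subtree-pairs (each distributed as $\dP_{1,d-1}$) and combined with $\log\beta=n^\gamma=o(n^{c\log(\lambda s)})=o((\lambda s)^{d-1})$, each surviving child's oriented LR exceeds $\beta$ with probability tending to $1-\pext(\lambda s)$. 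The root-degree in $\btau^*$ has distribution $\Poi(\lambda s)$ with $\lambda s>1$, so conditionally on this degree being at least three the probability of success of the three-children test is bounded below by $(1-\pext(\lambda s))^3-o(1)$. Unconditioning yields a constant $p^*>0$ with $\dE[\bN]\geq (p^*-o(1))n$.

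\textbf{Second moment.} For vertices $i\neq i'$ whose balls of radius $d+1$ in the union graph $\bG\cup\bsigma^*(\bG')$ are disjoint, the two indicator events are measurable with respect to disjoint edge sets of $(\bG,\bG')$ and hence independent. The number of pairs at union-graph distance at most $2d+2$ is at most $n\cdot O((\lambda(2-s))^{2d})=O(n^{1+2c\log(\lambda(2-s))})=o(n^2)$ by the hypothesis on $c$. Thus $\Var(\bN)=o(n^2)$, and Chebyshev's inequality gives $\bN\geq (p^*/2)n$ with high probability.

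\textbf{Main obstacle.} The delicate step is the quantitative transfer of Theorem \ref{theorem:main_result_TREES}$(v)$ -- an almost-sure statement in the tree model at $d\to\infty$ -- into a uniform lower bound on $\dP_{1,d-1}(\log\bL_{d-1}>n^\gamma)$ when $d$ itself grows as $c\log n$ and the threshold $\beta=\exp(n^\gamma)$ depends on $n$. This requires checking, via the Portmanteau theorem applied on the survival event, that the above probability converges to $1-\pext(\lambda s)$, jointly for three independent child subtrees. A secondary but non-trivial point is the tree-coupling at depth $\Theta(\log n)$, for which the calibration $c\log(\lambda(2-s))<1/2$ is exactly what is needed to rule out cycles and overlaps in the first and second moment steps.
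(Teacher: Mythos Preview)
Your proof follows the same two-step scheme as the paper's: couple the joint $d$-neighborhood of $(i,\bsigma^*(i))$ to a sample from $\dP_{1,d}$ and invoke point $(v)$ of Theorem~\ref{theorem:main_result_TREES} for the first moment, then control covariances via approximate independence of distant neighborhoods for the second moment, concluding by Chebyshev. Your identification of the quantitative transfer from $(v)$ (an almost-sure $d\to\infty$ statement) to the regime $d=\Theta(\log n)$, $\beta=\exp(n^\gamma)$, as the delicate step is accurate and matches what the paper leaves somewhat implicit.

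There is one genuine imprecision in your second-moment step. The sentence ``for vertices $i\neq i'$ whose balls of radius $d+1$ in the union graph are disjoint, the two indicator events are measurable with respect to disjoint edge sets and hence independent'' is not correct as written: the disjointness of the balls is itself a random event determined by the same edge variables, so conditioning on it does not make $\mathbf{1}_{A_i}$ and $\mathbf{1}_{A_{i'}}$ independent, nor does it allow you to simply drop the close pairs and treat the rest as uncorrelated. The paper avoids this by invoking its Lemma~\ref{lemma:indep_neighborhoods} (two neighborhoods in the union graph are within $O(n^{-\eps})$ in total variation of independent copies), which directly gives $\Cov(\mathbf{1}_{A_i},\mathbf{1}_{A_{i'}})\leq 1-\dP(I_{i,i'})=o(1)$ for \emph{every} pair. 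This is exactly the exploration/coupling machinery you already appealed to in the first-moment step, so the fix is immediate; but the exact-independence shortcut you state does not hold and should be replaced by that TV bound.
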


\begin{theorem}\label{theorem:no_mismatchs}
Let $(G,H) \sim \G(n,q=\lambda/n,s)$. Assume that $d = \lfloor c \log n \rfloor$ with $c \log \lambda<1/4$. Let $\cM$ be the output of Alg. \ref{algo_GA}, taking $\theta  = \exp(n^\gamma)$ for some $\gamma \in (0,c \log (\lambda s))$. Then with high probability
\begin{equation}
\label{eq:theorem:no_mismatchs}
\mathrm{err}(n):=\frac{1}{n}\sum_{u=1}^{n} \one_{\lbrace \exists u'  \neq \pi^*(u), \; (u,u') \in \cM \rbrace}=o(1),
\end{equation} 
i.e. at most a vanishing fraction of nodes are incorrectly matched by \texttt{MPAlign}.
\end{theorem}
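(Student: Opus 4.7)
The plan is to bound $\dE[n\cdot\err(n)]$ by $o(1)$ and then conclude by Markov's inequality. Setting $X_i := \mathbf{1}_{\{\exists u\neq\bsigma^*(i),\,(i,u)\in\bcM\}}$, one has $\dE[X_i]\leq\sum_{u\neq\bsigma^*(i)}\dP((i,u)\in\bcM)$. A false pair $(i,u)$ enters $\bcM$ only if three disjoint witness pairs $(j_t,v_t)_{t=1,2,3}$ satisfy $L_{d-1}(j_t\leftarrow i,v_t\leftarrow u)>\beta$; three simultaneous spurious likelihood spikes will be rare precisely because most witnesses will be \emph{non-aligned}, i.e.\ $v_t\neq\bsigma^*(j_t)$.

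\emph{Step 1 (at most two aligned neighbors).} For $u\neq\bsigma^*(i)$, set $i^*:=(\bsigma^*)^{-1}(u)\neq i$ and
\begin{equation*}
S_{i,u}:=\{j\in\cN_{\bG}(i):\bsigma^*(j)\in\cN_{\bG'}(u)\}=\{j:\bA_{ij}=1\text{ and }\tilde{\bA}_{i^*,j}=1\}.
\end{equation*}
Since $i\neq i^*$, the indicators $\bA_{ij}$ and $\tilde{\bA}_{i^*,j}$ live on distinct unordered index pairs and are therefore independent by \eqref{eq:CER_model}, and they are also independent across $j$. Hence $|S_{i,u}|$ is stochastically dominated by $\Bin(n,\lambda^2/n^2)$, and $\dP(|S_{i,u}|\geq 3)=O(n^{-3})$. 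A union bound over the $O(n^2)$ false pairs yields $|S_{i,u}|\leq 2$ for every such pair with probability $1-o(1)$. This is exactly where three (rather than two) dangling trees are needed: on this event every valid triple contains at least one index $t^\star$ with $v_{t^\star}\neq\bsigma^*(j_{t^\star})$.

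\emph{Step 2 (tail at a non-aligned witness).} For such a witness I would show that the pair of oriented dangling subtrees underlying $L_{d-1}(j_{t^\star}\leftarrow i,v_{t^\star}\leftarrow u)$ is distributed, up to a small exceptional set, as $\dP_{0,d-1}$. The depth budget $d=\lfloor c\log n\rfloor$ with $c\log\lambda<1/4$ forces each dangling subtree to have at most $n^{1/4+o(1)}$ vertices, and standard sparse \ER estimates then control the probability that $\bsigma^*$ creates a dependency by mapping one subtree into the other. Off this bad event, Markov's inequality combined with the martingale identity $\dE_0[\bL_{d-1}]=1$ of Lemma \ref{lemma:LR_martingale} yields $\dP(L_{d-1}(j_{t^\star}\leftarrow i,v_{t^\star}\leftarrow u)>\beta)\leq \beta^{-1}=\exp(-n^\gamma)$.

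\emph{Step 3 (aggregation and main obstacle).} In $\cG(n,\lambda/n)$ the maximum degree is w.h.p.\ $O(\log n/\log\log n)$, so each pair $(i,u)$ admits at most $O(\log^6 n)$ candidate triples, and on the good event of Steps 1--2 we obtain $\dP((i,u)\in\bcM)\leq O(\log^6 n)\exp(-n^\gamma)$. Summing over the $O(n^2)$ false pairs and adding the $o(1)$ probability of the bad structural events produces $\dE[n\cdot\err(n)]=o(1)$, whence $\err(n)=o(1)$ w.h.p.\ by Markov's inequality. The delicate part is Step 2: rigorously showing that for a non-aligned witness the joint subtree distribution is close enough to $\dP_{0,d-1}$ for Markov to yield the bound $1/\beta$, despite the structural coupling $\bsigma^*$ between $\bG$ and $\bG'$, and checking that the depth cap $c\log\lambda<1/4$ is sharp enough to absorb overlap events uniformly over all $O(n^2)$ candidate pairs.
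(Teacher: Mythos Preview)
Your overall accounting via Markov's inequality is fine and matches the paper, and your Step 1 is correct. The genuine gap is in Step 2: the conclusion of Step 1 --- that for at least one index $t^\star$ the \emph{roots} are non-aligned, $v_{t^\star}\neq\bsigma^*(j_{t^\star})$ --- is too weak to make the dangling pair approximately $\dP_{0,d-1}$-distributed. Even when $\bsigma^*(j_{t^\star})\notin\cN_{\bG'}(u)$, the vertex $\bsigma^*(j_{t^\star})$ may lie one level deeper inside $T'_{t^\star}$ (this is precisely the ``shifted match'' phenomenon that motivates the dangling-trees trick), and then the two subtrees share a large common part and $L_{d-1}$ can be huge. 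Your proposed fix is to bound the probability that the two size-$n^{1/4+o(1)}$ subtrees overlap through $\bsigma^*$; but this overlap probability is only $n^{-\eps}$ for some $\eps\in(0,1)$, and after the union bound over the $n$ candidates $u$ it contributes $n^{1-\eps}\log^6 n$ to the bound on $\dP(\exists u\neq\bsigma^*(i):(i,u)\in\bcM)$, which does not vanish. Your final sentence correctly identifies this as the delicate point, but the depth cap $c\log\lambda<1/4$ is \emph{not} sharp enough to absorb it.

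The paper avoids this obstacle by a different, deterministic argument. One works on the event $\cE_i$ that the $2d$-neighborhood of $\bsigma^*(i)$ in the \emph{union graph} $\bG_\cup=\bG^{\bsigma^*}\cup\bG'$ is a tree (which holds with probability $1-O(n^{-\eps})$ since $c\log\lambda<1/4$). On $\cE_i$, for every $u\neq\bsigma^*(i)$ and every admissible triple, a short tree-path argument shows that at least one of the three pairs $(T_\ell,T'_\ell)$ consists of two subtrees that are \emph{vertex-disjoint} (via $\bsigma^*$), not merely root-misaligned. For that disjoint pair one still does not have exactly $\dP_{0,d-1}$ (the sequential exploration draws $\Bin(n-v,\lambda/n)$ rather than $\Poi(\lambda)$ offspring), so the paper introduces the Radon--Nikodym factor $\bM=\widetilde\dP_0/\dP_0$ and bounds $\widetilde\dP_0(L_{d-1}>\beta,\cA)\leq \dE_0[\bM^2\ind{\cA}]^{1/2}\beta^{-1/2}=O(\beta^{-1/2})$ via Cauchy--Schwarz. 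This $\beta^{-1/2}$ then survives the union bound over $n\times\log^6 n$ choices. In short, the missing idea is to replace your probabilistic ``non-aligned root'' step by the geometric ``acyclic $2d$-union-neighborhood $\Rightarrow$ one disjoint pair'' step.
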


\begin{remarque}
The set $\cM$ returned by Algorithm \ref{algo_GA} is not necessarily an injective mapping. Let $\cM'$ be obtained by removing all pairs $(u,u')$ of $\cM$ such that $u$ or $u'$ appears at least twice. Theorems \ref{theorem:good_matches} and \ref{theorem:no_mismatchs} guarantee that $\cM'$ still contains a non-vanishing number of correct matches and a vanishing number of incorrect matches, hence one-sided partial alignment holds. Since \texttt{MPAlign} achieves one-sided partial graph alignment, Theorem \ref{theorem:main_result_GRAPHS} easily follows.

A slight adaptation \texttt{MPAlign2} (Alg. \ref{algo_MPAlign2}) of \texttt{MPAlign} (Alg. \ref{algo_GA}) can be found in Appendix \ref{appendix:numerical}, where some results are also reported. These confirm our theory, as the algorithm returns many good matches and few mismatches. A similar algorithm has been recently studied in \cite{piccioli2021aligning}.
\end{remarque}

\subsection{Proof strategy}\label{subsection:proof_strategy_GA}
We start by stating Lemmas that precise the correspondence between sparse graph alignment and correlation detection in trees, as explained in Section \ref{subsection:intuition_algorithm_desription}. These Lemmas are directly taken from \cite{Ganassali20a} (to which we refer for the proofs, see Lemmas 2.1, 2.2, 2.3 and 2.4) and are instrumental in the proofs of Theorems \ref{theorem:good_matches} and \ref{theorem:no_mismatchs}.

\begin{lemme}[Control of the sizes of the neighborhoods]
	\label{lemma:control_S}
	Let $G \sim \G(n,\lambda/n)$, $d = \lfloor c \log n \rfloor$ with $c \log \lambda <1$. For all $\gamma>0$, there is a constant $C=C(\gamma)>0$ such that with probability $1-O\left(n^{-\gamma}\right)$, for all $u \in [n]$, $t \in [d]$:
	\begin{equation}
	\label{eq:lemma:control_S}
	\left| \cS_{G}(u,t) \right| \leq C (\log n) \lambda^t \, .
	\end{equation}
\end{lemme}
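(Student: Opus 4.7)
The plan is to dominate the BFS exploration from $i$ by a branching process and control it via a Chernoff--union bound argument. I focus on the regime $\lambda > 1$ (the opposite case is even easier, since the whole component of $i$ has size $O(\log n)$ w.h.p., and the bound $\lambda^t$ is trivially not a constraint). Writing $S_t := |\cS_{G}(i,t)|$, a BFS exploration of $\bG$ from $i$ gives, conditionally on the vertices discovered up to depth $t-1$, the stochastic domination
\begin{equation*}
S_t \preceq Z_t, \qquad Z_0 = 1, \qquad Z_t \mid Z_{t-1} \sim \Bin(n Z_{t-1}, \lambda/n),
\end{equation*}
since each of the at most $n S_{t-1}$ potential edges to unexplored vertices is present independently with probability $\lambda/n$ and produces at most one new vertex at depth $t$.

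Next, I would design an envelope $(M_t)$ and run an induction. Fix a constant $C = C(\gamma) > 0$ to be chosen large, set $\delta_t := \sqrt{3(\gamma+2)/(C \lambda^t)}$, and define $M_0 := C \log n$ and $M_t := (1 + \delta_t)\lambda M_{t-1}$. Since $\sum_{t \geq 1} \delta_t \leq \sqrt{3(\gamma+2)/C}/(\sqrt{\lambda} - 1)$, taking $C$ large gives $\delta_t \leq 1$ for all $t$ and $\prod_{s \geq 1}(1 + \delta_s) \leq 2$, so that $M_t \leq 2 C (\log n) \lambda^t$ uniformly in $t \leq d$. On the event $A_{t-1} := \{Z_s \leq M_s, \, s \leq t-1\}$, the variable $Z_t$ is stochastically dominated by $\Bin(n M_{t-1}, \lambda/n)$ of mean $\lambda M_{t-1}$, and the multiplicative Chernoff bound gives
\begin{equation*}
\dP(Z_t > M_t \mid A_{t-1}) \leq \exp\!\left(-\delta_t^2 \lambda M_{t-1}/3\right) \leq n^{-(\gamma+2)},
\end{equation*}
using $M_{t-1} \geq C(\log n) \lambda^{t-1}$ and the definition of $\delta_t$. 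Summing these conditional failures over $t \in [d]$ yields $O((\log n)\, n^{-(\gamma+2)})$ for a single root $i$, and a union bound over $i \in [n]$ then gives the desired $O(n^{-\gamma})$ failure probability; renaming the constant $C \leftarrow 2C$ produces the stated bound $|\cS_G(i,t)| \leq C (\log n) \lambda^t$ uniformly in $i$ and $t$.

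The only (minor) technical point is the joint design of the slacks $(\delta_t)$: each must be large enough for the $t$-th Chernoff tail to beat the $n \cdot d$ union bound, while the cumulative product $\prod(1 + \delta_t)$ must remain bounded so that the envelope keeps the shape $O((\log n) \lambda^t)$. The geometric decay $\delta_t \propto \lambda^{-t/2}$ achieves this balance naturally, which is the crux of the computation; the rest is bookkeeping.
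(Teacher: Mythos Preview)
Your argument is correct: the stochastic domination $S_t \preceq \Bin(n S_{t-1},\lambda/n)$, the geometrically decaying slacks $\delta_t \propto \lambda^{-t/2}$ so that $\prod_t(1+\delta_t)$ stays bounded, and the Chernoff/union bound bookkeeping all check out and give the stated $O(n^{-\gamma})$ failure probability. Note that the paper does not actually give its own proof of this lemma---it is quoted verbatim from \cite{Ganassali20a} (see Lemmas 2.1--2.4 there)---but your branching-process-plus-Chernoff approach is precisely the standard route and is essentially what that reference does.
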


\begin{lemme}[Cycles in the neighborhoods in an \ER graph]
\label{lemma:cycles_ER}
Let $G \sim \G(n,\lambda/n)$, $d = \lfloor c \log n \rfloor$ with $c \log \lambda <1/2$. Then there exists $\eps>0$ such that for any vertex $u \in [n]$, one has
\begin{equation}
\label{eq:lemma:cycles_ER}
\mathbb{P}\left(\cN_{G,d}(u) \mbox{ contains a cycle}\right) = O\left( n^{-\eps}\right).
\end{equation}
\end{lemme}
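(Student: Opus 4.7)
The plan is to reduce the cycle event to the existence of a ``lasso'' subgraph rooted at $i$ and then apply a first-moment bound. First I would observe that if $\cN_{\bG,d}(i)$ contains a cycle $C$, then picking a vertex $v \in C$ of minimal distance $r \leq d$ to $i$ in $\bG$ and a shortest path $P$ from $i$ to $v$, the union $P \cup C$ is a lasso: a self-avoiding tail $P$ of length $r$ meeting a disjoint cycle $C$ of length $\ell$ only at $v$. Disjointness away from $v$ holds by minimality of $r$, since an intermediate tail vertex lying on $C$ would contradict the choice of $v$. Moreover, every vertex of $C$ lies within distance $d$ of $i$, and the cycle-distance from $v$ to the $C$-antipodal vertex is $\lfloor \ell/2 \rfloor$, so $r + \lfloor \ell/2 \rfloor \leq d$, hence $\ell \leq 2d+1$.

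Second, I would apply the union bound:
\begin{equation*}
\dP\bigl(\cN_{\bG,d}(i)\ \text{contains a cycle}\bigr) \leq \sum_{\ell=3}^{2d+1} \sum_{r=0}^{d - \lfloor \ell/2 \rfloor} N_{r,\ell},
\end{equation*}
where $N_{r,\ell}$ is the expected number of such lassos rooted at $i$. A routine enumeration gives $N_{r,\ell} \leq n^{r+\ell-1}(\lambda/n)^{r+\ell} = \lambda^{r+\ell}/n$, since there are at most $n^{r}$ ordered choices for the $r$ non-root tail vertices, at most $n^{\ell-1}$ ordered choices for the remaining $\ell-1$ cycle vertices, and each of the $r+\ell$ required edges is present independently with probability $\lambda/n$.

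Third, I would sum geometrically. For $\lambda > 1$ the double sum is $O(\lambda^{2d}/n)$; for $\lambda \leq 1$ it is trivially $O(1/n)$. Substituting $d = \lfloor c\log n \rfloor$ with $c\log\lambda < 1/2$ yields $\lambda^{2d} = O(n^{2c\log\lambda})$, and the upper bound becomes $O(n^{-\eps})$ with $\eps := 1 - 2c\log\lambda > 0$, matching the hypothesis of the lemma exactly.

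The main obstacle is the lasso reduction: one must verify that the tail $P$ and the cycle $C$ are vertex-disjoint except at $v$ (using the shortest-path minimality), and that the length constraint $\ell \leq 2d+1$ really does hold so that the summation range aligns with the hypothesis. Once these geometric points are pinned down, the first-moment computation is essentially forced, and the threshold $c\log\lambda < 1/2$ is seen to be the natural one for the second moment of the edge count in the explored neighborhood to remain controlled.
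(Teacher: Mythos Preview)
Your overall plan---a first-moment bound over rooted lasso subgraphs---is the standard argument, and it is what the paper invokes (the proof is deferred to \cite{Ganassali20a}). There is, however, a real gap in the reduction. From ``every vertex of $C$ lies within distance $d$ of $i$'' and ``the cycle-distance from $v$ to the antipodal vertex $w$ is $\lfloor \ell/2\rfloor$'' you conclude $r+\lfloor \ell/2\rfloor\le d$. This does not follow: you only know that the graph distance from $i$ to $w$ is at most $d$, while $r+\lfloor\ell/2\rfloor$ is merely an \emph{upper} bound for that distance (go along $P$, then halfway round $C$). A shortcut in $\bG$ from $i$ to $w$ that avoids the lasso can make the true distance much smaller than $r+\lfloor \ell/2\rfloor$, so your argument places no constraint on $\ell$. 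Without a bound on $\ell$ the inner sum $\sum_{\ell\ge 3}\lambda^{r+\ell}/n$ diverges when $\lambda>1$, and the whole computation collapses.

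The fix is easy and preserves your enumeration and summation verbatim. Rather than start from an arbitrary cycle in $\cN_{\bG,d}(i)$, pick any surplus edge $\{u,w\}$ with $u,w\in\cN_{\bG,d}(i)$ not belonging to the BFS tree from $i$; such an edge exists precisely when the neighbourhood is not a tree. Let $z$ be the last common vertex of the BFS paths from $i$ to $u$ and from $i$ to $w$, and let $r$ be the depth of $z$. The union of these two paths together with $\{u,w\}$ is a lasso with tail length $r$ and cycle length $\ell$ equal to one plus the sum of the depths of $u$ and $w$ below $z$, so $r+\ell\le 2d+1$. Summing $\lambda^{r+\ell}/n$ over $3\le r+\ell\le 2d+1$ gives $O(d\,\lambda^{2d}/n)=O((\log n)\,n^{2c\log\lambda-1})=O(n^{-\eps})$ for any $\eps<1-2c\log\lambda$, exactly the conclusion you were aiming for.
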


\begin{lemme}[Two neighborhoods are typically independent]
\label{lemma:indep_deighborhoods}
Let $G \sim \G(n,\lambda/n)$ with $\lambda >1$, $d = \lfloor c \log n \rfloor$ with $c \log \lambda < 1/2 $. Then there exists $\eps>0$ such that for any fixed nodes $u \neq v$ of $G$, the total variation distance between the joint distribution of the neighborhoods $\mathcal{L} \left(\left(\cS_{G}(u,t),\cS_{G}(v,t)\right)_{t \leq d}\right)$ and the product distribution $\mathcal{L} \left(\left(\cS_{G}(u,t)\right)_{t \leq d}\right) \otimes \mathcal{L} \left(\left(\cS_{G}(v,t)\right)_{t \leq d}\right)$ tends to $0$ as $O\left(n^{-\eps}\right)$ when $n \to \infty$.
\end{lemme}

\begin{lemme}[Coupling neighborhoods with Galton-Watson trees]
\label{lemma:coupling_GW} We have the following couplings:
\begin{itemize}
    \item[$(i)$] Let $G \sim \G(n,\lambda/n)$, $d = \lfloor c \log n \rfloor$ with $c \log \lambda<1/2$. Then there exists $\eps>0$ such that for any fixed node $u$ of $G$, the variation distance between the distribution of $\cN_{G,d}(u)$ and the distribution $\GWl_{d}$ tends to 0 as $O\left( n^{-\eps}\right)$ when $n \to \infty$.
    \item[$(ii)$] For $(G,H) \sim \G(n,q=\lambda/n,s)$ with planted alignment $\pi^*$, $d = \lfloor c \log n \rfloor$ with $c \log (\lambda s)<1/2$ and $c \log (\lambda (1-s))<1/2$, there exists $\eps>0$ such that for any fixed node $u$ of $G$, the variation distance between the distribution of $(\cN_{G,d}(u),\cN_{H,d}(\pi^*(u)))$ and the distribution $\dPls_{\infty}$ (as defined in Section \ref{subsection:model_random_trees}) tends to 0 as $O\left( n^{-\eps}\right)$ when $n \to \infty$.
\end{itemize}
\end{lemme}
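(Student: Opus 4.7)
The plan is to prove both parts via a breadth-first exploration of neighborhoods coupled with samples of the appropriate branching processes, with the coupling failing only when a sub-polynomially small event occurs. I would begin with part $(i)$ as the cleaner warm-up, and then adapt the argument to the correlated setting for part $(ii)$.

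For part $(i)$, I would reveal $\cN_{\bG,d}(i)$ generation by generation. Conditionally on having already exposed a set $\cR$ of $k$ vertices, the number of new children of any vertex being processed is distributed as $\Bin(n-k,\lambda/n)$. I would couple this generation-by-generation with an independent sequence of $\Poi(\lambda)$ random variables driving a $\GW_{\lambda,d}$ tree. Using the standard bound $\DTV(\Bin(n-k,\lambda/n),\Poi(\lambda)) = O(k/n + 1/n)$ and applying Lemma \ref{lemma:control_S}, the total number of offspring variables to couple is at most $\sum_{t\leq d}|\cS_{\bG}(i,t)| \leq C(\log n)\lambda^d$ on a high-probability event, and throughout the exposure process $k$ remains bounded by the same quantity. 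Summing the per-step TV errors and using $\lambda^d = n^{c\log\lambda}$ with $c\log\lambda < 1/2$ gives a total coupling failure probability of order $n^{2c\log\lambda - 1}(\log n)^2 = O(n^{-\eps})$ for some $\eps>0$, yielding the claim.

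For part $(ii)$, I would use the explicit construction of the correlated model: draw the "intersection graph" $\bG^{\cap}\sim\cG(n,\lambda s/n)$, then two independent "noise graphs" of parameter $\lambda(1-s)/n$ superimposed to produce $\bG$ and $\bG'$ (with the relabeling by $\bsigma^*$). Now run a joint BFS from $i$ in $\bG$ and from $\bsigma^*(i)$ in $\bG'$, exploring each newly reached vertex simultaneously in the intersection part and in each of the two noise parts. At each step the number of new intersection-children is $\Bin(n-k,\lambda s/n)$, and the numbers of additional noise-children in $\bG$ and $\bG'$ are independent $\Bin$ variables with parameter $\lambda(1-s)/n$, matching the offspring mechanism of $\dP_1$: a $\GW_{\lambda s}$ backbone $\btau^*$ augmented into two independent $(\lambda,s)$-augmentations. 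The same Binomial-to-Poisson coupling bound, now applied three times per vertex and summed over $O((\log n)(\lambda s)^d + (\log n)(\lambda)^d)$ explored vertices, gives TV error $O(n^{-\eps})$ under the two hypotheses $c\log(\lambda s)<1/2$ and $c\log(\lambda(1-s))<1/2$. The uniform random relabeling of the neighborhoods induced by the canonical vertex labels in $\bG,\bG'$ matches the uniform relabeling in the definition of $\dP_1$, so no additional error is introduced.

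The main obstacle is bookkeeping rather than conceptual: keeping the coupling consistent across the three simultaneous explorations (intersection plus two noise components) while tracking the event that no already-revealed vertex is revisited (the "no-collision" event, which is what distinguishes a true tree from a neighborhood with cycles). For this one invokes Lemma \ref{lemma:cycles_ER} to rule out cycles in either neighborhood with probability $1-O(n^{-\eps})$, and an analogous collision bound for the cross-graph case: the probability that the same fresh vertex is reached through both the $\bG$-exploration and the $\bG'$-exploration outside of the intersection mechanism is $O(((\log n)\lambda^d)^2/n) = O(n^{2c\log\lambda-1}\mathrm{polylog}(n))$, again $O(n^{-\eps})$ by the assumption $c\log\lambda<1/2$. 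Combining all these error terms by a union bound closes the argument and delivers the claimed polynomial rate.
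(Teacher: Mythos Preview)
The paper does not prove this lemma itself; it cites \cite{Ganassali20a} (Lemmas 2.1--2.4) for the argument. Your breadth-first exploration coupled generation-by-generation with Poisson offspring, with sizes controlled by Lemma~\ref{lemma:control_S} and cycles excluded via Lemma~\ref{lemma:cycles_ER}, is precisely the standard approach and matches what is carried out in that reference.

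One point to watch in part~$(ii)$: in your final cross-collision bound you invoke the hypothesis $c\log\lambda<1/2$, but the lemma as stated only assumes $c\log(\lambda s)<1/2$ and $c\log(\lambda(1-s))<1/2$, and these two together do \emph{not} imply $c\log\lambda<1/2$ (take e.g.\ $s=1/2$). Since the augmented neighborhoods in $\bG$ and $\bG'$ are marginally $\GW_\lambda$, their sizes do scale as $\lambda^d$, so your bound $((\log n)\lambda^d)^2/n$ genuinely needs $c\log\lambda<1/2$. This is harmless for the paper's applications (Theorems~\ref{theorem:good_matches} and~\ref{theorem:no_mismatchs} assume $c\log(\lambda(2-s))<1/2$ or $c\log\lambda<1/4$, either of which implies $c\log\lambda<1/2$), but as literally stated the lemma's hypotheses are slightly too weak for your argument to go through verbatim.
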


\textbf{Proof of Theorems \ref{theorem:good_matches} and \ref{theorem:no_mismatchs}}
\begin{proof}[Proof of Theorem \ref{theorem:good_matches}] 
First, since $c \log \left(\lambda\left(2-s\right)\right)<1/2$, we also have $c \log \left(\lambda\left(1-s\right)\right)<1/2$ and $c \log \left(\lambda s\right)<1/2$. For $i \in [n]$, point $(ii)$ of Lemma \ref{lemma:coupling_GW} thus implies that the two neighborhoods $\cN_{G,d}(u)$ and $\cN_{H,d}(\pi^*(u))$ can be coupled with trees drawn under $\dPls_{\infty}$ as defined in Section \ref{subsection:model_random_trees} with probability $\geq 1-O(n^{- \eps})$. 

Under this coupling, there is a probability $\alpha_3 >0$ that the root in the intersection tree has at least three children, and since we work under the conditions of Theorem \ref{theorem:main_result_TREES} point $(v)$ implies that the three likelihood ratios are greater than $\theta$ with positive probability $(1-\pext(\lambda s))^3>0$. Hence, the probability of $M_u := \left\lbrace (u,\pi^*(u)) \in \cM \right\rbrace$ is at least $(1-o(1)) \alpha_3 (1-\pext(\lambda s))^3  =: \alpha >0$. 

Let $G_{\cup}$ be the underlying aligned union graph, that is the graph made of the union of the edge set of $G $ and the edge set of $G'$ -- which we recall is the version of $H$ \emph{before} the relabeling step.
$G^{\pi^*}$ is the relabeling of $G$ according to permutation $\pi^*$. We have $G_{\cup} \sim \G(n,\lambda(2-s)/n)$. For $u \neq v \in [n]$, define $I_{u,v}$ the event on which the two neighborhoods of $u$ and $v$ in $G_{\cup}$ coincide with their independent couplings up to depth $d$. Since $c \log \left(\lambda\left(2-s\right)\right)<1/2$, by Lemma \ref{lemma:indep_deighborhoods}, $\mathbb{P}(I_{u,v})=1-o(1)$. Then for $0<\eps<\alpha$, Markov's inequality yields

\begin{flalign*}
\mathbb{P}\left(\frac{1}{n} \sum_{i=1}^{n} \one_{\lbrace (u,\pi^*(u)) \in \cM \rbrace}<\alpha-\eps\right) & \leq \mathbb{P}\left(\sum_{u \in [n]} \left(\mathbb{P}(M_u)-\one_{M_u}\right)>\eps n\right)\\
& \leq \frac{1}{n^2 \eps^2} \left(n \mathrm{Var}\left(\one_{M_1}\right)+ n(n-1)\mathrm{Cov}\left(\one_{M_1},\one_{M_2}\right) \right)\\
& \leq \frac{\mathrm{Var}\left(\one_{M_1}\right)}{n \eps^2} + \frac{1-\mathbb{P}\left(I_{1,2}\right)}{ \eps^2} \to 0,
\end{flalign*}
which ends the proof.
\end{proof}

\begin{remark}
Note that in view of the proof here above, the recovered fraction $\Omega(1)$ guaranteed by in Theorem \ref{theorem:good_matches} can be taken as close as wanted to
 $$ \alpha(\lambda s) := (1-\pext(\lambda s))^3 \left(1- \p_{\lambda s}(0) - \p_{\lambda s}(1) - \p_{\lambda s}(2) \right). $$
This fraction is a priori not optimal, and can be interestingly compared with recent results in \cite{ganassali2021impossibility} showing that no more than a fraction $1-\pext(\lambda s)$ of the nodes can be recovered. 
\end{remark}

\begin{proof}[Proof of Theorem \ref{theorem:no_mismatchs}]
First, we condition on the event $\cA$ that all $d-$neighborhoods in $G$ and $H$ are of size at most $C( \log n)\lambda^d$, which happens with probability $1-o(1)$ by Lemma \ref{lemma:control_S}. Note that by assumption this uniform upper bound is $O((\log n)n^{1/4})$.

In order to control the probability that $u$ is matched with some 'wrong' $u'$ by our algorithm, we follow the same first steps as in the proof of Theorem 2.2. in \cite{Ganassali20a}: we will first fix $u$ in $G$ and work on the event $\cE_u$ where $\cN_{G_{\cup},2d}(u)$ has no cycle. Since $c\log(\lambda) <1/4$, this event happens with probability $1-o(1)$ by Lemma \ref{lemma:cycles_ER}. 

Consider then $u'$ in $H$ such that $u \neq \pi^*(u)$. If $u$ and $u'$ are matched by \texttt{MPAlign}, then necessarily $\cN_{G}(u,d)$ and $\cN_{H}(u',d)$ contain no cycle: the $d-$neighborhoods are thus tree-like. For any choice of distinct neighbors $v,w,x$ of $u$ in $G$ (resp. $v',w',x'$ of $u'$ in $H$), we define the corresponding pairs of trees of the form $(T_\ell,T'_\ell)$, where $T_\ell$ (resp. $T'_\ell$) is the tree of depth $d-1$ rooted at $\ell \in \set{v,x,w}$ in $G$ (resp. at $\ell \in \set{v',x',w'}$ in $H$) after deletion of edge $\left\lbrace u,\ell \right\rbrace$ in $G$ (resp. $\left\lbrace u',\ell \right\rbrace$ in $H$). A moment of thought shows that, no matter the choice of $v,x,w$ and $v',x',w'$, on event $\cE_i$, one of these pairs must be made of \emph{two disjoint trees}. 

We now focus on a pair $(T,T')$ of such disjoint trees: these trees of depth $d-1$ can be built recursively by sampling a binomial number of children for each vertex. Since we condition on the fact that the trees are not intersecting, if at some point $k$ vertices have been uncovered, then the number of children to be drawn is exactly of distribution $\Bin\left(n-k , \lambda/n \right)$. With this exact construction, we denote by $\widetilde{\dP}_d$ the distribution of the pair $(t,t')$. Define 

\begin{equation}\label{eq:M_def}
    N_{d-1} := \frac{\widetilde{\dP}_{d-1}(t,t')}{\dPl_{d-1}(t,t')}.
\end{equation}
We have that
\begin{flalign*}
\widetilde{\dP}_{d-1}(L_{d-1}(T,T')> \theta \cap \cA ) & = \dEl_{d-1}\left[N_{d-1} \times \one_{\cA} \times \one_{L_{d-1}(T,T')> \theta} \right] \\
& \leq \dEl_{d-1}[N_{d-1}^2 \one_{\cA} ]^{1/2} \theta^{-1/2},
\end{flalign*} by a successive use of Cauchy-Schwarz and Markov's inequalities. We now state the following Lemma, proved in Appendix \ref{appendix:proof_lemma:control_M2}:

\begin{lemma}\label{lemma:control_M2}
With the previous notations, we have
\begin{equation}\label{eq:lemma:control_M2}
    \dEl_{d-1}\left[ N_{d-1}^2 \one_{\cA} \right] = O(1).
\end{equation}
\end{lemma}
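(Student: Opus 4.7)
The plan is to view $\bM$ as a product of local likelihood ratios and to exploit a martingale structure under $\dP_0$. Enumerate the vertices of $\bt \cup \bt'$ in a canonical BFS order (first from the root of $\bt$, then from the root of $\bt'$), and let $V_k$ denote the $k$-th uncovered vertex, $N_k$ the number of vertices uncovered strictly before $V_k$, and $\cF_k$ the $\sigma$-field generated by the first $k$ steps. By construction, the offspring of $V_k$ is distributed as $\Bin(n - N_k, \lambda/n)$ under $\widetilde{\dP}_0$ and as $\Poi(\lambda)$ under $\dP_0$, so that
$$
\bM \;=\; \prod_k R_k(c_{V_k}), \qquad R_k(c) := \frac{\Bin(n - N_k, \lambda/n)(c)}{\Poi(\lambda)(c)},
$$
the product running over internal vertices. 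Since $\sum_c \Poi(\lambda)(c)\,R_k(c)=1$, the partial products $\bM_k$ form a $\dP_0$-martingale. Setting
$$
g(N) := \sum_c \frac{\Bin(n-N,\lambda/n)(c)^2}{\Poi(\lambda)(c)} = \dE_0\bigl[R_k(c_{V_k})^2 \,\big|\, N_k = N \bigr],
$$
and using that $N_{k+1}$ is $\cF_k$-measurable, the process $Z_k := \bM_k^2 / \prod_{j\leq k} g(N_j)$ is itself a $\dP_0$-martingale with $\dE_0[Z_k]=1$.

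The next step is to bound $g(N)$ uniformly in $N$. Using the elementary inequality $\binom{m}{c}^2 c! \leq \binom{m}{c} m^c$, with $m=n-N$ and $p=\lambda/n$, gives
$$
g(N) \leq e^\lambda \sum_c \binom{m}{c} \left( \frac{m p^2}{\lambda} \right)^c (1-p)^{2(m-c)} = e^\lambda \left( (1-p)^2 + \frac{m p^2}{\lambda} \right)^m .
$$
Expanding $(1-p)^2 + mp^2/\lambda = 1 - \lambda/n + \lambda(\lambda - N)/n^2$ and raising to the power $m = n-N$ yields, for every $0 \leq N \leq n$,
$$
g(N) \leq \exp\!\left( \frac{\lambda^2}{n} + O\!\left(\frac{\lambda N}{n}\right) \right).
$$

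It remains to combine these ingredients on the event $\cA$. Since $c \log \lambda < 1/4$, Lemma~\ref{lemma:control_S} provides on $\cA$ a uniform bound $K_n := C(\log n)\lambda^d = O\bigl((\log n)\, n^{1/4}\bigr)$ on the sizes of all $d$-neighborhoods. Hence $\bt$ and $\bt'$ each contain at most $K_n$ vertices, so $N_k \leq 2 K_n$ at every step and at most $2 K_n$ steps occur. Summing the exponents in the $g(N_k)$ bound gives
$$
\prod_k g(N_k) \leq \exp\!\left( O\!\left( \frac{K_n^2 \lambda}{n} \right) \right) = \exp\!\left( O\!\left( \frac{(\log n)^2}{\sqrt{n}} \right) \right) = 1+o(1).
$$
Therefore $\bM^2 \,\ind{\cA} \leq (1+o(1))\, Z_\infty$, and taking expectation yields
$$
\dE_0\bigl[\bM^2 \,\ind{\cA}\bigr] \leq (1+o(1))\, \dE_0[Z_\infty] = 1+o(1) = O(1),
$$
which is the claimed bound. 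The main technical point will be the uniform bound on $g(N)$: the factorization of $\bM$ and the martingale structure of $Z_k$ are essentially bookkeeping, whereas controlling the per-step second moment on the scale $N \leq K_n \asymp (\log n)\, n^{1/4}$ is where the assumption $c\log\lambda < 1/4$ (equivalently $\lambda^d = O(n^{1/4})$) is used crucially.
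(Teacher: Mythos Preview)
Your proof is correct and arrives at the same $O(1)$ bound, but by a different mechanism than the paper's.

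The paper does not introduce any martingale. It bounds each factor of $\bM$ \emph{pointwise}: using $m!/(m-c)!\le m^c$ and $(1-\lambda/n)^{m-c}\le e^{-\lambda(m-c)/n}$ one gets
\[
\frac{\Bin(n-2-\bv_s,\lambda/n)(\bc_s)}{\pi_\lambda(\bc_s)}\;\le\;\exp\!\Bigl(\tfrac{\lambda}{n}(\bv_s+\bc_s+2)\Bigr),
\]
so that $\bM\le\exp\bigl(\tfrac{2\lambda T}{n}+\tfrac{\lambda}{n}\sum_s(T-s)\bc_s\bigr)$. Squaring and using that under $\dP_0$ the $\bc_s$ are i.i.d.\ $\Poi(\lambda)$, the paper then evaluates the Poisson Laplace transform directly, obtaining an exponent of order $T^2/n$, which is $o(1)$ on $\cA$ since $T=O((\log n)\,n^{1/4})$.

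Your route replaces the pointwise bound by an exact computation of the per–step second moment $g(N)=\dE_0[R_k^2\mid N_k=N]$ and packages the product through the auxiliary martingale $Z_k=\bM_k^2/\prod_{j\le k}g(N_j)$. This is a little more structural: it isolates cleanly that all the work lies in bounding $\prod_k g(N_k)$, and your estimate $g(N)\le\exp\bigl(\lambda^2/n+O(\lambda N/n)\bigr)$ plays exactly the role the paper's pointwise bound plays there. Both arguments collapse to the same control $K_n^2/n=o(1)$, i.e.\ to the hypothesis $c\log\lambda<1/4$. What the martingale buys you is that the dependence between $T$ and the $\bc_s$'s is handled automatically, whereas the paper's ``condition on $T$, then take the MGF'' step is slightly informal on this point; conversely the paper's argument is shorter and avoids setting up $Z_k$.

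One small point worth tightening: you invoke $\dE_0[Z_\infty]=1$, but the number of steps $T$ is a finite-a.s.\ yet unbounded stopping time, so optional stopping needs a word. On $\cA$ you have $T\le 2K_n$ deterministically, hence $Z_T\,\ind{\cA}=Z_{T\wedge 2K_n}\,\ind{\cA}$, and since $T\wedge 2K_n$ is a bounded stopping time, $\dE_0[Z_{T\wedge 2K_n}]=1$. With this adjustment your inequality $\dE_0[\bM^2\ind{\cA}]\le(1+o(1))$ goes through verbatim; this is bookkeeping, not a gap.
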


Together with the previous Lemma, noting that with high probability the maximum degree in $G$ and $H$ is less than $\log n$, union bound gives
\begin{flalign*}
\mathbb{P}\left(\cA \cap \left\{\exists u  \neq \pi^*(u), \; (u,u') \in \cM \right\}\right) & \leq \dP(\bar{\cE_i})+ o(1) + n \times \log^6 n \times \theta^{-1/2} \\
& = O\left((\log^6 n) \times n \times \exp(-n^{\gamma/2})\right) = o(1).
\end{flalign*}
The proof follows by appealing to Markov's inequality.
\end{proof}

\section{Conclusion, open questions}\label{section:conclusion}
Detection of correlation in trees, introduced and studied in this paper, is a fundamental statistical task of intrinsic interest besides its original motivation from  graph alignment. While in this paper we focus on \ER graphs and hence Poisson branching trees, more general locally tree-like graphs could be considered, giving rise to correlation detection problems on more general branching trees.\\

The present work suggests the following open questions:

\begin{itemize}
    \item[Q1.] Recall that the non-planted version of graph alignment of two graphs with adjacency matrices $A$ and $B$ consists in solving the quadratic assignment problem \eqref{eq:QAP}.
    
    \emph{In the case where $A,B$ are independent \ER graphs, what is the value of the objective $$\max_\Pi \langle A, \Pi B \Pi^T\rangle$$ in the large size limit? }
    
    Some upper bounds are  obtained in the literature \cite{Wu20} -- to study the detection problem -- but to the best of our knowledge no exact equivalent is known.
    
    \item[Q2.] In a previous paper \cite{Ganassali20a}, another similarity score between trees $t$ are $t'$ is studied: the tests are based on the matching weight, defined as the largest number of leaves at depth $d$ of a common subtree of $t$ and $t'$. \emph{Under the null model, where $t$ and $t'$ are e.g. independent Galton-Watson trees, what is the typical matching weight of $t$ and $t'$?}
     
    \item[Q3.] A locally tree-like model in which graph alignment appears very challenging is the regular model. In particular, any method based on exploiting the locally tree-like structure -- if no other information such as labels on nodes is known -- will fail. So, we may ask the question: \emph{what are the information-theoretic and computational limits for regular graph alignment?}

    \item[Q4.] \emph{What is the optimal overlap -- or, the largest subset $\cC^*$ -- that one can hope to align in the sparse regime?} It is shown in \cite{ganassali2021impossibility} that -- up to some vanishing fraction of the nodes --  $\cC^*$ is contained in the giant component $\cC_1$ of the aligned intersection graph. 

\end{itemize}

We end this conclusion by providing an answer to question Q4. in the isomorphism case $s=1$, and a conjecture for the general case $s \in [0,1]$.

In the exact isomorphism case $s=1$, leveraging on results of Luczak \cite{luczak1988} on the structure of the automorphism group of random graphs, we are able to show that in the exact isomorphism case $s=1$, $\cC^*$ is almost -- i.e, up to some vanishing fraction -- the set of all points invariant by any automorphism. 
More precisely, it is shown in Appendix \ref{appendix:discussion} that for $\lambda$ greater than some constant, with high probability, any isomorphism $\hat{\pi}$ between $G$ and $H$ will achieve partial recovery and will satisfy
$$ \ov \left(\hat{\pi},\pi^* \right) \geq 1 - \pext(\lambda) - \lambda(\lambda + 5)e^{-2\lambda}, $$
where we recall that $\pext(\lambda)$ is defined as the probability that a Galton-Watson
tree of offspring $\Poi(\lambda)$ survives. This is the best overlap up to some vanishing terms when $\lambda \to \infty$. We refer to Appendix \ref{appendix:discussion} for further discussion.

We conjecture that this observation could be generalized to the non-isomorphic case $s <1$, namely that $\cC^*$ is almost the set $\cI$ of invariant nodes \emph{in the aligned intersection graph}.

\addcontentsline{toc}{section}{Conclusion, open questions}

\section*{Acknowledgments}
The authors would like to thank Guilhem Semerjian for helpful discussions, and Jakob Maier for its feedback on some of the proofs.
This work was partially supported by the French government under management of Agence Nationale de la Recherche as part of the “Investissements d’avenir” program, reference ANR19-P3IA-0001 (PRAIRIE 3IA Institute).

\bibliographystyle{plain}
\bibliography{biblio.bib}

\appendix

\section{Discussion: the isomorphic case ($s=1$)}
\label{appendix:discussion}
We discuss here the graph alignment problem in the case where $s=1$ in the correlated \ER model \eqref{eq:CER_model}, namely when the graphs $G$ and $H$ are isomorphic, $\pi^*$ being  one of the graph isomorphisms between $G$ and $H$. We then ask the question (Q4. hereabove): \emph{what is the best fraction of nodes that can be recovered with high probability?} 

The answer to the above question comes with the following easy remark: the joint distribution of $(G,H)$ is invariant by any relabeling of $G$ according to some $\sigma \in \Aut(G)$, where $\Aut(G)$ denotes the automorphism group of $G$. The set of nodes that can be aligned w.h.p. is hence
\begin{equation}
    \label{eq:def_I(G)}
    \cI(G) := \left\lbrace i \in V(G), \; \forall \, \sigma \in \Aut(G), \sigma(i)=i \right\rbrace.
\end{equation}
In other words, $\cI(G)$ is the set of vertices of $G$ invariant under any automorphism.

Let us denote $\cC_1(G)$ the largest connected component of $G$ (the \emph{giant component}), and $\overline{\cC_1(G)}$ the subgraph made of all the smaller components. It is clear that
\begin{equation*}
    \Aut(G) = \Aut(\cC_1(G)) \times \Aut(\overline{\cC_1(G)}).
\end{equation*}
Recent work \cite{ganassali2021impossibility} shows that $\cI(G) \, \cap \, \overline{\cC_1(G)}$ contains at most a vanishing fraction of the points: it is not hard to see indeed that smaller components mainly consist in isolated trees, which are proved to have many copies in the graph when $n$ gets large, yielding some automorphisms that swap almost all vertices in $\overline{\cC_1(G)}$.
Hence, for our purpose, the main part of $\cI(G)$ comes from the study of $\Aut(\cC_1(G))$ and $\cI(\cC_1(G))$. 

When $G \sim \G(n,q)$, these sets have been thoroughly studied by Łuczak in \cite{luczak1988}. Vertices of the giant component that are not invariant under automorphism are mainly (i.e. up to $o(n)$ errors) vertices that do not belong to the \emph{2-core}\footnote{The \emph{2-core} of a graph is defined as the maximal subgraph of minimal degree at least $2$.} of $G$, denoted by $\cC^{(2)}(G)$. 

Simple structures appearing in $\cC_1(G) \setminus \cI(G)$ are leaves (degree one nodes) $j,k$ with common neighbor $u$ in $\cC_1(G)$.
\cite{luczak1988} upper-bounds the size of $\cC_1(G) \setminus \cI(G)$ by the number of (generalizations) of such structures, thus obtaining the following

\begin{theorem}[\cite{luczak1988}, Theorems 3 and 4] \label{th:luczak}
Let $G \sim \G(n,q)$ with $q = \lambda/n$. Let $(K_n)_n$ be a sequence such that $K_n \to \infty$. There exists $\lambda_0 >0$ such that if $\lambda > \lambda_0$, then with high probability,
\begin{equation}\label{eq:th:luczak}
    \card{\cC^{(2)}(G)} - \card{\cI(\cC^{(2)}(G))} \leq K_n, \quad \mbox{and} \quad \card{\cC_1(G)} - \card{\cI(\cC_1(G))} \leq \lambda(\lambda + 5)e^{-2\lambda} n.
\end{equation}
\end{theorem}

Equation \eqref{eq:th:luczak} of Theorem \ref{th:luczak} states that for $\lambda$ greater than some constant, almost all vertices of the 2-core of $G$ are invariant, whereas at most a fraction $\lambda(\lambda + 5)e^{-2\lambda}$ of the nodes are in the giant component and not in $\cI(G)$. In this case, with high probability, any isomorphism $\hat{\pi}$ between $G$ and $H$ will achieve partial recovery and will satisfy
$$ \ov \left(\hat{\pi},\pi^* \right) \geq 1 - \pext(\lambda) - \lambda(\lambda + 5)e^{-2\lambda}, $$
where we recall that $\pext(\lambda)$ is defined as the probability that a Galton-Watson
tree of offspring $\Poi(\lambda)$ survives. 

However, finding efficiently such an isomorphism $\hat{\sigma}$ is known to be challenging in the general case (see e.g. \cite{Arvind2002}): hence, whether there exists a polynomial time algorithm achieving this optimal bound remains an open question.

\begin{figure}[H]
    \centering
    \vspace{0.2cm}
    \includegraphics[scale=0.5]{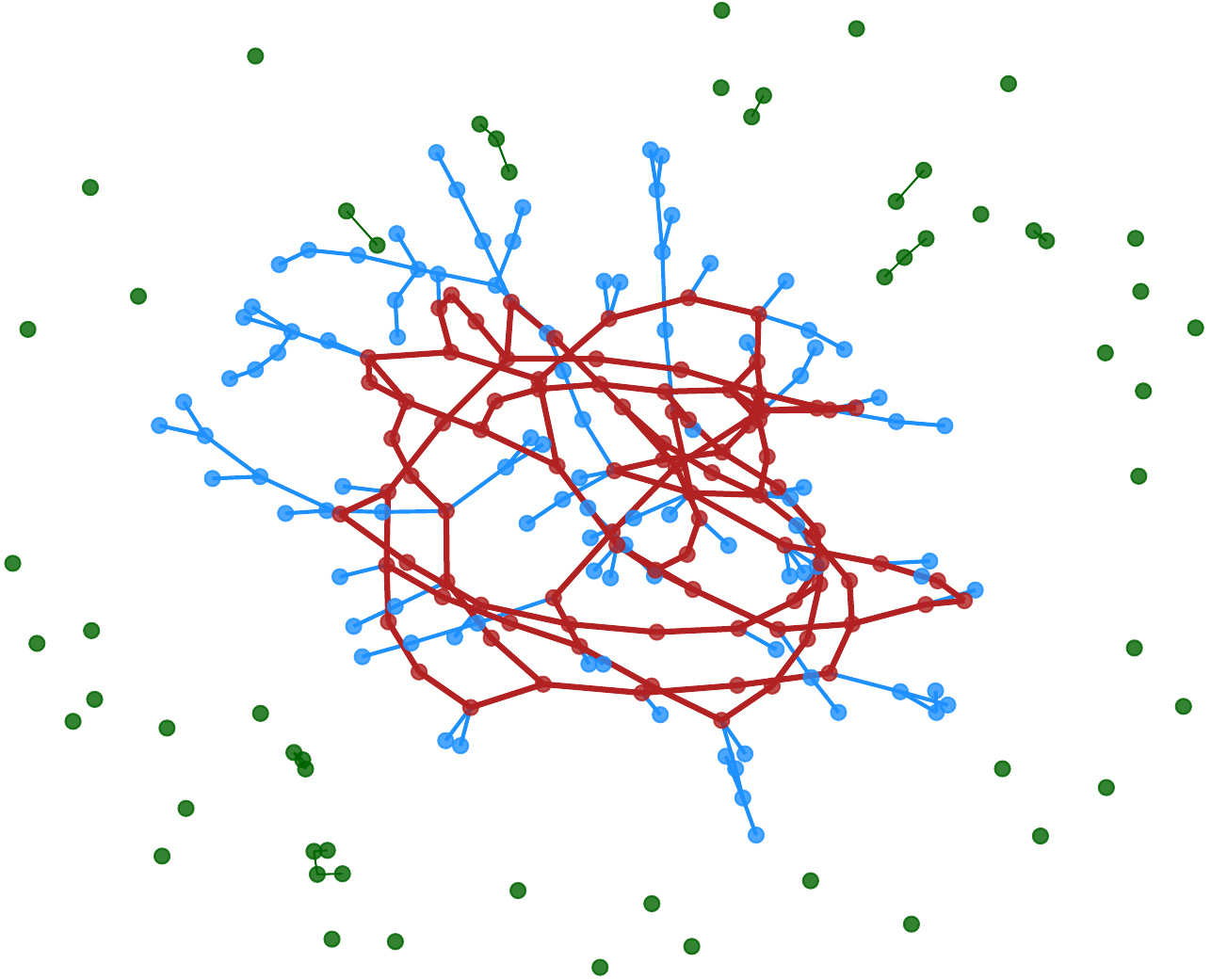}
    \caption{Sample $G$ from model $\G(n,\lambda/n)$, with $\lambda = 2$ and $n=250$. Vertices of $\overline{\cC_1(G)}$ (resp. of $\cC_1(G) \setminus \cC^{(2)}(G)$, $\cC^{(2)}(G)$) are shown in green (resp. blue, red).}
    \label{fig:example_two_core}
\end{figure}

\section{Numerical experiments for \texttt{MPAlign2}}\label{appendix:numerical}

In this section, we give some details on a practical implementation of our algorithm. The code used for these experiments is available at https://github.com/mlelarge/\texttt{MPAlign}.jl.

Given an edge $\set{u,v}$ of a graph, we denote by $u\to v$ and $v \to u$ the directed edges from $u$ to $v$ and $v$ to $u$ respectively. Now given two graphs $G=(V,E)$ and $H=(V',E')$, we define the matrix $(m^t_{u \to v, u'\to v'})_{\set{u,v}\in E , \set{u',v'}\in E'}\in \R_+^{2|E|\times 2|E'|}$ recursively in $t$, as follows:
\begin{equation}
\label{eq:message}
    m^{t+1}_{u \to v, u'\to v'} =\sum_{k=0}^{d_u\wedge d'_{u'} -1}\widetilde{\Psi}(k,d_u-1,d'_{u'}-1)\sum_{\substack{\{\ell_1,\dots \ell_k\}\in \partial u\backslash v \\ \{w_1,\dots w_k\}\in \partial u'\backslash v'}}\sum_{\sigma\in \cS_k}\prod_{a=1}^k m^t_{\ell_a\to u, w_{\sigma(a)}\to u'},
\end{equation}
where where $d_u$ (resp. $d'_{u'}$) is the degree of $u$ in $G$ (resp. of $u'$ in $H$), $\widetilde{\Psi}(k,d_1,d_2) = k!\Psi(k,d_1,d_2)$, and $\partial u \backslash v$ (resp. $\partial u' \backslash v'$) is a shorthand notation for $\cN_G(u)\setminus \{v\}$ (resp. $\cN_{H}(u')\setminus \{v'\}$) and by convention $m^0_{u \to v, u'\to v'}=1$.

Denoting $\partial u := \cN_G(u)$ (resp. $\partial u' := \cN_{H}(u')$), for $t \geq 0$ we define the matrix $(m^t_{u,u'}) \in \R_+^{V\times V'}$:
\begin{equation}
\label{eq:aggregation}
    m^t_{u,u'} = \sum_{k=0}^{d_u\wedge d'_{u'}}\widetilde{\Psi}(k,d_u,d'_{u'})\sum_{\substack{\{\ell_1,\dots \ell_k\}\in \partial u \\ \{w_1,\dots w_k\}\in \partial u'}}\sum_{\sigma\in \cS_k}\prod_{a=1}^k m^t_{\ell_a\to u, w_{\sigma(a)}\to u'}.
\end{equation}

We straightaway see that if the graphs $G$ and $H$ are tree-like up to depth $t$, then $m^t_{u,u'}$ is exactly the likelihood ratio $L_t(u,u')$ previously defined.

In experiments, we run our algorithms on correlated \ER model with possible cycles so that $m^t$ is interpreted as an approximation of the true likelihood ratio. From such an approximation, we compute two mappings $\pi_{\mathrm{left}}^t:V\to V'$ as $$\pi_{\mathrm{left}}^t(u) = \argmax(m^t_{u,\cdot})$$ and $\pi_{\mathrm{right}}^t:V'\to V$ as $$\pi_{\mathrm{right}}^t(u') = \argmax(m^t_{\cdot, u'})$$ which are candidates for matching vertices from $G$ to $H$ or from $H$ to $G$.
If $t$ is small, then the approximation $m^t_{u,u'}$ will not be accurate as it does not incorporate sufficient information (only at depth $t$ in both graphs). When $t$ is large, cycles will appear in both graphs so that the recursion is not anymore valid. In order to choose an appropriate number of iterations $t$, we adopt the following simple strategy: we compute all the matrices $m^t_{u,u'}$ for all values of $t$ less than a parameter $d$; then from these matrices, we compute the corresponding mappings $\pi_{\mathrm{left}}^t$ and $\pi_{\mathrm{right}}^t$ as described above; we then compute:
\begin{eqnarray}
\nonumber
e(t) &:=& \text{match-edges}(G,H,\pi_{\mathrm{left}}^t,\pi_{\mathrm{right}}^t)\\
\label{eq:edges} &:=& \frac{1}{|E|}\sum_{\set{u,v}\in E} \one_{(\pi_{\mathrm{left}}^t(u),\pi_{\mathrm{left}}^t(v))\in E'} + \frac{1}{|E'|}\sum_{\set{u',v'}\in E'} \one_{(\pi_{\mathrm{right}}^t(u'),\pi_{\mathrm{right}}^t(v'))\in E} \, .
\end{eqnarray}
Finally, we choose $$t^* =\argmax(e(t)) \, .$$
Note that we are considering sparse \ER graphs which are typically not connected (the diameter is infinite). We know from \cite{ganassali2021impossibility}, that only the giant connected component of $G$ and $H$ can possibly be aligned. Hence as a first pre-processing step, we remove all the small connected components from $G$ and $H$ and keep only the largest one. As a result, our algorithm takes as input two connected graphs of possibly different sizes. The pseudo-code for our algorithm is given below:

\begin{algorithm}[h]
\caption{\label{algo_MPAlign2} \texttt{MPAlign2}}
\SetAlgoLined

\textbf{Input:} Two connected graphs $G=(V,E)$ and $H=(V',E')$, parameter $d$ and parameters of the correlated \ER model $\lambda$ (average degree) and $s$

\For{$t \in \{1,\dots, d\}$}{
	compute $m^t_{u \to v, u'\to v'}$ thanks to \eqref{eq:message}
	
	compute $m^t_{u,u'}$ thanks to \eqref{eq:aggregation}

	compute $\pi_{\mathrm{left}}^t:V\to V'$ as $\pi_{\mathrm{left}}^t(u) = \argmax(m^t_{u,\cdot})$
	
	compute $\pi_{\mathrm{right}}^t:V'\to V$ as $\pi_{\mathrm{right}}^t(u') = \argmax(m^t_{\cdot, u'})$
	
	compute $e(t) = \text{match-edges}(G,H,\pi_{\mathrm{left}}^t,\pi_{\mathrm{right}}^t)$ thanks to \eqref{eq:edges}
	
	}
$t^* = \argmax(e(t))$

\textbf{Return} $\pi_{\mathrm{left}}^{t^*}$, $\pi_{\mathrm{right}}^{t^*}$, $m^{t^*}$

\end{algorithm}

Figure \ref{fig:overlap} shows some empirical results for graphs of size $200$ for values $\lambda = 2;2,5;3$ where the overall overlap is defined by
\begin{equation}\label{eq:def:mean_overlap_tstar}
    \mathrm{overlap} := \frac{1}{2}\left( \ov(\pi_{\mathrm{left}}^{t^*},\pi^*) + \ov(\pi_{\mathrm{right}}^{t^*},(\pi^*)^{-1}) \right) \, ,
\end{equation}
namely the mean of the overlaps given by $\pi_{\mathrm{left}}^{t^*}$ and $\pi_{\mathrm{right}}^{t^*}$. The maximum number of iterations is fixed to $d=15$. For more numerical experiments on this algorithm, see \cite{piccioli2021aligning}.

\begin{figure}[h]

\includegraphics[width=14cm]{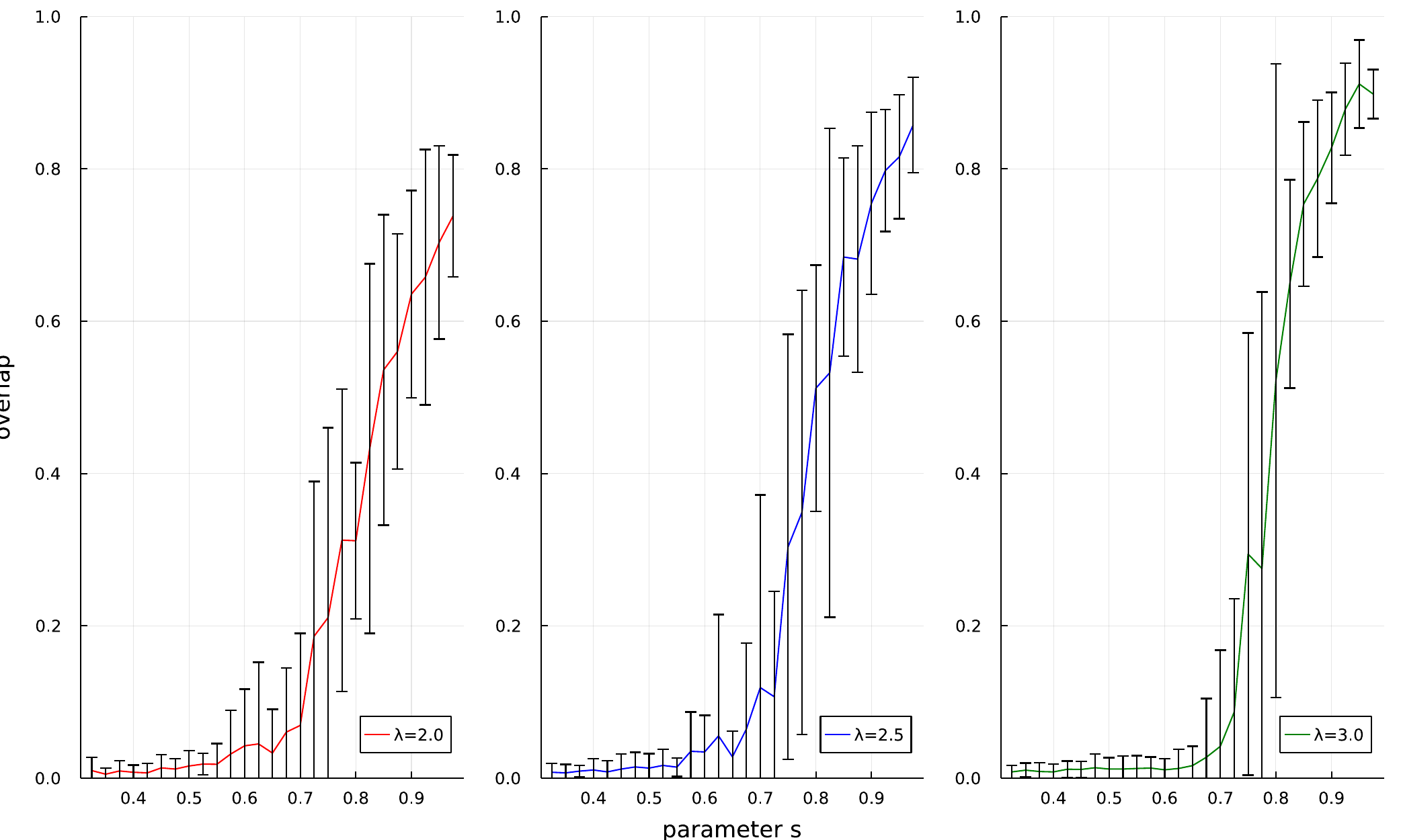}
\centering
\caption{\label{fig:overlap} Overlap as a function of the parameter $s$ for graphs with (initial) size $n=200$ for various values of $\lambda$ (parameter $d=15$). Each point is the average of $10$ simulations. 
}
\end{figure}

 This choice of $d=15$ is validated by the results presented in Figure \ref{fig:iter}. We plot for each simulation the time-dependent $\mathrm{overlap}(t)$ defined by 
 \begin{equation}\label{eq:def:mean_overlap_t}
    \mathrm{overlap}(t) := \frac{1}{2}\left( \ov(\pi_{\mathrm{left}}^{t},\pi^*) + \ov(\pi_{\mathrm{right}}^{t},(\pi^*)^{-1}) \right)
\end{equation} as a function of $t\leq 15$. We see that for low values of $s$ (on the left $s=0.4$), the overlap behaves randomly. In this scenario, increasing the value of $d$ will probably not help as cycles will deteriorate the performance of the algorithm. For high value of $s$ (on the right $s=0.9$), we see that the overlap starts by increasing and then decreases abruptly to zero, this is due to numerical issues: some messages in $m^t$ are too large for our implementation of the algorithm to be able to deal with them. Finally for values of $s$, where signal is detected (in the middle $s=0.675$), we see that when the signal is detected, the overlap start by increasing until reaching a maximum and then decreases before numerical instability. We also note that our choice of $t^*$ thanks to the number of matched edges can be fairly sub-optimal. We believe that a better understanding of the performance of our algorithm for finite $n$ is an interesting open problem. We refer to \cite{piccioli2021aligning} which provides more detailed experimental results on a similar algorithm.

\begin{figure}[h]

\includegraphics[width=14cm]{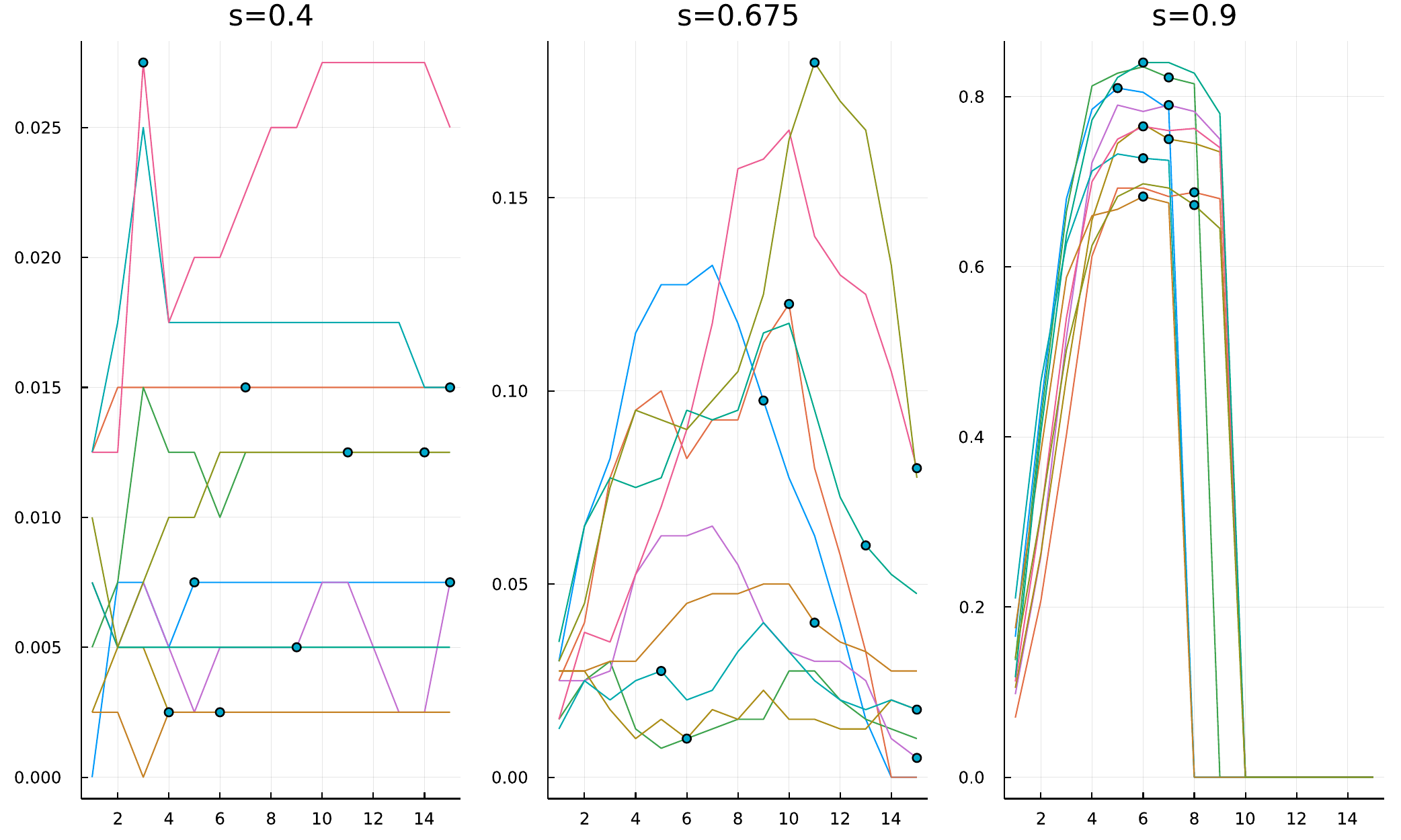}
\centering
\caption{\label{fig:iter} Overlap \eqref{eq:def:mean_overlap_t} as a function the number of iterations $t$ for graphs with (initial) size $n=200$ for $\lambda=2.5$ (parameter $d=15$) and various values of $s$. The dotted point on each curve corresponds to $t^*$. Note that the $y$-axis of each plot have different scale. When overlap reaches zero, our algorithm hits infinity.
}
\end{figure}

\section{Additional proofs}\label{appendix:additional_proofs}

\subsection{Proof of Proposition \ref{proposition:auto_GW}}
\label{appendix:proof:proposition:auto_GW}
\begin{proof}
Throughout the proof, let $X_\mu$ denote a Poisson random variable with parameter $\mu$. 
A node $u\in\cL_{d-2}(\tau^*)$ has, independently for each $k\in \dN$, a number $N_k \sim \Poi(r\pi_r(k))$ children who themselves have $k$ children. To each such node, we can associate 
\begin{equation*}
    \prod_{k\in \dN}N_k!
\end{equation*} permutations of its children that will preserve the labeled tree. Likewise, for each node $u\in \cL_{d-1}(\tau^*)$, there are $c_u!$ permutations of its children that don't modify the tree, where $c_u := c_{\tau^*}(u)$. Thus by the strong law of large numbers, we have:
\begin{equation}\label{eq:equivalent_logaut1}
\log \card{\Aut(\tau^*)} \geq (1+o_{\dP}(1))\left[W r^{d-1}\dE\left[\log(X_r!)\right]+ W r^{d-2}\sum_{k\in \dN}\dE\left[\log(X_{r\pi_r(k)}!)\right]\right].
\end{equation}
Recall the classical estimate for large $\mu$:
\begin{equation}\label{eq:moment_factoriel2}
\dE \log(X_\mu!)=\mu\log(\mu)-\mu+\frac{1}{2}\log(2\pi e \mu)+O\left(\frac{1}{\mu}\right),
\end{equation}
and Stirling's formula gives
\begin{equation}\label{eq:stirling}
    \log(k!)= k \log k - k + \frac{1}{2}\log(2 \pi k) + O\left( \frac{1}{k} \right).
\end{equation}
We now give some estimates of the distribution $\pi_r(k)$ in the following Lemma, which proof is deferred to Appendix \ref{appendix:proof_lemma_equivalent_pi}.
\begin{lemma}\label{lemma_equivalent_pi}
Let $\eps(r)$ be such that $\eps(r) \to 0$ and $\eps(r)\log r \to +\infty$ when $r \to +\infty$. Let 
\begin{equation*}
    I_{r,\eps}:=\left[r-(1-\eps(r))\sqrt{r\log r},r+(1-\eps(r))\sqrt{r\log r}\right].
\end{equation*} Then 
\begin{itemize}
    \item[$(i)$] we have
    \begin{equation}\label{eq:lemma_concentration_poisson}
        \dP\left( X_r \notin I_{r,\eps} \right) = O\left(r^{-1/2} e^{\eps(r)\log r}\right).
    \end{equation}
    \item[$(ii)$] for all $k \in I_{r,\eps}$, letting $x_k=\frac{k-r}{\sqrt{r}}$, we have
\begin{equation}\label{eq:lemma_equivalent_pi}
\pi_r(k)=\frac{1}{\sqrt{2\pi r}}e^{-{x_k^2}/{2}}\left[1+\frac{x_k^3}{6\sqrt{r}}-\frac{x_k}{2\sqrt{r}}+O\left(\frac{x_k^6}{r}\right)\right].
\end{equation}
  \item[$(iii)$] Note that \eqref{eq:lemma_equivalent_pi} implies that for each $k\in I_{r,\eps}$, it holds that $r\pi_r(k) = \Omega\left(e^{\eps(r) \log r (1-o(1))}\right) $, thus diverges to $+\infty$.
\end{itemize}
\end{lemma}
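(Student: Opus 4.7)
\textbf{Plan for the proof of Lemma~\ref{lemma_equivalent_pi}.}

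The plan is to treat the three parts essentially independently, with part $(ii)$ being the technical heart from which $(iii)$ falls out immediately. For part $(i)$, I would invoke a standard Chernoff bound for Poisson tails: if $\bX_r \sim \Poi(r)$ and $\phi(t) := (1+t)\log(1+t) - t$, then $\dP(\bX_r \geq r + s) \leq e^{-r\phi(s/r)}$ and similarly on the left tail. Setting $s = (1-\eps(r))\sqrt{r \log r}$ so that $t = s/r = (1-\eps(r))\sqrt{\log r / r}$ is small, I would Taylor-expand $\phi(t) = t^2/2 - t^3/6 + O(t^4)$ and observe
\[
    r\phi(t) \geq (1-\eps(r))^2 \tfrac{\log r}{2} - O\!\left(\tfrac{\log^{3/2} r}{\sqrt{r}}\right) = \tfrac{\log r}{2} - \eps(r) \log r + O(1).
\]
Exponentiating gives an upper bound of order $r^{-1/2} e^{\eps(r) \log r}$, and the same argument applied to $\phi(-t)$ handles the lower tail; summing the two yields $(i)$.

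For part $(ii)$, I would start from $\log \pi_r(k) = -r + k \log r - \log k!$ and apply Stirling's formula $\log k! = k \log k - k + \tfrac{1}{2}\log(2\pi k) + O(1/k)$ to rewrite
\[
    \log \pi_r(k) = -\tfrac{1}{2}\log(2\pi k) + (k - r) - k \log(k/r) + O(1/k).
\]
Writing $k = r + x\sqrt{r}$ and setting $u = x/\sqrt{r}$, the main work is expanding $(r + x\sqrt{r})\log(1+u)$ using $\log(1+u) = u - u^2/2 + u^3/3 - u^4/4 + \cdots$. Collecting powers of $x/\sqrt{r}$ carefully shows
\[
    (k-r) - k\log(k/r) = -\tfrac{x^2}{2} + \tfrac{x^3}{6\sqrt{r}} + O(x^4/r),
\]
and separately $-\tfrac{1}{2}\log(2\pi k) = -\tfrac{1}{2}\log(2\pi r) - \tfrac{x}{2\sqrt{r}} + O(x^2/r)$. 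Exponentiating, using $e^z = 1 + z + z^2/2 + O(z^3)$ with $z = x^3/(6\sqrt{r}) - x/(2\sqrt{r})$, and noting that $z^2/2 = O(x^6/r)$ absorbs all the $O(x^4/r)$ and $O(x^2/r)$ remainder terms in the regime where $|x| \leq (1-\eps(r))\sqrt{\log r}$ (which makes $x^6/r = O((\log r)^3/r)$ the dominant correction once $|x| \geq 1$), delivers the claimed expansion. For $|x|$ of order $1$ the weaker bound $O(1/r)$ is enough and subsumed into the stated error.

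Part $(iii)$ is a one-line consequence: for $k \in I_{r,\eps}$ one has $x_k^2 \leq (1-\eps(r))^2 \log r$, so from $(ii)$,
\[
    r\pi_r(k) \geq (1-o(1)) \tfrac{\sqrt{r}}{\sqrt{2\pi}} \exp\!\left(-\tfrac{(1-\eps(r))^2 \log r}{2}\right) = \Omega\!\left(r^{\eps(r) - \eps(r)^2/2}\right) = \Omega\!\left(e^{\eps(r)\log r (1-o(1))}\right),
\]
which diverges by the assumption $\eps(r)\log r \to \infty$.

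The main obstacle is the bookkeeping in part $(ii)$: one must carry the Taylor expansion of $\log(1+u)$ to order $u^4$ (or beyond) and track precisely which remainder terms exponentiate into the multiplicative $O(x^6/r)$. The calculation is mechanical but error-prone, and the cleanest route is probably to verify that the coefficients of $x^2$ cancel exactly and to lump everything of order $x^4/r$ or smaller into the exponentiated error, relying on $x^2/r = O(x^6/r)$ in the relevant range of $|x|$.
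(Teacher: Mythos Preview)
Your proposal is correct and follows essentially the same route as the paper: Stirling plus Taylor expansion for $(ii)$, with $(iii)$ read off as an immediate corollary, and a standard Poisson tail bound for $(i)$. The only cosmetic difference is in $(i)$: the paper quotes the sub-Gaussian-type inequality $\dP(|\bX_r-r|\geq x)\leq 2\exp\bigl(-x^2/(2(r+x))\bigr)$ directly, whereas you go through the Cram\'er function $\phi(t)=(1+t)\log(1+t)-t$; both yield the same exponent $\tfrac{1}{2}\log r - \eps(r)\log r + O(1)$ after expansion.
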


Consider the function $\eps(r) := \frac{\log\log r}{4 \log r}$, which satisfies the assumptions of Lemma \ref{lemma_equivalent_pi}. Using expansion \eqref{eq:lemma_equivalent_pi} together with \eqref{eq:moment_factoriel2} gives:
\begin{flalign}\label{eq:logiso_2}
\sum_{k \in I_{r,\eps}} \dE & \left[\log(X_{r\pi_r(k)}!)\right] = \sum_{k\in I_{r,\eps}} r\pi_r(k)\log(r \pi_r(k))-r\pi_r(k)+\frac{1}{2}\log(2\pi e r\pi_r(k))+O\left(\frac{1}{r\pi_r(k)}\right) \nonumber\\
& =\sum_{k\in I_{r,\eps}}r\pi_r(k)\left[\frac{1}{2}\log(r)-\frac{1}{2}\log(2\pi)-\frac{x_k^2}{2}+\frac{x_k^3}{6\sqrt{r}}-\frac{x_k}{2\sqrt{r}}+O\left(\frac{\log^2 r}{r}\right)-1\right] \nonumber\\
 & \quad \quad +\sum_{k\in I_{r,\eps}}\frac{1}{2} \left[\log( 2\pi e)+\frac{1}{2}\log(r)-\frac{1}{2}\log(2\pi)-\frac{x_k^2}{2}+O\left(\frac{\log^{3/2}(r)}{\sqrt{r}}\right)\right]+ O\left(\sqrt{r \log r} \right) \nonumber\\
 & \overset{(a)}{=} \frac{1}{2}r\log(r)-\left(\frac{1}{2}\log(2\pi)+\frac{1}{2}+1\right)r +O(\sqrt{r}\log^{5/4} r ) \nonumber\\
 & \quad \quad + O(\sqrt{r\log r}) + \frac{1}{2}\left( 1-\eps(r)\right)\sqrt{r}\log^{3/2}(r)-\frac{1}{4}\sum_{k\in I_{r,\eps}} x_k^2 \nonumber\\
 & \overset{(b)}{=} \frac{1}{2}r\log(r)-\left(\frac{1}{2}\log(2\pi)+\frac{3}{2}\right)r + \frac{1}{3} \sqrt{r}\log^{3/2}(r) +O(\sqrt{r}\log^{5/4} r).
\end{flalign}
Let us give hereafter all the required details for the above computation.
\begin{itemize}
    \item At step $(a)$, we first used point $(i)$ of Lemma \ref{lemma_equivalent_pi}, which gives that $$r \log r \times \dP\left( X_r \notin I_{r,\eps} \right) = O\left(\sqrt{r} \log^{1/4} r\right) = O\left(\sqrt{r} \log^{5/4} r\right).$$ For the sum of the $x_k^2$, we remark that
    \begin{flalign*}
        \sum_{k\in I_{r,\eps}} r \pi_r(k) \frac{x_k^2}{2} &= \frac{r}{2} \left(1- \dE\left[\left(\frac{X_r-r}{\sqrt{r}}\right)^2 \one_{X_r \notin I_{r,\eps}}\right]\right),
    \end{flalign*} and that the expectation in the right-hand term can be written as follows
    \begin{flalign*}
    \dE&  \left[\left(\frac{X_r-r}{\sqrt{r}}\right)^2 \one_{\left|\frac{X_r-r}{\sqrt{r}}\right| \geq 2 \sqrt{\log r}}\right] + \dE\left[\left(\frac{X_r-r}{\sqrt{r}}\right)^2 \one_{ (1-\eps(r)) \sqrt{\log r} \leq \left|\frac{X_r-r}{\sqrt{r}}\right| \leq 2 \sqrt{\log r}}\right]\\
    & \leq \dE\left[\left(\frac{X_r-r}{\sqrt{r}}\right)^4 \right]^{1/2} \dP\left( \left|\frac{X_r-r}{\sqrt{r}}\right| \geq 2 \sqrt{\log r} \right)^{1/2} + 4\log r \times  \dP\left( X_r \notin I_{r,\eps} \right)\\
    & \leq O\left( r^{-1/2} \right) + O\left( r^{-1/2}  \log^{5/4} r \right).
    \end{flalign*} Hence, $\sum_{k\in I_{r,\eps}} r \pi_r(k) \frac{x_k^2}{2} = \frac{r}{2} - O\left(\sqrt{r} \log^{5/4} r \right)$. Finally, using the fact that $\dE\left[\left(\frac{X_r-r}{\sqrt{r}}\right)^3 \right]$ and $\dE\left[\frac{X_r-r}{\sqrt{r}} \right]$ are $O(1)$, the sums of the $x_k^3$ and $x_k$ easily incorporate into the $O\left(\sqrt{r} \log^{5/4} r \right)$ term.
    
    \item At step $(b)$, we first used the fact that $\eps(r)\sqrt{r}\log^{3/2} = O\left(\sqrt{r} \log^{5/4} r \right)$. The only term needing more computations is
    \begin{flalign*}
        \sum_{k\in I_{r,\eps}} x_k^2 & = \sum_{k\in I_{r,\eps}} \left(\frac{k-r}{\sqrt{r}}\right)^2 = 2 \cdot  \sum_{\ell = 0}^{(1-\eps(r))\sqrt{r\log r}} \frac{\ell^2}{r} = \frac{2}{3} \sqrt{r} \log^{3/2} r + O\left(\sqrt{r} \log^{5/4} r \right).
    \end{flalign*}
\end{itemize}

Copying \eqref{eq:logiso_2} together with \eqref{eq:moment_factoriel2} in \eqref{eq:equivalent_logaut1} yields:
\begin{flalign*}
\log(\card{\Aut(\tau^*)}) & \geq (1+o_{\dP}(1)) W r^{d-1} \left[r\log(r)-r+\frac{1}{2}\log(2\pi e r)+ O\left( \frac{1}{r}\right) \right] \\
& \quad \quad + (1+o_{\dP}(1)) W r^{d-1} \left[\frac{1}{2}\log(r)-\frac{1}{2}\log(2\pi)-\frac{3}{2}+\frac{\log^{3/2} r}{3\sqrt{r}}+O\left(\frac{\log^{5/4} r}{\sqrt{r}}\right)\right]\\
& = (1+o_{\dP}(1)) W r^{d-1}\left[r\log(r)-r+\log(r)-1+\frac{\log^{3/2}(r)}{3\sqrt{r}}+O\left(\frac{\log^{5/4} r}{\sqrt{r}}\right)\right].
\end{flalign*}
Another appeal to the strong law of large numbers entails that 

\begin{flalign*}
    \log\left(\prod_{u \in \cV_{d-1}(\tau^*)}e^{-r} r^{c_u} \right) & = (1+o_{\dP}(1))\left|\cV_{d-1}(\tau^*) \right| \dE\left[-r + c_{\rho(\tau^*)} \log r \right]\\
    & = (1+o_{\dP}(1))\frac{Wr^{d}}{r-1}\left( -r+r\log(r)\right).
\end{flalign*}

Combined, these last two evaluations yield a lower bound of $\log\left(\frac{\card{\Aut(\tau^*)}}{\prod_{u \in \cV_{d-1}(\tau^*)}e^{-r} r^{c_{u}}}\right) $ under the event on which $\tau^*$ survives, of the form
\begin{flalign*}
& (1-o_{\dP}(1))\frac{Wr^{d}}{r-1} \left[-r\log(r)+r+\left(1-\frac{1}{r}\right)\left(r\log(r)-r+\log(r)-1+\frac{\log^{3/2}(r)}{3\sqrt{r}}+O\left(\frac{\log^{5/4} r}{\sqrt{r}}\right)\right)\right]\\
& = (1-o_{\dP}(1)) \frac{Wr^{d}}{r-1}\left[ \frac{\log^{3/2}(r)}{3\sqrt{r}}+O\left(\frac{\log^{5/4} r}{\sqrt{r}}\right) \right].
\end{flalign*}
\end{proof} 

\subsection{Proof of Lemma \ref{lemma_equivalent_pi}}
\label{appendix:proof_lemma_equivalent_pi}
\begin{proof}
\emph{$(i)$} The result follows directly from the classical Poisson concentration inequality 
\begin{equation*}
    \dP\left( \left|X_r - r \right| \geq x \right) \leq 2 \exp\left( - \frac{x^2}{2(r+x)}\right),
\end{equation*} noting that for $x=(1-\eps(r))\sqrt{r\log r}$,
$\frac{x^2}{2(r+x)} \geq \frac{1}{2} \log r - \eps \log r - o(1).$

\emph{$(ii)$}
When $k$ runs over $I_{r,\eps}$, $x_k$ runs over $\left[ -(1-\eps(r)) \sqrt{\log r}, (1-\eps(r)) \sqrt{\log r}\right]$. Using Stirling's formula \eqref{eq:stirling}, we get
\begin{flalign*}
\log\pi_r(k) & = \log\pi_r(r+x_k \sqrt{r}) = -r + k \log r - \log(k!) \\
& = -r + (r+x_k \sqrt{r}) \log r - (r+x_k \sqrt{r}) \log(r+x_k \sqrt{r}) + r+x_k \sqrt{r} - \frac{1}{2}\log(2 \pi (r+x_k \sqrt{r})) + O\left( \frac{1}{r} \right)\\
& = -r + r\log r + x_k \sqrt{r} \log r - (r+x_k \sqrt{r}) \left[ \log r + \frac{x_k}{r^{1/2}} - \frac{x_k^2}{2r} + \frac{x_k^3}{3r^{3/2}} + O\left( \frac{x_k^4}{ r^{2}}\right)  \right] \\
& \quad \quad \quad \quad + r + x_k \sqrt{r} - \frac{1}{2}\log(2 \pi) - \frac{1}{2}\log(r) - \frac{1}{2} \frac{x_k}{r^{1/2}} + O\left( \frac{x_k^2}{ r}\right)\\
& = -r - x_k \sqrt{r} -x_k^2 + \frac{x_k^2}{2} + \frac{x_k^3}{2 \sqrt{r}} - \frac{x_k^3}{3 \sqrt{r}} + O\left( \frac{x_k^4}{ r}\right) \\
& \quad \quad \quad \quad + r + x_k \sqrt{r} - \frac{1}{2}\log(2 \pi r) - \frac{1}{2} \frac{x_k}{r^{1/2}} + O\left( \frac{x_k^2}{ r}\right)\\
& = - \frac{x_k^2}{2} - \frac{1}{2}\log(2 \pi r) + \frac{x_k^3}{6 \sqrt{r}} - \frac{x_k}{2 \sqrt{r}} + O\left( \frac{x_k^4}{ r}\right).
\end{flalign*} Taking the exponential gives
\begin{flalign*}
\pi_r(k) & = \frac{1}{\sqrt{2\pi r}} e^{-{x_k^2}/{2}} \left[ 1+\frac{x_k^3}{6\sqrt{r}}-\frac{x_k}{2\sqrt{r}}+O\left(\frac{x_k^6}{r}\right)\right].
\end{flalign*}

$(iii)$ follows directly from $(ii)$.
\end{proof}

\subsection{Proof of Lemma \ref{lemma:control_M2}}\label{appendix:proof_lemma:control_M2}
\begin{proof}
We condition on $T$ be the number of recursive steps in the previous construction, which is $O((\log n)n^{1/4})$ under $\cA$. For each $s \in [T]$, we denote by $c_s$ the number of newly sampled children, and $v_s := \sum_{s'=0}^{s-1} c_{s'}$ the number of uncovered vertices before step $s$ (we set $v_0 : =0$). With these notations, it is easily seen than $N_{d-1}$ can be factorized as follows:
\begin{flalign*}
N_{d-1} & = \prod_{s \in [T]} \frac{\dP\left(\Bin\left(n-2-v_s , \lambda/n \right) = c_s \right)}{\p_{\lambda}(c_s)}  \leq \prod_{s \in [T]}\exp\left(  \frac{\lambda}{n}(v_s+2+c_s)\right) \\
& = \exp\left(\frac{2\lambda T}{n} + \frac{\lambda}{n} \sum_{s \in [T]} (T-s)c_s \right).
\end{flalign*}
Under $\dPl_{d-1}$, the variables $c_s$ are independent $\Poi(\lambda)$ variables, hence
\begin{flalign*}
 \dEl_{d-1}\left[ N_{d-1}^2 \one_{\cA} \right] & \leq \exp\left(\frac{4\lambda T}{n} + \lambda \sum_{s \in [T]} \left( e^{{2 \lambda}(T-s)/{n}}-1\right) \right) \one_{T = O((\log n)n^{1/4})}\\
 & \leq \exp\left( C' {T^2}/{n} + o({T^2}/{n})\right)\one_{T = O((\log n)n^{1/4})} = O(1).
\end{flalign*}
\end{proof}

\end{document}